\newif\ifsubmission\submissionfalse
\newtheorem{fact}          {Fact}{\itshape}{\bfseries}%              [section]
\theoremstyle{plain}
\newtheorem{theorem}       {Theorem}           [section]
\newtheorem{inf-theorem}   {Informal Theorem}  [section]
\newtheorem{claim}         {Claim}             [section] 
\newtheorem{fact}          {Fact}              [section]
\theoremstyle{definition}
\newtheorem{definition}         {Definition}     [section]
\theoremstyle{remark}
\newenvironment{proofof}[1]{\begin{proof}[\textit{Proof of #1}]}{\end{proof}}
\newenvironment{proofsketchof}[1]{\begin{proof}[\textit{Proof Sketch of #1}]}{\end{proof}}
\crefname{claim}{Claim}{Claims}
\crefname{fact}{Fact}{Facts}
\def\ShowAuthNotes{0}
\newcommand{\snote}[1]{[{\footnotesize  \textcolor{red}{\bf Sanjam:} {#1}}]}
\newcommand{\pnote}[1]{[{\footnotesize  \textcolor{cyan}{\bf Prashant:} {#1}}]}
\newcommand{\sanjam}[1]{\textcolor{red}{Sanjam: #1}}
\newcommand{\shafi}[1]{\textcolor{green}{Shafi: #1}}
\newcommand{\snote}[1]{}
\newcommand{\pnote}[1]{}
\newcommand{\sanjam}[1]{}
\newcommand{\shafi}[1]{}
\newcommand{\remove}[1]{}
\newcommand{\Nat}           {\mathbb{N}}
\DeclareMathOperator{\supp}{Supp}
\DeclareMathOperator*{\Exp}{E}
\DeclareMathOperator*{\sd}{\Delta}
\let\poly\relax
\DeclareMathOperator{\poly}{poly}
\newcommand{\set}[1]        {\left\{ #1 \right\}}
\newcommand{\abs}[1]        {\left| #1\right|}
\newcommand{\paren}[1]      {\left( #1 \right)}
\newcommand{\pr}[1]         {\Pr\left[ #1 \right]}
\newcommand{\expec}[2]      {\Exp_{#1}\left[ #2 \right]}
\newcommand{\bset}          {\set{0,1}}
\newcommand{\algfont}[1]    {\mathsf{#1}}
\renewcommand{\sim}         {\algfont{Sim}}
\def\ddefloop#1{\ifx\ddefloop#1\else\ddef{#1}\expandafter\ddefloop\fi}
\def\ddef#1{\expandafter\def\csname bb#1\endcsname{\ensuremath{\mathbb{#1}}}}
\def\ddef#1{\expandafter\def\csname sf#1\endcsname{\ensuremath{\mathsf{#1}}}}
\def\ddef#1{\expandafter\def\csname c#1\endcsname{\ensuremath{\mathcal{#1}}}}
\def\ddef#1{\expandafter\def\csname v#1\endcsname{\ensuremath{\boldsymbol{#1}}}}
\def\ddef#1{\expandafter\def\csname alg#1\endcsname{\ensuremath{\algfont{#1}}}}
\def\ddef#1{\expandafter\def\csname v#1\endcsname{\ensuremath{\boldsymbol{\csname #1\endcsname}}}}
\newcommand{\ra}            {\rightarrow}
\newcommand{\eps}           {\varepsilon}
\newcommand{\secp}          {\lambda}
\newcommand{\idfont}[1]     {\mathnormal{#1}}
\newcommand{\sid}           {\idfont{sId}}
\newcommand{\pid}           {\idfont{pId}}
\newcommand{\sdist}[2]         {\sd\paren{#1 ; #2}}
\newcommand{\state}         {\mathnormal{state}}
\newcommand{\view}          {\mathnormal{view}}
\newcommand{\alg}[1]        {\mathsf{#1}}
\newcommand{\ins}[1]        {\mathsf{#1}}
\renewcommand{\inst}          {\ins{instruction}}
\newcommand{\inst}          {\ins{instruction}}
\algnewcommand{\OR}{\textbf{or}}
\algnewcommand{\AND}{\textbf{and}}
\newcommand{\entfont}[1]    {\mathcal{#1}}
\newcommand{\X}             {\entfont{X}}
\newcommand{\Z}             {\entfont{Z}}
\newcommand{\Y}             {\entfont{Y}}
\newcommand{\diffp}         {\entfont{D}}
\newcommand{\ml}            {\entfont{M}}
\renewcommand{\W}           {\entfont{W}}
\newcommand{\histind}       {\entfont{H}}
\newcommand{\exec}{\textsc{EXEC}}
\newcommand{\cexec}{\textsc{AEXEC}}
\newcommand{\er}{\textsc{EXEC}^{\X,\pi,\pi_D}_{\Z,\Y}}
\newcommand{\ei}{\textsc{EXEC}^{\X,\pi,\pi_D}_{\Z,\Y_0}}
\newcommand{\erphi}{\textsc{EXEC}^{\X,\pi,\pi_D}_{\Z,\Y,\Phi}}
\newcommand{\eiphi}{\textsc{EXEC}^{\X,\pi,\pi_D}_{\Z,\Y_0,\Phi}}
\newcommand{\erpi}{\textsc{EXEC}^{\X,\pi,\pi_D}_{\Z,\Y,(\pi,\pi_D)}}
\newcommand{\eipi}{\textsc{EXEC}^{\X,\pi,\pi_D}_{\Z,\Y_0,(\pi,\pi_D)}}
\newcommand{\dict}          {\alg{Dict}}
\newenvironment{datacol}{\vspace{0.5em}\noindent\dotfill\vspace{0.5em}\newline\small}{\vspace{-0.5em}\mbox{}\noindent\dotfill\vspace{0.5em}}
\title{Formalizing Data Deletion in the Context of the Right to be Forgotten}
\author{}
\institute{}
\author{Sanjam Garg\thanks{EECS, UC Berkeley. Email: \texttt{\{sanjamg,prashvas\}@berkeley.edu}. Supported in part from AFOSR Award FA9550-19-1-0200, AFOSR YIP Award, NSF CNS Award 1936826, DARPA and SPAWAR under contract N66001-15-C-4065, a Hellman Award and research grants by the Okawa Foundation, Visa Inc., and Center for Long-Term Cybersecurity (CLTC, UC Berkeley). The views expressed are those of the authors and do not reflect the official policy or position of the funding agencies.} \and Shafi Goldwasser\thanks{Simons Institute for the Theory of Computing, UC Berkeley. Email: \texttt{shafi@theory.csail.mit.edu}. Supported in part by the C. Lester Hogan Chair in EECS, UC Berkeley, and Fintech@CSAIL.} \\ \and Prashant Nalini Vasudevan\footnotemark[1]}
\date{}
\begin{document}
\maketitle

\vspace{-2em}

\begin{abstract}
    The right of an individual to request the deletion of their personal data by an entity that might be storing it -- referred to as \emph{the right to be forgotten} --  has been explicitly recognized, legislated, and exercised in several jurisdictions across the world, including the European Union, Argentina, and California. However, much of the discussion surrounding this right offers only an intuitive notion of what it means for it to be fulfilled -- of what it means for such personal data to be deleted.
    
    In this work, we provide a formal definitional framework for the right to be forgotten using tools and paradigms from cryptography. In particular, we provide a precise definition of what could be (or should be) expected from an entity that collects individuals' data when a request is made of it to delete some of this data. Our framework captures several, though not all, relevant aspects of typical systems involved in data processing. While it cannot be viewed as expressing the statements of current laws (especially since these are rather vague in this respect), our work offers technically precise definitions that represent possibilities for what the law could reasonably expect, and alternatives for what future versions of the law could explicitly require.
    
    Finally, with the goal of demonstrating the applicability of our framework and definitions, we consider various natural and simple scenarios where the right to be forgotten comes up. For each of these scenarios, we highlight the pitfalls that arise even in genuine attempts at implementing systems offering deletion guarantees, and also describe technological solutions that provably satisfy our definitions. These solutions bring together techniques built by various communities.
\end{abstract}

    % Desire of individuals to live their life ``in an autonomous way, without being perpetually or periodically stigmatized as a consequence of a specific action performed in the past''~\cite{} has inspired the need to empower individuals with a right to request deletion of personal information.
% One consequence of this ad hoc treatment is that each setting and its implementation must be evaluated individually leaving room for pitfalls. 
%their generality offers applicability to virtually any setting while being implementation independent.
% Our definition captures the intuitive notion of deletion under the right to be forgotten,

%%% Local Variables:
%%% mode: latex
%%% TeX-master: "main"
%%% End:

\section{Introduction}
\label{sec:intro}

Everything we do in our lives leaves (or will soon leave) a digital trace, which can be analyzed. Recent advances in capturing and analyzing big data help us improve traffic congestion, accurately predict human behavior and needs in various situations, and much more. However, this mass collection of data can be used against people as well. Simple examples of this would be to charge individuals higher auto insurance premiums or decline mortgages and jobs based on an individual's profile as presented by the collected data. In the worst case, this wealth of information could be used by totalitarian governments to persecute their citizens years after the data was collected. In such ways, vast collection of personal data has the potential to present a serious infringement to personal liberty. Individuals could perpetually or periodically face stigmatization as a consequence of a specific past action, even one that has already been adequately penalized. This, in turn, threatens democracy as a whole, as it can force individuals to self-censor personal opinions and actions for fear of later retaliation.

% The desire to hide past information should not be seen as an attempt by individuals to keep secret their current and past nefarious acts. Instead, law-abiding individuals might fear that their actions could be misinterpreted or that they may not fit the evolving societal norms. In fact, even for individuals guilty of objectionable past actions, it is desirable that they subsequently have a recourse to a respectable life. Past crimes that have been ``adequately punished'' should not unnecessarily influence people's futures. Finally, victims of crimes might just want to simply move on. %\pnote{From our notes from long ago: }

One alternative for individuals wanting to keep personal information secret is to simply stay offline, or at least keep such information hidden from entities that are likely to collect it. Yet, this is not always desirable or possible. These individuals might want to share such information with others over an internet-based platform, or obtain a service based on their personal information, such as personalized movie recommendations based on previous movie watching history, or simply driving directions to their destination based on where they want to go. In such cases, it is reasonable to expect that an individual might later change their mind about having this data available to the service provider they sent it to. In order to provide useful functionality while keeping in mind the aforementioned perils of perennial persistence of data, an individual's ability to withdraw previously shared personal information is very important. For example, one might want to request deletion of all personal data contained in one's Facebook account. 

However, in many cases, an individual's desire to request deletion of their private data may be in conflict with a data collector's\footnote{Throughout this paper, we refer to any entity collecting individuals' data as a ``data collector'', and often refer such indivisuals whose data is collected as ``users''.} interests. In particular, the data collector may want to preserve the data because of financial incentives or simply because fulfilling these requests is expensive. It would seem that, in most cases, the data collector has nothing to gain from fulfilling such requests.

Thus, it seems imperative to have in place legal or regulatory means to grant individuals control over what information about them is possessed by different entities, how it is used, and, in particular, provide individuals the rights to request deletion of any (or all) of their personal data. And indeed, the legitimacy of this desire to request deletion of personal data is being increasingly widely discussed, codified in law, and put into practice (in various forms) in, for instance, the European Union (EU)~\cite{GDPR}, Argentina~\cite{carter2013argentina}, and California~\cite{CCPA}. The following are illustrative examples:

\begin{itemize}
    \item The General Data Protection Regulation (GDPR)~\cite{GDPR}, adopted in 2016, is a regulation in the EU aimed at protecting the data and privacy of individuals in the EU. Article 6 of the GDPR lists conditions under which an entity may lawfully process personal data. The first of these conditions is when ``the data subject has given consent to the processing of his or her personal data for one or more specific purposes''. And Article 7 states that, ``The data subject shall have the right to withdraw his or her consent at any time''. Further, Article 17 states that, ``The data subject shall have the right to obtain from the controller the erasure of personal data concerning him or her without undue delay and the controller shall have the obligation to erase personal data without undue delay'' under certain conditions listed there.

    \item The California Consumer Privacy Act (CCPA), passed in 2018, is a law with similar purposes protecting residents of California. Section 1798.105 of the CCPA states, ``A consumer shall have the right to request that a business delete any personal information about the consumer which the business has collected from the consumer'', and that ``A business that receives a verifiable request from a consumer \dots\ shall delete the consumer's personal information from its records.''
\end{itemize}

Thus, if a data collector (that operates within the jurisdictions of these laws) wishes to process its consumers' data based on their consent, and wishes to do so lawfully, it would also need to have in place a mechanism to stop using any of its consumers' data. Only then can it guarantee the consumers' \emph{right to be forgotten} as the above laws require. However, it is not straightforward to nail down precisely what this means and involves.

\paragraph{Defining Deletion: More that Meets the Eye.} Our understanding of what it means to forget a user's data or honor a user deletion request is rather rudimentary, and consequently, the law does not precisely define what it means to delete something. Further, this lack of understanding is reflected in certain inconsistencies between the law and what would naturally seem desirable. For example, Article 7 of the GDPR, while describing the right of the data subject to withdraw consent for processing of personal data, also states, ``the withdrawal of consent shall not affect the lawfulness of processing based on consent before its withdrawal.'' This seems to suggest that it is reasonable to preserve the result of processing performed on user data even if the data itself is requested to be deleted. However, processed versions of user data may encode all or most of the original data, perhaps even inadvertently. For instance, it is known that certain machine learning models end up memorizing the data they were trained on~\cite{SRS17,VBE18}.

Thus, capturing the intuitive notion of what it means to truly delete something turns out be quite tricky. In our quest to do so, we ask the following question:

%First, even though an intuitive notion of what it means for   our understanding of the requirements placed on a data collector trying to simply follow the law are quite unclear. Secondly, 
%However, even with some of the strictly stated regulations, the requirements placed on a data collector in face of a  deletion request by a specific user are quite unclear. For example,  In other words, we ask:
\begin{center}
    %\emph{How does a law-abiding data-collector know if it is in compliance of the right to be forgotten?}
    \emph{How does an honest data collector know whether it is in compliance with the right to be forgotten?}
\end{center}

Here, by \emph{honest} we mean a data collector that does in fact intend to guarantee its users' right to be forgotten in the intuitive sense -- it wishes to truly forget all personal data it has about them. Our question is about how it can tell whether the algorithms and mechanisms it has in place to handle deletion requests are in fact working correctly.

% \pnote{Perhaps we could mention the Spanish case against Google where that came up? Somethig else related and interesting but not really part of our paper -- the Streisand effect.}\sanjam{Prashant, what Spanish case are you talking about. I am forgetting. }

\paragraph{Honest data-collectors.} In this work, we focus on the simple case where the data-collector is assumed to be honest. %law-abiding.\footnote{Here, law-abiding does not not refer to a specific set of laws, e.g the GDPR. Instead we are interested in coming up with definitions that if and when incorporated into law will be relevant for data-collectors that follow the proposed definitions. We call such data-collectors as law-abiding.} \sanjam{Added the footnote. This should not be as confusing. For example, we talk about semi-honest parties with respect to the protocol specification.}
In other words, we are only interested in the data-collectors that aim to faithfully honor all legitimate deletion requests. Thus, we have no adversaries in our setting. This deviates from many cryptographic applications where an adversary typically attempts to deviate from honest execution. Note that even in the case of semi-honest advesaries in multiparty computation, the adversary attempts to learn more than what it is supposed to learn while following protocol specification. In our case, we expect the data-collector to itself follow the prescribed procedures, including deleting any stored information that it is directed to delete.

With the above view, we do not attempt to develop methods by which a data collector could prove to a user that it did indeed delete the user's data. As a remark, we note here that this is in fact impossible in general, as a malicious data collector could always make additional secret copies of user data.\footnote{Certifying deletion could be possible in specific settings though, such as under assumptions on the amount of storage available to the data collector~\cite{PT10,DKW11,KK14}, or in the presence of quantum computers and data~\cite{CW19,BI19}.}
Finally, we note that even for this case of law-abiding data-collectors, the problem of defining what it means to delete data correctly is relevant. The goal of our definitions is to provide such data-collectors guidance in designing systems that handle data deletion, and a mechanism to check that any existing systems are designed correctly and are following the law (or some reasonable interpretation of it).

%our understanding of what it technically means for a data collector  to delete some data is unclear. For example, is it okay to delete the data while still maintain the machine learning models that were learnt based on that that data? 

\paragraph{When is it okay to delete?} Another challenge a data-collector faces in handling deletion requests is in establishing whether a particular deletion request should be honored. Indeed, in some cases a data collector may be legally required to preserve certain information to satisfy legal or archival needs, e.g. a data collector may be required to preserve some payment information that is evidence in a case in trial. This raises the very interesting question of how to determine whether a particular deletion request should indeed be honored, or even what factors should be taken into consideration while making this decision. %There is significant case law aimed investigating such scenarios.\pnote{Should look for citations.} \sanjam{prashant will handle}
However, this is not the focus of this work. Instead, we are only interested in cases where the data-collector does intend (or has already decided) to honor a received deletion request, after having somehow found it legitimate. In such cases, we aim to specify the requirements this places on the data-collector.

\paragraph{Our Contributions.}
In this work, we provide the first precise general notions of what is required of an honest data-collector trying to faithfully honor deletion requests. We say that a data-collector is \emph{deletion-compliant} if it satisfies our requirements. Our notions are intended to capture the intuitive expectations a user may have when issuing deletion requests. Furthermore, it seems to satisfy the requirements demanded, at least intuitively, by the GDPR and CCPA. However, we note that our definition should not be seen as being equivalent to the relevant parts of these laws -- for one, the laws themselves are somewhat vague about what exactly they require in this respect, and also there are certain aspects of data-processing systems that are not captured by our framework (see \cref{sec:discussion} for a discussion). Instead, our work offers technically precise definitions for data deletion that represent possibilities for interpretations of what the law could reasonably expect, and alternatives for what future versions of the law could explicitly require.

Next, armed with these notions of deletion-compliance, we consider various natural scenarios where the right to be forgotten comes up. For each of these scenarios, we highlight the pitfalls that arise even in genuine attempts at writing laws or honest efforts in implementing systems with these considerations. Our definitions provide guidance towards avoiding these pitfalls by, for one, making them explicit as violations of the definitions. In particular, for each of the considered scenarios, we describe technological solutions that provably satisfy our definitions. These solutions bring together techniques built by various communities.

\subsection{Our Notions}
In this subsection, we explain our notions of deletion-compliance at a high level, building them up incrementally so as to give deeper insights. The formal definitions are in terms of building blocks from the UC framework~\cite{FOCS:Canetti01}, and details are provided in \cref{sec:explanation}.

\paragraph{The starting challenge.} We start with the observation that a deletion request almost always involves much more than the process of just erasing something from memory. In fact, this issue comes up even in the most seemingly benign deletion requests. For example, consider the very simple case where a user requests deletion of one of her files stored with a data-collector. In this setting, even if the server was to erase the file from its memory, it may be the case that not all information about it has been deleted. For example, if the files are stored contiguously in memory, it might be possible to recover the size of the file that was deleted. Furthermore, if the files of a user are kept on contiguous parts of the memory, it might be possible to pin-point the owner of the deleted file as well, or in most cases at least be able to tell that there was a file that was deleted. 

\paragraph{Our approach: leave no trace.} In order to account for the aforementioned issues, we take the \emph{leave-no-trace} approach to deletion in our definitions. In particular, a central idea of our definition is that execution of the deletion request should leave the data collector and the rest of the system in a state that is equivalent (or at least very similar) to one it would have been in if the data that is being deleted was never provided to the data-collector in the first place. 

The requirement of leave-no-trace places several constraints on the data-collector. First, and obviously, the data that is requested to be deleted should no longer persist in the memory of the data-collector after the request is processed. Second, as alluded to earlier, the data-collector must also remove the dependencies that other data could have on the data that is requested for deletion. Or at least, the data-collector should erase the other stored information which depends on this data. We note that we diverge from the GDPR in this sense, as it only requires deletion of data rather than what may have been derived from it via processing. Third, less obvious but clearly necessary demands are placed on the data-collector in terms of what it is allowed to do with the data it collects. In particular, the data-collector cannot reveal any data it collects to any external entity. This is because sharing of user data by the data-collector to external entities precludes it from honoring future deletion requests for the shared data. More specifically, on sharing user data with an external entity, the data-collector loses its the ability to ensure that the data can be deleted from everywhere where it is responsible for the data being present or known. That is, if this data were never shared with the data collector, then it would not have found its way to the external entity, and thus in order for the system to be returned to such a state after a deletion request, the collector should not reveal this data to the entity.

A more concrete consequence of the third requirement above is that the data-collector cannot share or sell user data to third parties. Looking ahead, in some settings this sharing or selling of user data is functionally beneficial and legally permitted as long as the collector takes care to inform the recipients of such data of any deletion requests. For instance, Article 17 of the GDPR says, ``Where the controller has made the personal data public and is obliged \dots\ to erase the personal data, the controller \dots\ shall take reasonable steps, including technical measures, to inform controllers which are processing the personal data that the data subject has requested the erasure by such controllers of any links to, or copy or replication of, those personal data.'' We later see (in \cref{sec:cond-compliance}) how our definition can be modified to handle such cases and extended to cover data collectors that share data with external entities but make reasonable efforts to honor and forward deletion requests.

%\pnote{I don't think CCPA explicitly has the statement ``you are allowed to sell'' anywhere, but restrictions on what you are responsible for to the user if you do sell are present throughout the CCPA. However, I actually cannot find a statement in the CCPA there that is clearly the sentence we have here. The closest I could find was subdivision (c) of Section 1798.105 -- ``A business that receives a verifiable request from a consumer \dots shall delete the consumer’s personal information from its records and direct any service providers to delete the consumer’s personal information from their records''. It is not clear what ``service provider'' means -- the internet says it is some entity that provides a service, for example web hosting, \emph{to} the business.}\sanjam{prashant will handle}

\paragraph{The basic structure of the definition.} In light of the above discussion, the basic form of the definition can be phrased as follows. Consider a user $\Y$ that shares certain data with a data-collector and later requests for the shared data to be deleted. We refer to this execution as a \emph{real world} execution. In addition to this user, the data-collector might interact with other third parties. In this case, we are interested in the memory state of the data-collector post-deletion and the communication between the data-collector and the third parties. Next, we define the \emph{ideal world} execution, which is same as the real world execution except that the  user $\Y$  does not share anything with the data-collector and does not issue any deletion requests. Here again we are interested in the memory state of the data-collector and the communication between the data-collector and the third parties. More specifically, we require that the joint distribution of memory state of the data-collector and the communication between the data-collector and the third parties in the two worlds is identically distributed (or is at least very close). Further, this property needs to hold not just for a specific user, but hold for every user that \emph{might} interact with the data-collector as part of its routine operation where it is interacting with any number of other users and processing their data and deletion requests as well. Note that the data-collector does not a priori know when and for what data it will receive deletion requests.

\paragraph{A more formal notion.} Hereon, we refer to the data-collector as $\X$, and the deletion requester as $\Y$. In addition to these two entities, we model all other parties in the system using $\Z$, which we also refer to as the environment. Thus, in the real execution, the data-collector $\X$ interacts arbitrarily with the environment $\Z$. Furthermore, in addition to interactions with $\Z$, $\X$ at some point receives some data from $\Y$ which $\Y$ at a later point also requests to be deleted. In contrast, in the ideal execution, $\Y$ is replaced by a silent $\Y_0$ that does not communicate with $\X$ at all. In both of these executions, the environment $\Z$ represent both the rest of the users in the system under consideration, as well as an adversarial entity that possibly instructs $\Y$ on what to do and when. Finally, our definition requires that the state of $\X$ and the view of $\Z$ in the real execution and the ideal execution are similar. Thus, our definition requires that the deletion essentially has the same effect as if the deleted data was never sent to $\X$ to begin with. The two executions are illustrated in \cref{fig:worlds}

\begin{figure}[h]
    \centering
    \includegraphics[width=0.85\textwidth]{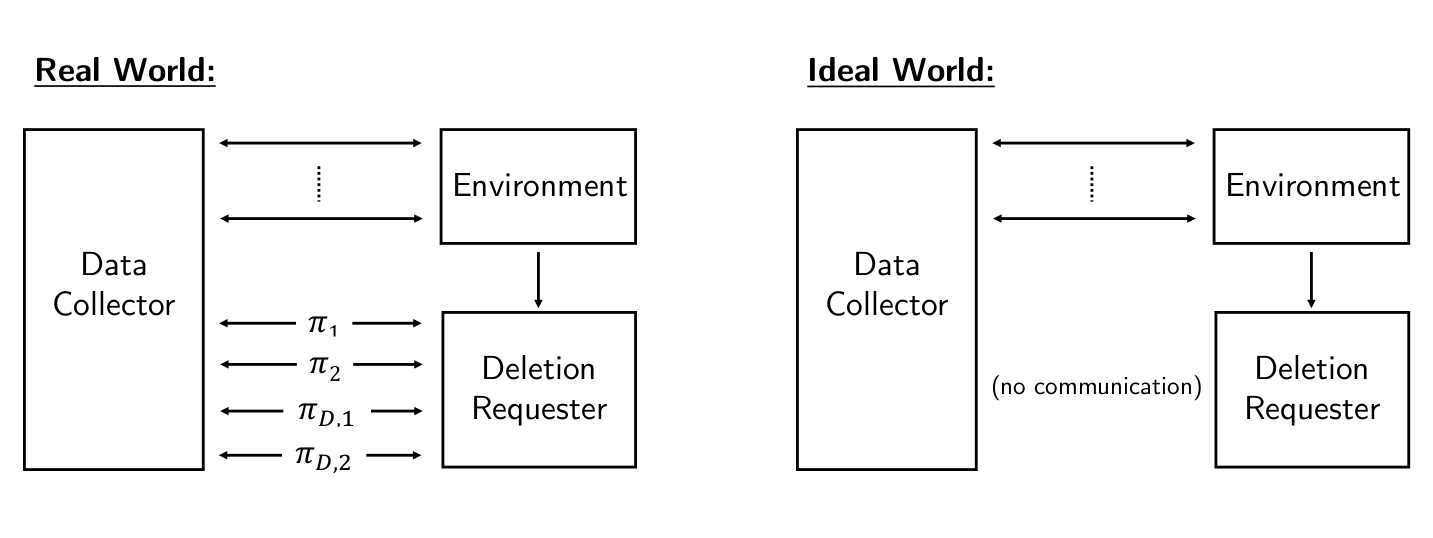}
    \caption{The real and ideal world executions. In the real world, the deletion-requester talks to the data collector, but not in the ideal world. In the real world, $\pi_1$ and $\pi_2$ are interactions that contain data that is asked to be deleted by the deletion-requester through the interactions $\pi_{D,1}$ and $\pi_{D,2}$, respectively.}
    \label{fig:worlds}
\end{figure}

While $\Y$ above is represented as a single user sending some data and a corresponding deletion request, we can use the same framework for a more general modeling. In particular, $\Y$ can be used to model just the part of a user that contains the data to be deleted, or of multiple users, all of whom want some or all of their data to be deleted. 

\paragraph{Dependencies in data.} While the above definition makes intuitive sense, certain user behaviors can introduce dependencies that make it impossible for the data-collector to track and thus delete properly. Consider a data-collector that assigns a pseudonym to each user, which is computed as the output of a pseudo-random permutation $P$ (with the seed kept secret by the data-collector) on the user identity. Imagine a user who registers in the system with his real identity $id$ and is assigned the pseudonym $pd$. Next, the user re-registers a fresh account using $pd$ as his identity. Finally, the user requests deletion of the first account which used his real identity $id$. In this case, even after the data-collector deletes the requested account entirely, information about the real identity $id$ is still preserved in its memory, i.e. $P^{-1}(pd) = id$. Thus, the actions of the user can make it impossible to keep track of and properly delete user data. In our definition, we resolve this problem by limiting the communication between $\Y$ and $\Z$. We do not allow $\Y$ to send any messages to the environment $\Z$, and require that $\Y$ ask for all (and only) the data it sent to be deleted. This implicitly means that the data that is requested to be deleted cannot influence other information that is stored with the data-collector, unless that is also explicitly deleted by the user.

\paragraph{Requirement that the data-collector be diligent.} Our definitions of deletion-compliance place explicit requirements on the data collector only when a deletion request is received. Nonetheless, these explicit requirements implicitly require the data-collector to organize (or keep track of the collected data) in a way that ensures that deletion requests can be properly handled. For example, our definitions implicitly require the data-collector to keep track of how it is using each user's data. In fact, this book-keeping is essential for deletion-compliance. After all, how can a data-collector delete a user's data if it does not even know where that particular user's data is stored? Thus, a data-collector that follows these implicit book-keeping requirements can be viewed as being \emph{diligent}. Furthermore, it would be hard (if not impossible) for a data-collector to be deletion-compliant if it is not diligent.

As we discuss later, our definition also implies a requirement on the data-collector to have in place authentication mechanisms that ensure that it is sharing information only with the legitimate parties, and that only the user who submitted a piece of data can ask for it to be deleted.

%\paragraph{Comparison with the Law.} Note that this work does not attempt to offer insights  on what laws and regulations are appropriate for societies. {Though we do recognise the broad right to be forgotten as important for personal liberty, at least.} Instead, we  target the more humble goal of offering a plausible set of precise notions that societies and lawmakers could choose from. We could weaken definitions to get something meaningful. %The law does not really clarify what it means by deletion and thus at this stage it 

% \pnote{From our notes from long ago: ``This paper does not interpret the law but provides possible ways to interpret it. Note: Saying something like this is perhaps not the best idea. It sounds like we were being lazy. Perhaps it would be a good idea to read GDPR part on this, along with the interpretation and perhaps say what we think the language of the law asks for. What we think is too restrictive and what else is to libral.''}\sanjam{removed the note}\pnote{But perhaps we should say something about this.}

\paragraph{Composition Properties.} Finally, we also show, roughly, that under an assumption that different users operate independently of each other, a data collector that is deletion-compliant under our definition for a deletion request from a single user is also deletion-compliant for requests from (polynomially) many users (or polynomially many independent messages from a single user). This makes our definition easier to use in the analysis of certain data collectors, as demonstrated in our examples in \cref{sec:scenarios}. 

\subsection{Lessons from our Definitions}

Our formalization of the notion of data deletion enables us to design and analyze mechanisms that handle data obtained from others and process deletion requests, as demonstrated in \cref{sec:scenarios}. This process of designing systems that satisfy our definition has brought to light a number of properties such a mechanism needs to have in order to be deletion-compliant that may be seen as general principles in this respect.

To start with, satisfying our definition even while providing very simple functionalities requires a non-trivial authentication mechanism that uses randomness generated by the server. Otherwise many simple attacks can be staged that lead to observable differences based on whether some specific data was stored and deleted or never stored. The easier case to observe is when, as part of its functionality, the data collector provides a way for users to retrieve data stored with it. In this case, clearly if there is no good authentication mechanism, then one user can look at another user's data and be able to remember it even after the latter user has asked the collector to delete it. More broadly, our definition implicitly requires the data collector to provide certain privacy guarantees -- that one user's data is not revealed to others.

 But even if such an interface is not provided by the collector, one user may store data in another user's name, and then if the latter user ever asks for its data to be deleted, this stored data will also be deleted, and looking at the memory of the collector after the fact would indicate that such a request was indeed received. If whatever authentication mechanism the collector employs does not use any randomness from the collector's side, such an attack may be performed by any adversary that knows the initial state (say the user name and the password) of the user it targets.

Another requirement that our definition places on data collectors is that they handle metadata carefully. For instance, care has to be taken to use implementations of data structures that do not inadvertently preserve information about deleted data in their metadata. This follows from our definition as it talks about the state of the memory, and not just the contents of the data structure. Such requirements may be satisfied, for instance, by the use of ``history-independent'' implementations of data structures~\cite{Mic97,NT01}, which have these properties.

Further, this kind of history-independence in other domains can also be used to provide other functionalities while satisfying our definition. For instance, recent work~\cite{CY15,GGVZ19,ECSSL19,GAS19,Schelter20,BCCJTZLP20,BSZ20} has investigated the question of data deletion in machine learning models, and this can be used to construct a data collector that learns such a model based on data given to it, and can later delete some of this data not just from its database, but also from the model itself.

Finally, we observe that privacy concepts, such as differential privacy~\cite{DMNS06}, can sometimes be used to satisfy deletion requirements without requiring any additional action from the data collector at all. Very roughly, a differentially private algorithm guarantees that the distribution of its output does not change by much if a small part of its input is changed. We show that if a data collector runs a differentially private algorithm on data that it is given, and is later asked to delete some of the data, it need not worry about updating the output of the algorithm that it may have stored (as long as not too much data is asked to be deleted). Following the guarantee of differential privacy, whether the deleted data was used or not in the input to this algorithm essentially does not matter.

\subsection{Related Work}
\label{sec:related}

Cryptographic treatment of legal terms and concepts has been undertaken in the past. Prominent examples are the work of Cohen and Nissim~\cite{CN19} that formalizes and studies the notion of singling-out that is specified in the GDPR as a means to violate privacy in certain settings, and the work of Nissim et al~\cite{nissim2017bridging} that models the privacy requirements of FERPA using a game-based definition.

Recently, the notion of data deletion in machine learning models has been studied by various groups~\cite{CY15,GGVZ19,ECSSL19,GAS19,Schelter20,BCCJTZLP20,BSZ20}. Closest to our work is the paper of Ginart et al~\cite{GGVZ19}, which gives a definition for what it means to retract some training data from a learned model, and shows efficient procedures to do so in certain settings like $k$-means clustering. We discuss the crucial differences between our definitions and theirs in terms of scope and modelling in \cref{sec:discussion}.

There has been considerable past work on notions of privacy like differential privacy~\cite{DMNS06} that are related to our study, but very different in their considerations. Roughly, in differential privacy, the concern is to protect the privacy of each piece of data in a database -- it asks that the output of an algorithm running on this database is roughly the same whether or not any particular piece of data is present. We, in our notion of deletion-compliance, ask for something quite different -- unless any piece of data is requested to be deleted, the state of the data collector could depend arbitrarily on it; only \emph{after} this deletion request is processed by the collector do the requirements of our definition come in. In this manner, while differential privacy could serve as a means to satisfy our definition, our setting and considerations in general are quite different from those there. For similar reasons, our definitions are able to require bounds on statistical distance without precluding all utility (and in some cases even perfect deletion-compliance is possible), whereas differential privacy has to work with a different notion of distance between distributions (see \cite[Section 1.6]{Vadhan17} for a discussion).

While ours is the first formal definition of data deletion in a general setting, there has been considerable work on studying this question in specific contexts, and in engineering systems that attempt to satisfy intuitive notions of data deletion, with some of it being specifically intended to support the right to be forgotten. We refer the reader to the comprehensive review article by Politou et al~\cite{PAP18} for relevant references and discussion of such work.

% \subsection{Conclusion and Future Directions}
% \label{sec:conclusion}
% 
% \pnote{Fill this in, with some open directions.}
% 
% \pnote{Say something about the computational notion and whether it can be satisfied with encryption and forgetting the key, etc.? We never really got into that.}

%%% Local Variables:
%%% mode: latex
%%% TeX-master: "main"
%%% End:

%%% Local Variables:
%%% mode: latex
%%% TeX-master: "main"
%%% End:
\section{Our Framework and Definitions}
\label{sec:defs}

In this section we describe our framework for describing and analyzing data collectors, and our definitions for what it means for a data collector to be \emph{deletion-compliance}. Our modeling uses building blocks that were developed for the Universal Composability (UC) framework of Canetti~\cite{FOCS:Canetti01}. First, we present the formal description of this framework and our definitions. Explanations of the framework and definitions, and how we intend for them to be used are given in \cref{sec:explanation}. In \cref{sec:discussion}, we discuss the various choices made in our modelling and the implicit assumptions and restrictions involved. In \cref{sec:cond-compliance}, we present a weakening of our definition that covers data collectors that share data with external entities, and in \cref{sec:props} we demonstrate some composition properties that our definition has.

%\subsection{The Model of Execution}
%\label{sec:model}
\paragraph{The Model of Execution.}

Looking ahead, our approach towards defining deletion-compliance of a data collector will be to execute it and have it interact with certain other parties, and at the end of the execution ask for certain properties of what it stores and its communication with these parties. Following~\cite{GolMicRac89,Goldreich01,FOCS:Canetti01}, both the data collector and these other parties in our framework are modelled as \emph{Interactive Turing Machines} (ITMs), which represent the program to be run within each party. Our definition of an ITM is very similar to the one in \cite{CCL15}, but adapted for our purposes.

\begin{definition}[Interactive Turing Machine]
    \label{def:itm}
    An Interactive Turing Machine (ITM) is a (possibly randomized) Turing Machine $M$ with the following tapes:
    \begin {enumerate*} [label=(\itshape\roman*\upshape)]
      \item a read-only \emph{identifier tape};
      \item a read-only \emph{input tape};
      \item a write-only \emph{output tape};
      \item a read-write \emph{work tape};
      \item a single-read-only \emph{incoming tape};
      \item a single-write-only \emph{outgoing tape};
      \item a read-only \emph{randomness tape}; and
      \item a read-only \emph{control tape}. %\pnote{Adding control tape that lists all ITIs that this ITI controls, and adding it to view as well.}
    \end{enumerate*}
    % In addition, the machine is allowed access to an infinite stream of random bits that it can request one at a time.
    
    %We distinguish between \emph{stateful} and \emph{stateless} ITMs. A stateful machine can communicate with the entity controlling it via multiple messages written to its input and output tapes. On the other hand, for a stateless machine the input tape can be written on at most once. 
    
    The \emph{state} of an ITM $M$ at any given point in its execution, denoted by $\state_M$, consists of the content of its work tape at that point. %, the positions of the machine's tape-heads, and the machine's state as a TM.
    Its \emph{view}, denoted by $\view_M$, consists of the contents of its input, output, incoming, outgoing, randomness, and control tapes at that point. 
\end{definition}

The execution of the system consists of several instances of such ITMs running and reading and writing on their own and each others' tapes, and sometimes instances of ITMs being created anew, according to the rules described in this subsection. We distinguish between ITMs (which represent static objects, or programs) and \emph{instances of ITMs}, or ITIs, that represent instantiations of that ITM. Specifically, an ITI is an ITM along with an identifer that distinguishes it from other ITIs in the same system. This identifier is written on the ITI's identifier tape at the point when the ITI is created, and its semantics will be described in more detail later.

In addition to having the above access to its own tapes, each ITI, in certain cases, could also have access to read from or write on certain tapes of other ITI. The first such case is when an ITI $M$ \emph{controls} another ITI $M'$. $M$ is said to control the ITIs whose identifiers are written on its control tape, and for each ITI $M'$ on this tape, $M$ can read $M'$'s output tape and write on its input tape. This list is updated whenever, in the course of the execution of the system, a new ITI is created under the control of $M$.

The second case where ITIs have access to each others' tapes is when they are engaged in a \emph{protocol}. A protocol is described by a set of ITMs that are allowed to write on each other's incoming tapes. Further, any ``message'' that any ITM writes on any other ITM's incoming tape is also written on its own outgoing tape. As with ITMs, a protocol is just a description of the ITMs involved in it and their prescribed actions and interactions; and an \emph{instance} of a protocol, also referred to as a \emph{session}, consists of ITIs interacting with each other (where indeed some of the ITIs may deviate from the prescribed behavior). Each such session has a unique session identifier ($\sid$), and within each session each participating ITI is identified by a unique party identifier ($\pid$). The identifier corresponding to an ITI participating in a session of a protocol with session identifier $\sid$ and party identifier $\pid$ is the unique tuple $(\sid,\pid)$.

There will be small number of special ITIs in our system, as defined below, whose identifiers are assigned differently from the above. Unless otherwise specified, all ITMs in our system are probabilistic polynomial time (PPT) -- an ITM $M$ is PPT if there exists a constant $c > 0$ such that, at any point during its run, the overall number of steps taken by $M$ is at most $n^c$, where $n$ is the overall number of bits written on the \emph{input tape} of $M$ during its execution.

%\subsection{System Specifications and Execution Process}
%\label{sec:actual-defs}

\paragraph{The Data Collector.} We require the behavior of the data collector and its interactions with other parties to be specified by a tuple $(\X,\pi,\pi_D)$, where $\X$ specifies the algorithm run by the data collector, and $\pi,\pi_D$ are protocols by means of which the data collector interacts with other entities. Here, $\pi$ could be an arbitrary protocol (in the simplest case, a single message followed by local processing), and $\pi_D$ is the corresponding \emph{deletion} protocol -- namely, a protocol to undo/reverse a previous execution of the protocol $\pi$.

For simplicity, in this work, we restrict to protocol $\pi,\pi_D$ to the natural case of the two-party setting.\footnote{However, our model naturally generalizes to protocols with more parties.} Specifically, each instance of the protocol $\pi$ that is executed has specifications for a server-side ITM and a client-side ITM. The data collector will be represented in our system by a special ITI that we will also refer to as $\X$. When another ITI in the system, call it $\W$ for now, wishes to interact with $\X$, it does by initiating an instance (or session) of one of the protocols $\pi$ or $\pi_D$. This initiation creates a pair of ITIs -- the client and the server of this session -- where $\W$ controls the client ITI and $\X$ the server ITI. $\W$ and $\X$ then interact by means of writing to and reading from the input and output tapes of these ITIs that they control. Further details are to be found below.

%\footnote{In some settings, the data collectors may want to keep its internal specifications $\X$ proprietary, while making only the description of $\pi,\pi_D$ public. In this case, the data collector may want to use our definition for internal self auditing/self compliance.}

The only assumption we will place on the syntax of these protocols is the following interface between $\pi$ and $\pi_D$. We require that at the end of any particular execution of $\pi$, a \emph{deletion token} is defined that is a function solely of the $\sid$ of the execution and its transcript, and that $\pi$ should specify how this token is computed. The intended interpretation is that a request to delete this instance of $\pi$ consists of an instance of $\pi_D$ where the client-side ITI is given this deletion token as input. As we will see later, this assumption does not lose much generality in applications.

\paragraph{Recipe for Describing Deletion-Compliance.} Analogous to how security is defined in the UC framework, we define \emph{deletion-compliance} in three steps as follows. First, we define a \emph{real execution} where certain other entities interact with the data collector ITI $\X$ by means of instances the protocols $\pi$ and $\pi_D$. This is similar to the description of the ``real world'' in the UC framework. In this setting, we identify certain deletion requests (that is, executions of $\pi_D$) that are of special interest for us -- namely, the requests that we will be requiring to be satisfied . Next, we define an \emph{ideal execution}, where the instances of $\pi$ that are asked to be deleted by these identified deletion requests are never executed in the first place. The ``ideal execution'' in our setting is different from the ``ideal world'' in the UC framework in the sense that we do not have an ``ideal functionality''. Finally, we say that $(\X,\pi,\pi_D)$ is \emph{deletion-compliant} if the two execution process are essentially the same in certain respects. Below, we explain the model of the \emph{real} execution, the \emph{ideal} execution, and the notion of deletion-compliance.

\paragraph{Real Execution.} The real execution involves the data collector ITI $\X$, and two other special ITIs: the \emph{environment} $\Z$ and the \emph{deletion requester} $\Y$. By intention, $\Y$ represents the part of the system whose deletion requests we focus on and will eventually ask to be respected by $\X$, and $\Z$ corresponds to the the rest of the world -- the (possibly adversarial) environment that interacts with $\X$. Both of these interact with $\X$ via instances of $\pi$ and $\pi_D$, with $\X$ controlling the server-side of these instances and $\Z$ or $\Y$ the client-side.

The environment $\Z$, which is taken to be adversarial, is allowed to use arbitrary ITMs (ones that may deviate from the protocol) as the client-side ITIs of any instances of $\pi$ or $\pi_D$ it initiates. The deletion-requester $\Y$, on the other hand, is the party we are notionally providing the guarantees for, and is required to use honest ITIs of the ITMs prescribed by $\pi$ and $\pi_D$ in the instances it initiates, though, unless otherwise specified, it may provide them with any inputs as long as they are of the format required by the protocol.\footnote{Note that it is essential that $\Y$ follow the honest protocol specifications to ensure that the deletion requests are successful.} In addition, we require that any instance of $\pi_D$ run by $\Y$ is for an instance of $\pi$ already initiated by $\Y$.\footnote{This corresponds to providing guarantees only for entities that do not (maliciously or otherwise) ask for others' data to be deleted.} Finally, in our modeling, while $\Z$ can send arbitrary messages to $\Y$ (thereby influencing its executions), we do not allow any communication from $\Y$ back to $\Z$. This is crucial for ensuring that the $\X$ does not get any ``to be deleted'' information from other sources. 

At any point, there is at most one ITI in the system that is \emph{activated}, meaning that it is running and can reading from or writing to any tapes that it has access to. Each ITI, while it is activated, has access to a number of tapes that it can write to and read from. Over the course of the execution, various ITIs are activated and deactivated following rules described below. When an ITI is activated, it picks up execution from the point in its ``code'' where it was last deactivated.
% To begin with, it has access to all of its own tapes as listed in \cref{def:itm}. In addition, any ITM among $\X$, $\Y$ or $\Z$ that controls other ITIs can write to the input tapes of the controlled ITIs and read from their output tapes. We assume that the set of ITIs under the control of any of these entities is known to the respective entity and is updated whenever a new ITI is created under its control, and whenever an ITI under its control is destroyed \pnote{destroyed?}. And finally, any ITI involved in an execution of a protocol $\pi$ or $\pi_D$ has write-access to the incoming tapes of all others ITIs involved in the same protocol instance. \pnote{Revisit this.}

Now we provide a formal description of the real execution. We assume that all parties have a computational/statistical security parameter $\secp \in \mathbb{N}$ that is written on their input tape as $1^\secp$ the first time they are activated.\footnote{We remark that this is done merely for convenience and is not essential for the model to make sense. In particular, in the perfect security case, no security parameter is needed.} The execution consists of a sequence of \emph{activations}, where in each activation a single participant (either $\Z$, $\Y$, $\X$ or some ITM) is activated, and runs until it writes on the incoming tape of another (at most \emph{one} other) machine, or on its own output tape. Once this write happens, the writing participant is deactivated (its execution is paused), and another party is activated next --- namely, the one on who incoming tape the message was written; or alternatively, if the message was written to the output tape then the party controlling the writing ITI is activated. If no message is written to the incoming tape (and its own output tape) of any party, then $\Z$ is activated. The real execution proceeds in two phases: (i) the alive phase, and (ii) the termination phase. 

\paragraph{Alive Phase:} This phase starts with an activation of the environment $\Z$, and $\Z$ is again activated if any other ITI halts without writing on a tape. The various ITIs run according to their code, and are allowed to act as follows: %In this phase, $\Z$, $\X$ and $\Y$ on activation behave as follows:
\begin{itemize} 
  \item The environment $\Z$ when active is allowed to read the tapes it has access to, run, and perform any of the following actions:
    \begin{itemize}
      \item Write an arbitrary message on the incoming tape of $\Y$.
      \item Write on the input tape of any ITI that it controls (from protocol instances initiated in the past).
      \item Initiate a new protocol instance of $\pi$ or $\pi_D$ with $\X$, whereupon the required ITIs are created and $\Z$ is given control of the client-side ITI of the instance and may write on its input tape. At the same time, $\X$ is given control of the corresponding server-side ITI that is created.
      \item Pass on activation to $\X$ or $\Y$.
      \item Declare the end of the Alive Phase, upon which the execution moves to the Terminate Phase. This also happens if $\Z$ halts.
    \end{itemize}
    
  \item The deletion-requester $\Y$ on activation can read the tapes it has access to, run, and perform any of the following actions:
    \begin{itemize}
      \item Write on the input tape of any ITI that it controls.
      \item Initiate a new instance of $\pi$ or $\pi_D$ with $\X$, and write on the input tape of the created client-side ITI.
    \end{itemize}

  \item The data collector $\X$ on activation can read the tapes it has access to, run, and write on the input tape of any ITI that it controls.
    
  \item Any other ITI that is activated is allowed to read any of the tapes that it has access to, and write to either the incoming tape of another ITI in the protocol instance it is a part of, or on its own output tape.
\end{itemize}

\paragraph{Terminate Phase:} In this phase, the various ITIs are allowed the same actions as in the Alive phase. The activation in this phase proceeds as follows:
\begin{enumerate}
  \item First, each client-side ITI for $\pi$ that was initiated by $\Y$ in the Alive phase is sequentially activated enough times until each one of them halts.
  \item For any instance of $\pi$ for which a client-side ITI was initiated by $\Y$ and which was executed to completion, an instance of $\pi_D$ is initiated with input the deletion token for that instance of $\pi$ (except if such an instance of $\pi_D$ was already initiated).
  \item Each client-side ITI for instances of $\pi_D$ that were initiated by $\Y$ in the Alive phase or in the previous step is sequentially activated enough times until each one of them halts.
\end{enumerate}

We denote by $\er(\secp)$ the tuple $(\state_\X,\view_\X,\state_\Z,\view_\Z)$ resulting at the end of above-described real execution with security parameter $\secp$.

\paragraph{Ideal Execution.} Denote by $\Y_0$ the special $\Y$ that is completely silent -- whenever it is activated, it simply halts. In particular, it does not initiate any ITIs and does not write on the incoming tape of any other machine. A real execution using such a $\Y_0$ as the deletion-requester is called an ideal execution. We denote by $\er(\secp)$ the tuple $(\state_\X,\view_\X,\state_\Z,\view_\Z)$ resulting at the end of an ideal execution with data collector $\X$ and environment $\Z$, and with security parameter $\secp$.

\paragraph{} We are now ready to present our definition for the deletion-compliance of data collectors, which is as follows.

\begin{definition}[Statistical Deletion-Compliance]
    \label{def:compliance}
    Given a data-collector $(\X,\pi,\pi_D)$, an environment $\Z$ and a deletion-requester $\Y$, let $(\state_\X^{R,\secp},\view_\Z^{R,\secp})$ denote the corresponding parts of the real execution $\er(\secp)$, and $(\state_\X^{I,\secp},\view_\Z^{I,\secp})$ the corresponding parts of the ideal execution $\ei(\secp)$. We say that $(\X, \pi,\pi_D)$ is \emph{statistically deletion-compliant} if, for any PPT environment $\Z$, any PPT deletion-requester $\Y$, and for all unbounded distinguishers $D$, there is a negligible function $\eps$ such that for all $\secp \in \Nat$:
    \begin{align*}
      \left| \Pr[D(\state_\X^{R,\secp},\view_\Z^{R,\secp})=1] - \Pr[D(\state_\X^{I,\secp},\view_\Z^{I,\secp})=1]\right| \leq \eps(\secp)
    \end{align*}
\end{definition}

In other words, the statistical distance between these two distributions above is at most $\eps(\secp)$. If $D$ above is required to be computationally bounded (allowed to run only in PPT time in $\secp$), then we get the weaker notion of \emph{computational deletion-compliance}. Analogously, if $\eps(\secp)$ is required to be $0$, then we get the stronger notion of \emph{perfect deletion-compliance}.

% \pnote{Say something about the alternative weaker definition where $\Y$ is replaced with a $\Y'$ that has just the contents of its messages changed and not necessarily $\Y_0$? This is mentioned in the discussion section now.}

\subsection{Explanation of the Definition}
\label{sec:explanation}

As indicated earlier, the central idea our definition is built around is that the processing of a deletion request should leave the data collector and the rest of the system in a state that is similar to one it would have been in if the data that was deleted was never given to the collector in the first place. This ensures that there is no trace left of deleted data, even in metadata maintained by some of the entities, etc..

The first question that arises here is which parts of the system to ask this of. It is clear that the deleted data should no longer persist in the memory of the data collector. A less obvious but clearly necessary demand is that the data collector also not reveal this data to any user other than the one it belongs to. Otherwise, unless whomever this data is revealed to provides certain guarantees for its later deletion, the data collector loses the ability to really delete this data from locations it reached due to actions of the data collector itself, which is clearly undesirable.\footnote{Of course, if the entity this data is revealed to does provide some guarantees for later deletion, then we may reasonably expect the data collector to provide deletion guarantees even while revealing data to this entity. In \cref{sec:cond-compliance}, we present a weaker definition of deletion-compliance that captures this.}

Once so much is recognized, the basic form of the definition is clear from a cryptographic standpoint. We fix any user, let the user send the collector some data and then request for it to be deleted, and look at the state of the collector at this point together with its communication with the rest of the system so far. We also look at the same in a world where this user did not send this data at all. And we ask that these are distributed similarly. We then note that this property needs to hold not just when the collector is interacting solely with this user, but is doing so as part of its routine operation where it is interacting with any number of other users and processing their data and deletion requests as well. 

\paragraph{The UC Framework.} In order to make this definition formal, we first need to model all entities in a formal framework that allows us to clearly talk about the ``state'' or the essential memory of the entities, while also being expressive enough to capture all, or at least most, data collectors. We chose the UC framework for this purpose as it satisfies both of these properties and is also simple enough to describe clearly and succinctly. In this framework, the programs that run are represented by Interactive Turing Machines, and communication is modelled as one machine writing on another's tape. The state of an entity is then captured by the contents of the work tape of the machine representing it, and its view by whatever was written on its tapes by other machines. This framework does impose certain restrictions on the kind of executions that it captures, though, and this is discussed later, in \cref{sec:discussion}.

\paragraph{Protocols and Interaction.} Another choice of formality motivated by its usefulness in our definition is to have all communication with the data collector $\X$ be represented by instances of a protocol $\pi$. It should be noted that the term ``protocol'' here might belie the simplicity of $\pi$, which could just involve the sending of a piece of data by a user of the system to the data collector $\X$. This compartmentalisation of communication into instances of $\pi$ is to let us (and the users) refer directly to specific instances later and request their deletion using instances of the deletion protocol $\pi_D$. As the reference to instances of $\pi$, we use a ``deletion token'' that is computable from the transcript of that instance -- this is precise enough to enable us to refer to specific pieces of data that are asked to be deleted, and loose enough to capture many natural systems that might be implemented in reality for this purpose.

\paragraph{The Deletion-Requester $\Y$ and the Environment $\Z$.} The role of the user in the above rudimentary description is played by the deletion-requester $\Y$ in our framework. In the ``real'' execution, $\Y$ interacts with the data collector $\X$ over some instances of $\pi$, and then asks for all information contained in these instances to be deleted. In the ``ideal'' execution, $\Y$ is replaced by a silent $\Y_0$ that does not communicate with $\X$ at all. And both of these happen in the presence of an environment $\Z$ that interacts arbitrarily with $\X$ (through instances of $\pi$ and $\pi_D$) -- this $\Z$ is supposed to represent both the rest of the users in the system that $\X$ interacts with, as well as an adversarial entity that, in a sense, attempts to catch $\X$ if it is not handling deletions properly. By asking that the state of $\X$ and the view of $\Z$ in both these executions be similar, we are asking that the deletion essentially have the same effect on the world as the data never being sent.

It is to be noted that while $\Y$ here is represented as a single entity, it does not necessarily represent just a single ``user'' of the system or an entire or single source of data. It could represent just a part of a user that contains the data to be deleted, or represent multiple users, all of whom want their data to be deleted. In other words, if a data collector $\X$ is deletion-compliant under our definition, and at some point in time has processed a certain set of deletion requests, then as long as the execution of the entire world at this point can be separated into $\Z$ and $\Y$ that follow our rules of execution, the deletion-compliance of $\X$ promises that all data that was sent to $\X$ from $\Y$ will disappear from the rest of the world.

\paragraph{Using the Definition.} Our framework and definition may be used for two purposes: (i) to guide the design of data collectors $\X$ that are originally described within our framework (along with protocols $\pi$ and $\pi_D$) and wish to handle deletion requests well, and (ii) to analyse the guarantees provided by existing systems that were not designed with our framework in mind and which handle data deletion requests.

In order to use \cref{def:compliance} to analyze the deletion-compliance of pre-existing systems, the first step is to rewrite the algorithm of the data collector to fit within our framework. This involves defining the protocols $\pi$ and $\pi_D$ representing the communication between ``users'' in the system and the data collector. This part of the process involves some subjectivity, and care has to be taken to not lose crucial but non-obvious parts of the data collector, such as metadata and memory allocation procedures, in this process. The examples of some simple systems presented in \cref{sec:scenarios} illustrate this process )though they do not talk about modelling lower-level implementation details). Once the data collector and the protocols are described in our framework, the rest of the work in seeing whether they satisfy our definition of deletion-compliance is well-defined.

\subsection{Discussion}
\label{sec:discussion}

A number of choices were made in the modelling and the definition above, the reasons for some of which are not immediately apparent. Below, we go through a few of these and discuss their place in our framework and definition.

\paragraph{Modelling Interactions.} The first such choice is to include in the model the entire communication process between the data collector and its users rather than look just at what goes on internally in the data collector. For comparison, a natural and simpler definition of data deletion would be to consider a data collector that has a database, and maintains the result of some computation on this database. It then receives requests to delete specific rows in the database, and it is required to modify both the database and the processed information that it maintains so as to make it look like the deleted row was never present. The definition of data deletion in machine learning by Ginart et al~\cite{GGVZ19}, for instance, is of this form.

The first and primary reason for this choice is that the intended scope of our definitions is larger than just the part of the data collector that maintains the data. We intend to analyze the behavior of the data collector as a whole, including the memory used to implement the collector's algorithm and the mechanisms in place for interpreting and processing its interactions with external agents. For instance, as we discuss in \cref{sec:scenarios}, it turns out that any data collector that wishes to provide reasonable guarantees to users deleting their data needs to have in place a non-trivial authentication mechanism. This requirement follows easily from the requirements of our definition, but would not be apparent if only the part of the collector that directly manages the data is considered.

The second reason is that while the simpler kind of definition works well when the intention is to apply it to collectors that do indeed have such a static database that is given to them, it fails to capture crucial issues that arise in a more dynamic setting. Our inclusion of the interactions between parties in our definition enables us to take into account dependencies among the data in the system, which in turn enables us to keep our demands on the data collector more reasonable. Consider, for example, a user who sends its name to a data collector that responds with a hash of it under some secret hash function. And then the user asks the same collector to store a piece of data that is actually the same hash, but there is no indication given to the collector that this is the case. At some later time, the user asks the collector to delete its name. To a definition that only looks at the internal data storage of the collector, the natural expectation after this deletion request is processed would be that the collector's state should look as though it never learnt the user's name. However, this is an unreasonable demand -- since the collector has no idea that the hash of the name was also given to it, it is not reasonable to expect that it also find the hash (which contains information about the name) and delete it. And indeed, under our definition, the collector is forgiven for not doing so unless the user explicitly asks for the hash also to be deleted. If our modelling had not kept track of the interactions between the collector and the user, we would not have been able to make this relaxation.

\paragraph{Restrictions on $\Y$.} Another conspicuous choice is not allowing the deletion-requester $\Y$ in our framework to send messages to the environment $\Z$. This is, in fact, how we handle cases like the one just described where there are dependencies between the messages that the collector receives that are introduced on the users' side. By requiring that $\Y$ does not send messages to $\Z$ and that all interaction between $\Y$ and $\X$ are asked to be deleted over the course of the execution, we ensure that any data that depends on $\X$'s responses to $\Y$'s messages is also asked to be deleted. This admits the case above where both the name and the hash are requested to be deleted, and requires $\X$ to comply with such a request; but it excludes the case where only the name is asked to be deleted (as then the hash would have to be sent by $\Z$, which has no way of learning it), thus excusing $\X$ for not deleting it.

Also note that this restriction does not lose any generality outside of excluding the above kind of dependency. Take any world in which a user (or users) asks for some of its messages to be deleted, and where the above perverse dependency does not exist between these and messages not being asked to be deleted. Then, there is a pair of environment $\Z$ and deletion-requester $\Y$ that simulates that world exactly, and the deletion-compliance guarantees of $\X$ have the expected implications for such a deletion request. The same is true of the restriction that \emph{all} of the messages sent by $\Y$ have to be requested to be deleted rather than just some of them -- it does not actually lose generality. And also of the fact that $\Y$ is a single party that is asking for deletion rather than a collection -- a set of users asking for deletion can be simulated by just one $\Y$ that does all their work.

\paragraph{The Ideal Deletion-Requester.} An interesting variant of our definition would be one in which the $\Y$ is not replaced by a silent $\Y_0$ in the ideal world, but by another $\Y'$ that sends essentially the same kinds of messages to $\X$, but with different contents. Currently, our definition says that, after a deletion request, the collector does not even remember that it had some data that was deleted. This might be unnecessarily strong for certain applications, and this modification would relax the requirement to saying that it is fine for the collector to remember that it had some data that was deleted, just not what the data was. The modification is not trivial, though, as in general the number and kinds of messages that $\Y$ sends could depend on the contents of its messages and the responses from $\X$, which could change if the contents are changed. Nevertheless, under the assumption that $\Y$ behaves nicely in this sense, such an alternative definition could be stated and would be useful in simple applications.

\paragraph{Choices That Lose Generality.} There are certain assumptions in our modelling that do break from reality. One of these is that all machines running in the system are sequential. Due to this, our definition does not address, for instance, the effects of race conditions in the data collector's implementation. This assumption, however, makes our definition much simpler and easier to work with, while still keeping it meaningful. We leave it as an open question to come up with a reasonable generalization of our definition (or an alternative to it) that accounts for parallel processing.

Another such assumption is that, due to the order of activations and the fact that activation is passed on in the execution by ITIs writing on tapes, we do not give $\Z$ the freedom to interlace its messages freely with those being sent by $\Y$ to $\X$. It could happen, for instance, that $\X$ is implemented poorly and simply fails to function if it does not receive all messages belonging to a particular protocol instance consecutively. This failure is not captured by our definition as is, but this is easily remedied by changing the activation rules in the execution to pass activation back to $\Z$ after each message from (an ITI controlled by) $\Y$ to $\X$ is sent and responded to. We do not do this for the sake of simplicity.

Finally, our modelling of the data collector's algorithm being the entire ITM corresponds to the implicit assumption of reality that the process running this algorithm is the only one running on the system. Or, at least, that the distinguisher between the real and ideal worlds does not get to see how memory for this process is allocated among all the available memory in the system, does not learn about scheduling in the system, etc.. Side-channel attacks involving such information and definitions that provide protection against these would also be interesting for future study, though even more exacting than our definition.

%%% Local Variables:
%%% mode: latex
%%% TeX-master: "main"
%%% End:

\subsection{Conditional Deletion-Compliance}
\label{sec:cond-compliance}

As noted in earlier sections, any data collector that wishes to be deletion-compliant under \cref{def:compliance} cannot reveal the data that is given to it by a user to any other entity. There are several situations, however, where such an action is desirable and even safe for the purposes of deletion. And rules for how the collector should act when it is in fact revealing data in this way is even specified in some laws -- Article 17 of the GDPR, for instance, says, ``Where the controller has made the personal data public and is obliged \dots to erase the personal data, the controller, taking account of available technology and the cost of implementation, shall take reasonable steps, including technical measures, to inform controllers which are processing the personal data that the data subject has requested the erasure by such controllers of any links to, or copy or replication of, those personal data.''

Consider, for instance, a small company $\X$ that offers storage services using space it has rented from a larger company $\W$. $\X$ merely stores indexing information on its end and stores all of its consumers' data with $\W$, and when a user asks for its data to be deleted, it forwards (an appropriately modified version of) this request to the $\W$. Now, if $\W$ is deletion-compliant and deletes whatever data $\X$ asks it to, it could be possible for $\X$ to act in way that ensures that state of the entire system composed of $\X$ and $\W$ has no information about the deleted data. In other words, conditioned on some deletion-compliance properties of the environment (that includes $\W$ here), it is reasonable to expect deletion guarantees even from collectors that reveal some collected data. In this subsection, we present a definition of \emph{conditional} deletion-compliance that captures this. 

Specifically, we consider the case where the environment $\Z$ itself is deletion-compliant, though in a slightly different sense than \cref{def:compliance}. In order to define this, we consider the deletion-compliance of a data collector $\X$ running its protocols $(\pi,\pi_D)$ in the presence of other interaction going on in the system. So far, in our executions involving $(\X,\pi,\pi_D)$, we essentially required that $\Y$ and $\Z$ only interact with $\X$ by means of the protocols $\pi$ and $\pi_D$. Now we relax this requirement and, in both phases of execution, allow an additional set of protocols $\Phi = \set{\phi_1, \dots}$ that can be initiated by $\X$ to be run between $\X$ and $\Z$ (but not $\Y$) during the execution. We denote  an execution involving $\X$, $\Z$ and $\Y$ under these rules by $\exec^{\X,\pi,\pi_D}_{\Z,\Y,\Phi}$.

Finally, we also consider executions where, additionally, we also let $\X$ write on the incoming tape of $\Y$.\footnote{This weakens the definition of deletion-compliance, as it allows $\X$ to send to $\Y$ anything it wants, since the view or state of $\Y$ is not scrutinized by the requirements of deletion-compliance. And though as a definition of deletion-compliance this is not meaningful on its own, it is a property that, if the environment $\Z$ possesses it, seems necessary and sufficient to allow a data collector $\X$ to safely reveal data to $\Z$ that it may wish to delete later.} We call such an execution an \emph{auxiliary} execution, and denote it by $\cexec^{\X,\pi,\pi_D}_{\Z,\Y,\Phi}$. We define the following notion of auxiliary deletion-compliance that we will be the condition we will place on the environment in our eventual definition of conditional deletion-compliance. %For such a set of protocols $\Phi$, define $\erphi$ in the same way as $\er$, according to the earlier execution rules, but allowing $\X$ and $\Z$ to additionally initiate protocols from $\Phi$.

\begin{definition}[Auxiliary Deletion-Compliance]
    \label{def:compliance-aux}
    Given a data-collector $(\X,\pi,\pi_D)$, an environment $\Z$, a deletion-requester $\Y$, and a set of protocols $\Phi$, let $(\state_\X^{R,\secp},\view_\Z^{R,\secp})$ denote the corresponding parts of the auxiliary execution $\cexec^{\X,\pi,\pi_D}_{\Z,\Y,\Phi}(\secp)$, and $(\state_\X^{I,\secp},\view_\Z^{I,\secp})$ the corresponding parts of the ideal auxiliary execution $\cexec^{\X,\pi,\pi_D}_{\Z,\Y_0,\Phi}(\secp)$. We say that $(\X, \pi,\pi_D)$ is \emph{statistically auxiliary-deletion-compliant in the presence of $\Phi$} if, for any PPT environment $\Z$, any PPT deletion-requester $\Y$, and for all unbounded distinguishers $D$, there is a negligible function $\eps$ such that for all $\secp\in\Nat$:
    \begin{align*}
      \left| \Pr[D(\state_\X^{R,\secp},\view_\Z^{R,\secp})=1] - \Pr[D(\state_\X^{I,\secp},\view_\Z^{I,\secp})=1]\right| \leq \eps(\secp)
    \end{align*}
\end{definition}

Note that we do not ask $\X$ for any guarantees on being able to delete executions of the protocols in $\Phi$. It may be seen that any data collector $(\X,\pi,\pi_D)$ that is deletion-compliant is also auxiliary deletion-compliant in the presence of any $\Phi$, since it never runs any of the protocols in $\Phi$.

We say that a data collector $\X$ is conditionally deletion-compliant if, whenever it is interacting with an environment that is auxiliary-deletion-compliant, it provides meaningful deletion guarantees. 

\begin{definition}[Conditional Deletion-Compliance]
    \label{def:cond-compliance}
    Given a data-collector $(\X,\pi,\pi_D)$, an environment $\Z$, a deletion-requester $\Y$, and a pair of protocols $\Phi = (\phi,\phi_D)$, let $(\state_\X^{R,\secp},\state_\Z^{R,\secp})$ denote the corresponding parts of the real execution $\erphi(\secp)$, and $(\state_\X^{I,\secp},\state_\Z^{I,\secp})$ the corresponding parts of the ideal execution $\eiphi(\secp)$. We say that $(\X, \pi,\pi_D)$ is \emph{conditionally statistically deletion-compliant in the presence of $\Phi$} if, for any PPT environment $\Z$ such that $(\Z,\phi,\phi_D)$ is statistically auxiliary-deletion-compliant in the presence of $(\pi,\pi_D)$, any PPT deletion-requester $\Y$, and for all unbounded distinguishers $D$, there is a negligible function $\eps$ such that for all $\secp\in\Nat$:
    \begin{align*}
      \left| \Pr[D(\state_\X^{R,\secp},\state_\Z^{R,\secp})=1] - \Pr[D(\state_\X^{I,\secp},\state_\Z^{I,\secp})=1]\right| \leq \eps(\secp)
    \end{align*}
\end{definition}

One implication of $\X$ being conditionally deletion-compliant is that if, in some execution, it is found that data that was requested of $\X$ to be deleted is still present in the system in some form, then this is not due to a failure on the part of $\X$, but was because the environment $\Z$ was not auxiliary-deletion-compliant and hence failed to handle deletions correctly. A setup like the one described at the beginning of this subsection is studied as an example of a conditionally deletion-compliant data collector in \cref{sec:hist-ind-2}.

%%% Local Variables:
%%% mode: latex
%%% TeX-master: "main"
%%% End:

%%% Local Variables:
%%% mode: latex
%%% TeX-master: "main"
%%% End:
\subsection{Properties of our Definitions}
\label{sec:props}

In this section, we demonstrate a few properties of our definition of deletion-compliance that are meaningful to know on their own and will also make analyses of data collectors we design in later sections simpler. In order to describe them, we first define certain special classes of deletion-requesters. The first is one where we limit the number of protocol instances the deletion-requester $\Y$ is allowed to initiate.

\begin{definition}
    \label{def:ky}
    For any $k\in\Nat$, a deletion-requester $\Y$ is said to be \emph{$k$-representative} if, when interacting with a data collector $\X$ running $(\pi,\pi_D)$, it initiates at most $k$ instances of $\pi$ with $\X$.
\end{definition}

The other is a class of deletion-requesters intended to represent the collected actions of several $1$-representative deletion-requesters operating independently of each other. In other terms, the following represents, say, a collection of users that interact with a data collector by sending it a single message each, and further never interact with each other. This is a natural circumstance that arises in several situations of interest, such as when people respond to a survey or submit their medical records to a hospital, for example. Hence, even deletion-compliance guarantees that hold only in the presence of such deletion-requesters are already meaningful and interesting.

\begin{definition}
    \label{def:obly}
    A deletion-requester $\Y$ is said to be \emph{oblivious} if, when interacting with a data collector $\X$ running $(\pi,\pi_D)$, for any instance of $\pi$ that it initiates, it never accesses the output tape of the corresponding client-side ITI except when running $\pi_D$ to delete this instance, whereupon it merely computes the deletion token and provides it as input to $\pi_D$.
\end{definition}

Note that the deletion-requester $\Y$ not accessing the output tapes does not necessarily mean that the entities or users that it notionally represents similarly do not look at the responses they receive from the data collector -- as long as each user in a collection of users does not communicate anything about such responses to another user, the collection may be faithfully represented by an oblivious $\Y$. Similarly, an oblivious $\Y$ could also represent a single user who sends multiple messages to the data collector, under the condition that the content of these messages, and whether and when the user sends them, does not depend on any information it receives from the data collector.

We also quantify the error that is incurred by a data collector in its deletion-compliance as follows. In our definition of deletion-compliance (\cref{def:compliance}), we required this error to be negligible in the security parameter.

\begin{definition}[Deletion-Compliance Error]
    \label{def:compliance-error}
    Let $k\in\Nat$. Given a data-collector $(\X,\pi,\pi_D)$, an environment $\Z$ and a deletion-requester $\Y$, let $(\state_\X^{R,\secp},\view_\Z^{R,\secp})$ denote the corresponding parts of $\er(\secp)$, and $(\state_\X^{I,\secp},\view_\Z^{I,\secp})$ the corresponding parts of $\ei(\secp)$. The \emph{(statistical) deletion-compliance error} of $(\X, \pi,\pi_D)$ is a function $\eps:\Nat\ra [0,1]$ where for $\secp\in\Nat$, the function value $\eps(\secp)$ is set to be the supremum, over all PPT environments $\Z$, all PPT deletion-requesters $\Y$, and all unbounded distinguishers $D$, of the following quantity when all parties are given $\secp$ as the security parameter:
    \begin{align*}
      \left| \Pr[D(\state_\X^{R,\secp},\view_\Z^{R,\secp})=1] - \Pr[D(\state_\X^{I,\secp},\view_\Z^{I,\secp})=1]\right|
    \end{align*}
    The oblivious deletion-compliance error is defined similarly, but only quantifying over all oblivious PPT deletion-requesters $\Y$. And the $k$-representative deletion-compliance error is defined similarly by quantifying over all $k$-representative PPT $\Y$'s.
\end{definition}

\ifsubmission
\else
\subsubsection{Composition of Deletion-Requesters}
\label{sec:composition-1}
\fi

We show that, for oblivious deletion-requesters, the error in deletion-compliance of any data collector $(\X,\pi,\pi_D)$ grows at most linearly with the number of instances of $\pi$ that are requested to be deleted. In other words, if $k$ different users of $\X$ ask for their information to be deleted, and they all operate independently in the sense that none of them looks at the responses from $\X$ to any of the others, then the error that $\X$ incurs in processing all these requests is at most $k$ times the error it incurs in processing one deletion request.

Apart from being interesting on its own, our reason for proving this theorem is that in the case of some data collectors that we construct in \cref{sec:scenarios}, it turns out to be much simpler to analyze the $1$-representative deletion-compliance error than the error for a generic deletion-requester. The following theorem then lets us go from the $1$-representative error to the error for oblivious deletion-requesters that make more deletion requests.

\begin{theorem}
    \label{thm:comp}
    For any $k\in\Nat$ and any data collector $(\X,\pi,\pi_D)$, the $k$-representative oblivious deletion-compliance error is at most $k$ times its $1$-representative deletion-compliance error.
\end{theorem}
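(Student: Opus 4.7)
The plan is a standard hybrid argument over the $k$ instances of $\pi$ that the oblivious deletion-requester would initiate. Fix any security parameter $\secp$, any PPT environment $\Z$, any $k$-representative oblivious PPT $\Y$, and any unbounded distinguisher $D$; let $\eps_1$ denote the $1$-representative deletion-compliance error of $(\X,\pi,\pi_D)$. Define hybrids $H_0,H_1,\ldots,H_k$, where $H_i$ is the execution in which only the first $i$ instances of $\pi$ that $\Y$ would initiate (say, in the canonical order of initiation time) are actually initiated and later deleted, and the remaining instances are suppressed. Then $H_0$ coincides in distribution with the ideal execution $\ei(\secp)$ and $H_k$ with the real execution $\er(\secp)$, so it suffices to bound the distinguishing advantage between each consecutive pair $(H_i,H_{i+1})$ by $\eps_1$ and apply the triangle inequality.

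To bound the transition from $H_i$ to $H_{i+1}$, I would construct a pair $(\Z_i,\Y_i)$ where $\Y_i$ is a $1$-representative oblivious deletion-requester that handles only the $(i+1)$-th instance, and $\Z_i$ internally simulates the original $\Z$ together with the parts of $\Y$ responsible for instances $1,\ldots,i$. Because $\Y$ is oblivious, its decisions about which instances to initiate, with what inputs, and when to schedule them depend only on its randomness and the messages it receives from $\Z$; they do not depend on $\X$'s responses to $\Y$'s own instances. Consequently $\Z_i$ can faithfully reproduce $\Y$'s scheduling and inputs for instances $1,\ldots,i$ by driving honest client-side ITMs itself (which an environment is permitted to do), and in the Terminate phase can compute the appropriate deletion tokens from the transcripts it observes in order to issue the matching $\pi_D$ requests. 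Instances $i+2,\ldots,k$ are simply never initiated. The $(i+1)$-th instance is initiated by $\Y_i$, with its prescribed input passed to $\Y_i$ via $\Z_i$'s incoming-tape message.

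The key observation is that the real execution with $(\Z_i,\Y_i)$ induces the same distribution over $(\state_\X,\view_\Z)$ as $H_{i+1}$ once we post-process $\view_{\Z_i}$ to extract the embedded view of the internally simulated $\Z$; and the ideal execution with $(\Z_i,\Y_0)$ induces the distribution of $H_i$, since $\Y_0$ is silent and only instances $1,\ldots,i$ are run (by $\Z_i$). Because this extraction is a deterministic function, any distinguisher on $(\state_\X,\view_\Z)$ lifts to a distinguisher on $(\state_\X,\view_{\Z_i})$ of the same advantage. Applying the $1$-representative deletion-compliance bound to $(\Z_i,\Y_i)$ therefore yields $|\Pr[D \text{ accepts on } H_{i+1}] - \Pr[D \text{ accepts on } H_i]| \le \eps_1(\secp)$, and summing over $i=0,\ldots,k-1$ gives the claimed $k\cdot\eps_1(\secp)$ bound.

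The main obstacle is more bookkeeping than conceptual: one must verify that $\Z_i$ as constructed is a legitimate PPT environment under the framework's execution and activation rules, that the canonical ordering of the ``first $i$ instances'' is well-defined uniformly across hybrids, and that the sequential activation structure of the Alive and Terminate phases is preserved when $\Z_i$ interleaves its own client-side activity with its simulation of $\Z$. The obliviousness hypothesis is precisely what makes the off-line simulation sound; if $\Y$ could adapt its scheduling or inputs based on $\X$'s earlier responses, then $\Z_i$'s handling of instances $1,\ldots,i$ would depend on $\X$'s messages, and the clean separation between the simulated instances and the single instance driven by $\Y_i$ would break down.
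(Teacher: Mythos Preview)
Your proposal is correct and takes essentially the same approach as the paper: both use a hybrid argument that peels off one instance of $\pi$ at a time, moving it from the deletion-requester into a simulating environment, and invoke the $1$-representative bound for each step. The paper phrases this as an induction on $k$ (separating off the $k$-th instance into $\Y_1$ and absorbing the first $k-1$ into $\Z_1$, then applying the $(k-1)$-case), whereas you write out all $k+1$ hybrids directly; unrolling the paper's induction yields exactly your chain $H_0,\ldots,H_k$, and the use of obliviousness to justify that $\Z_i$ can faithfully drive the absorbed instances without access to $\X$'s responses is identical in both arguments.
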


\ifsubmission
% We defer the proof of this theorem to the full version.
\else

\begin{proof}
    We will show this theorem by induction on $k$. Fix some security parameter $\secp$ and suppose the $1$-representative deletion-compliance of a given data collector $(\X,\pi,\pi_D)$ is $\eps_1\in[0,1]$, and its $(k-1)$-representative oblivious deletion-compliance error is $\eps_{k-1}$.

    Consider any $k$-representative oblivious deletion-requester $\Y$ that runs $k$ instances (without loss of generality, exactly $k$) of $\pi$, and then asks for all of them to be deleted. Denote these instances by $\pi_1, \dots, \pi_k$, in the order in which they were initiated by $\Y$, and the corresponding deletion requests by $\pi_{D,1}, \dots, \pi_{D,k}$ (note that it is not necessarily the case that these instances of $\pi_D$ were run in this order). We define a new environment $\Z_1$ and deletion-requester $\Y_1$ based on $\Z$ and $\Y$ as follows:
    \begin{itemize}
      \item $\Z_1$ simulates both $\Z$ and $\Y$, initiating all protocol instances they would, including $\pi_1, \dots, \pi_{k-1}$. When its simulation of $\Y$ is about to initiate $\pi_k$, it instead sends over the simulated state of $\Y$ at that point to $\Y'$. After this, whenever the simulation of $\Y$ attempts to activate the client-side ITI of ``$\pi_k$'', it activates $\Y'$ instead. Otherwise, it proceeds with the rest of its simulation of $\Z$ and $\Y$, again excepting only when $\Y$ tries to initiate $\pi_{D,k}$, at which point it again activates $\Y'$ and does so again whenever the simulation tries to activate the client-side ITI of ``$\pi_{D,k}$''. It then simulates $\Z$ and $\Y$ to completion, and runs the instances $\pi_{D,i}$ for $i$ for which these have not been run by the simulation of $\Y$ yet (also outsourcing $\pi_{D,k}$ to $\Y'$ as earlier if it happens at this stage). 
      \item $\Y_1$, when it is first activated, initiates the instance $\pi_k$ just as $\Y$ would have, and upon subsequent activations it in turn activates the client-side ITI of $\pi_k$ as $\Y$ would have. After $\pi_k$ is complete, the next time $\Y'$ is activated, it initiates $\pi_{D,k}$ with input the deletion token for $\pi_k$, and continues to activate the client-side ITI for the same whenever it is subsequently activated, just at $\Y$ would. %Note that it can do both of this as it has the whole state of $\Y$ when it started $\pi_k$, and all the information needed to run $\pi_{D,k}$ is contained in the transcript of $\pi_k$.
    \end{itemize}

    Note that it is possible to separate the actions of $\Z$ and $\Y$ into the above $\Z_1$ and $\Y_1$ only because of the obliviousness of $\Y$, which implies that the simulation of $\Y$ does not look at the output tapes of any of the ITIs it controls in order to decide, for instance, when to initiate any protocol instance or activate a client-side ITI, or even what input to provide to these ITIs unless running a $\pi_{D,i}$, and even then it looks only at the ITI of the corresponding $\pi_i$. 
    
    Denote by $\Y_\sim$ the partial simulation of $\Y$ that is run by $\Z_1$, and by $\Z_\sim$ its simulation of $\Z$. Let $(\state_\X^1,\view_{\Z_\sim}^1)$ denote the state of $\X$ and the part of the view of $\Z_1$ that corresponds to the messages sent and received by $\Z_\sim$ at the end of $\textsc{EXEC}^{\X,\pi,\pi_D}_{\Z_1,\Y_1}$. Since $\Y$ is oblivious, the combination of $\Y_\sim$ and $\Y_1$ behaves identically to $\Y$, and we have the following claim. As always, $(\state_\X^R,\view_\Z^R)$ is from the real execution $\er$.

    \begin{claim}
        \label{claim:comp-1}
        $(\state_\X^R,\view_{\Z}^R)$ is distributed identically to $(\state_\X^1,\view_{\Z_\sim}^1)$.
    \end{claim}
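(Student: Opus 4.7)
The plan is to prove this claim by an explicit coupling argument. I would fix randomness tapes for $\X$, $\Z$, and $\Y$ in $\er$, and couple to an execution of $\textsc{EXEC}^{\X,\pi,\pi_D}_{\Z_1,\Y_1}$ in which $\X$ uses the same randomness, $\Z_1$ uses $\Z$'s randomness concatenated with the portion of $\Y$'s randomness that $\Y$ would consume for protocol instances other than $\pi_k$ and $\pi_{D,k}$, and $\Y_1$ uses the portion of $\Y$'s randomness that $\Y$ would consume for $\pi_k$ and $\pi_{D,k}$. Once all randomness is fixed, every party acts deterministically, so the claim reduces to showing that the two coupled executions produce identical tape contents, which in turn implies identical distributions on $(\state_\X,\view_\Z)$ versus $(\state_\X,\view_{\Z_\sim})$.

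The core step is an induction on the number of activation steps in the execution. I would maintain the invariant that at each step: (i) $\X$'s work, incoming, and outgoing tapes agree in the two executions; (ii) the parts of $\Z_1$'s tapes corresponding to the simulation $\Z_\sim$ match $\Z$'s tapes in $\er$; and (iii) for every ITI that $\Y$ controls in $\er$, the corresponding ITI in the coupled execution (controlled by $\Z_1$ for protocol instances other than $\pi_k,\pi_{D,k}$, and by $\Y_1$ for those two) has identical tape contents. Activations of $\X$ and of $\Z$ / $\Z_\sim$ preserve the invariant by determinism and the inductive hypothesis. The $\Y$-related activations are where the obliviousness of $\Y$ matters: since $\Y$ never reads the output tape of any client-side ITI except to extract the deletion token for the matching $\pi_{D,i}$, its decisions of when to initiate $\pi_k$ or $\pi_{D,k}$, when to activate their client-side ITIs, and what inputs to hand them depend only on information that $\Z_\sim$ already has inside $\Z_1$. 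Hence $\Z_1$ can correctly route activation to $\Y_1$ at exactly the right points, and $\Y_1$, having received the simulated state of $\Y$ at the handoff and using the matching randomness, reproduces $\Y$'s subsequent actions exactly.

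The main point to verify carefully is that the activation-scheduling rules (a party is activated when another writes to its incoming tape, or when an ITI it controls writes to its output tape) lead to consistent scheduling in both worlds. When the client-side ITI of $\pi_i$ writes to its output tape for $i\neq k$, $\Y$ is activated in $\er$ while $\Z_1$ is activated in the coupled execution; obliviousness ensures that $\Y$'s next action does not depend on the contents of that output tape except when computing a deletion token, and in that exceptional case $\Z_\sim$ already knows the full transcript of $\pi_i$ from its own simulation, so the invariant is preserved. Symmetrically, writes by $\X$ to ITIs controlled by $\Y_1$ activate $\Y_1$ in the coupled world precisely where they would activate $\Y$ in $\er$. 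The Terminate phase is essentially deterministic once the Alive-phase invariants hold, since the remaining $\pi_{D,i}$ are initiated in the order fixed by the execution with inputs determined by the recorded transcripts. Carrying the invariants through the entire execution then yields the pointwise equality of $(\state_\X^R,\view_\Z^R)$ and $(\state_\X^1,\view_{\Z_\sim}^1)$ under the coupling, which gives the claimed equality in distribution.
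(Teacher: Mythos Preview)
Your proposal is correct and takes essentially the same approach as the paper: both argue that the reorganization of $(\Z,\Y)$ into $(\Z_1,\Y_1)$ yields a perfect simulation of the real execution from $\X$'s point of view, with obliviousness of $\Y$ being the crucial property that makes the splitting possible. The paper's own proof is considerably terser---it simply asserts that $(\Z_1,\Y_1)$ is ``a perfect simulation of the former,'' concludes that $\X$'s view and hence state are identical, and observes that $\Z_\sim$ receives exactly the messages $\X$ would send to $\Z$---whereas you spell out the underlying coupling explicitly via fixed randomness tapes and a step-by-step activation invariant; this added rigor is not required by the paper but is a faithful unpacking of the same idea.
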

    
    Note that $\Y_1$ is a $1$-representative deletion-requester, as it only runs $\pi_k$. Let $(\state_\X^{1,I},\view_{\Z_\sim}^{1,I})$ denote the state of $\X$ and the part of the view of $\Z_1$ that corresponds to the messages sent to and received by $\Z_\sim$ at the end of the corresponding ideal execution $\textsc{EXEC}^{\X,\pi,\pi_D}_{\Z_1,\Y_0}$. By the $1$-representative deletion-compliance of $(\X,\pi,\pi_D)$, we have the following.
    
    \begin{claim}
        \label{claim:comp-2}
        The statistical distance between $(\state_\X^1,\view_{\Z_\sim}^1)$ and $(\state_\X^{1,I},\view_{\Z_\sim}^{1,I})$ is at most $\eps_1$.
    \end{claim}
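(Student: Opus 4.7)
The plan is to apply the $1$-representative deletion-compliance hypothesis directly to the pair $(\Z_1,\Y_1)$ constructed just above, and then use monotonicity of statistical distance under deterministic post-processing to descend from the full view of $\Z_1$ to the restricted view $\view_{\Z_\sim}$ that the claim refers to.

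First, I would verify that $(\Z_1,\Y_1)$ is an admissible pair for the hypothesis. By construction $\Z_1$ internally simulates $\Z$ together with the partial simulation $\Y_\sim$ of $\Y$; since both $\Z$ and $\Y$ are PPT and the simulation overhead is linear bookkeeping, $\Z_1$ is itself a PPT environment. Similarly, $\Y_1$ is PPT, since it only executes the portion of $\Y$ devoted to initiating and driving the single instance $\pi_k$ and its deletion $\pi_{D,k}$. Crucially, $\Y_1$ initiates exactly one instance of $\pi$ (namely $\pi_k$), so it is $1$-representative in the sense of \cref{def:ky}. I would also check that $\Y_1$ uses only honest client-side ITIs and that the $\pi_{D,k}$ it runs corresponds to an instance of $\pi$ it itself initiated, so it is a valid deletion-requester under the execution rules.

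Next, I would invoke the definition of the $1$-representative deletion-compliance error (\cref{def:compliance-error}) applied to the pair $(\Z_1,\Y_1)$. Letting $(\state_\X,\view_{\Z_1})$ and $(\state_\X^I,\view_{\Z_1}^I)$ denote the full joint distributions from $\textsc{EXEC}^{\X,\pi,\pi_D}_{\Z_1,\Y_1}(\secp)$ and $\textsc{EXEC}^{\X,\pi,\pi_D}_{\Z_1,\Y_0}(\secp)$ respectively, the hypothesis gives
\begin{equation*}
  \sdist{(\state_\X,\view_{\Z_1})}{(\state_\X^I,\view_{\Z_1}^I)} \leq \eps_1.
\end{equation*}

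Finally, I would note that the restricted pair $(\state_\X^1,\view_{\Z_\sim}^1)$ is a deterministic function of $(\state_\X,\view_{\Z_1})$: it is simply the projection that keeps $\state_\X$ intact and discards from $\view_{\Z_1}$ every tape-write that belongs to $\Z_1$'s internal simulation of $\Y_\sim$ or to its outsourcing handoffs to $\Y_1$, retaining only what $\Z_\sim$ would have seen. The same deterministic projection maps $(\state_\X^I,\view_{\Z_1}^I)$ to $(\state_\X^{1,I},\view_{\Z_\sim}^{1,I})$. Since statistical distance is non-increasing under any deterministic (indeed any randomized) post-processing, the $\eps_1$ bound transfers to the projected distributions, yielding the claim. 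Equivalently, any unbounded distinguisher witnessing a larger gap on the projected pair yields, by composing with the projection, a distinguisher witnessing the same gap on the full pair, contradicting the hypothesis.

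The main obstacle is notational rather than technical: one must cleanly specify the projection that carves $\view_{\Z_1}$ into the $\view_{\Z_\sim}$ part and the ``internal bookkeeping'' part, and confirm that this carving is well-defined from $\Z_1$'s construction (it is, because the tapes used for simulating $\Y_\sim$ and for communicating with $\Y_1$ are disjoint from the tapes used for $\Z_\sim$'s genuine interactions with $\X$). Once this is pinned down, the claim follows immediately from the hypothesis plus data processing.
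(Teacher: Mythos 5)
Your proposal is correct and follows essentially the same route as the paper's own proof: invoke the $1$-representative deletion-compliance of $(\X,\pi,\pi_D)$ on the pair $(\Z_1,\Y_1)$ versus $(\Z_1,\Y_0)$ to bound the distance on the full pair $(\state_\X,\view_{\Z_1})$, then note that $\view_{\Z_\sim}$ is a (deterministic) restriction of $\view_{\Z_1}$, so the bound transfers by data processing. The paper states this last step tersely as ``the view of $\Z_\sim$ is a subset of the view of $\Z_1$''; your explicit admissibility checks and projection argument are just a more careful spelling-out of the same argument.
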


    Let $\Y_2$ denote the combination of $\Y_\sim$ and $\Y_0$ -- note that this is well-defined since $\Y_\sim$ was isolated from the rest of the simulation in $\Z_1$, and also note that $\Y_2$ is a $(k-1)$-representative oblivious deletion-requester. Let $(\state_\X^2,\view_{\Z}^2)$ denote the state of $\X$ and the view of $\Z$ from the execution $\textsc{EXEC}^{\X,\pi,\pi_D}_{\Z,\Y_2}$. And let $(\state_\X^I,\view_{\Z}^I)$ denote the same from the ideal execution $\textsc{EXEC}^{\X,\pi,\pi_D}_{\Z,\Y_0}$. By the $(k-1)$-representative oblivious deletion-compliance of $(\X,\pi,\pi_D)$, we have the following.

    \begin{claim}
        \label{claim:comp-3}
        The statistical distance between $(\state_\X^2,\view_{\Z}^2)$ and $(\state_\X^I,\view_{\Z}^I)$ is at most $\eps_{k-1}$.
    \end{claim}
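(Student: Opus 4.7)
The plan is to observe that Claim~\ref{claim:comp-3} is a direct instantiation of the inductive hypothesis once we verify that $\Y_2$ is a PPT $(k-1)$-representative oblivious deletion-requester. There are essentially no new arguments beyond what has already been set up in Claims~\ref{claim:comp-1} and~\ref{claim:comp-2}; all the work is in unpacking definitions to check that $\Y_2$ fits the quantifiers in the definition of $(k-1)$-representative oblivious deletion-compliance error.

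First I would examine the internal structure of $\Y_2$. Recall that $\Y_2$ is the combination of $\Y_\sim$ and $\Y_0$, where $\Y_\sim$ is the portion of $\Y$'s behavior that $\Z_1$ was internally simulating, namely the code that initiates and runs the instances $\pi_1, \dots, \pi_{k-1}$ together with the corresponding deletion instances $\pi_{D,1}, \dots, \pi_{D,k-1}$. Since $\Y_0$ is silent, $\Y_2$ behaves exactly as $\Y_\sim$. Therefore $\Y_2$ initiates exactly $k-1$ instances of $\pi$, witnessing that it is $(k-1)$-representative, and it is PPT because $\Y$ is. Obliviousness is inherited from $\Y$: since $\Y$ never reads the output tape of any client-side ITI except to compute deletion tokens immediately before running the corresponding $\pi_{D,i}$, and since $\Y_\sim$ executes a syntactic subset of these same actions on a subset of the same ITIs, $\Y_2$ satisfies Definition~\ref{def:obly}.

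Next I would confirm that the execution $\textsc{EXEC}^{\X,\pi,\pi_D}_{\Z,\Y_2}$ is a well-formed real execution in the sense of Section~\ref{sec:defs}. The original $\Z$ is a PPT environment, and $\Y_2$ is a PPT deletion-requester that respects the activation rules of the alive and terminate phases (since $\Y$ does, and $\Y_\sim$ faithfully carries out a prefix-closed subset of $\Y$'s prescribed actions without interference from the rest of $\Z_1$'s simulation). Hence the pair $(\Z,\Y_2)$ lies within the scope of the quantification in Definition~\ref{def:compliance-error} restricted to oblivious $(k-1)$-representative requesters.

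Finally, I would invoke the inductive hypothesis: the $(k-1)$-representative oblivious deletion-compliance error of $(\X,\pi,\pi_D)$ is at most $\eps_{k-1}$, so for every unbounded distinguisher $D$,
\[
\bigl|\Pr[D(\state_\X^{2},\view_{\Z}^{2})=1] - \Pr[D(\state_\X^{I},\view_{\Z}^{I})=1]\bigr| \leq \eps_{k-1},
\]
which is equivalent, after taking the supremum over $D$, to the statistical-distance bound asserted by Claim~\ref{claim:comp-3}. The only substantive point needing care is verifying that the simulated $\Y_\sim$ inside $\Z_1$ can legitimately be extracted as a standalone deletion-requester -- a fact signalled by the earlier remark that $\Y_\sim$ is isolated within $\Z_1$ -- but this is routine bookkeeping about the separability of $\Z_1$'s code rather than a genuine obstacle.
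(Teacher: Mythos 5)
Your proposal is correct and follows essentially the same route as the paper: the paper's own proof of this claim is exactly to observe that $\Y_2$ is a $(k-1)$-representative (oblivious) deletion-requester and to read off the bound from the $(k-1)$-representative oblivious deletion-compliance error, treating $\textsc{EXEC}^{\X,\pi,\pi_D}_{\Z,\Y_2}$ as the real execution and $\textsc{EXEC}^{\X,\pi,\pi_D}_{\Z,\Y_0}$ as the ideal one. Your additional verification that $\Y_2$ is PPT, oblivious, and a well-formed standalone deletion-requester is exactly the bookkeeping the paper leaves implicit.
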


    Thus, by the triangle inequality and the above three claims, the statistical distance between $(\state_\X^R,\view_\Z^R)$ and $(\state_\X^I,\state_\Z^I)$ is at most $(\eps_1+\eps_{k-1})$ which, if $\eps_{k-1} \leq (k-1)\eps_1$, is at most $k\eps_1$. By induction, applied to all security parameters $\secp$, this proves the theorem. We finish by proving the above claims.

    \begin{proofof}{\cref{claim:comp-1}}
        % \pnote{Actually, this claim is not true. The order in which the deletion requests come matters. I think our theorem here is not true. Consider the data collector from \cref{sec:hist-ind} itself, but with the deletion algorithm that says it will obey a deletion request with $auth$ only if the value $auth$ itself has not been stored by a later insert request. Such a collector is $1$-compliant. It will also be $2$-compliant if the deletion requests are issued in the reverse order in which the insertions requests were made. But if this is not the order, then some of the messages are not actually deleted.}
        This claim follows by first observing that the view of $\X$ is the same in both the execution involving $(\Z,\Y)$ and the one involving $(\Z_1,\Y_1)$, since the latter is a perfect simulation of the former. Thus, the state of $\X$ is the same in both as well. Further, since all messages that are received by the simulation $\Z_\sim$ are exactly those that $\X$ would have sent to $\Z$, its $\view_{\Z_\sim}$ is the same as $\view_\Z$ conditioned on $\state_\X$. This proves the claim.
    \end{proofof}
    
    \begin{proofof}{\cref{claim:comp-2}}
        $\Y_1$ is a deletion-requester that initiates just one instance of $\pi$, and asks for it to be deleted, and we treat $\textsc{EXEC}^{\X,\pi,\pi_D}_{\Z_1,\Y_1}$ as the real execution and $\textsc{EXEC}^{\X,\pi,\pi_D}_{\Z_1,\Y_0}$ as the ideal execution. $1$-representative deletion-compliance now tells us that the statistical distance between $(\state_\X^1,\view_{\Z_1}^1)$ and $(\state_\X^{1,I},\view_{\Z_1}^{1,I})$ is at most $\eps_1$, and the claim follows by observing that the view of $\Z_\sim$ is a subset of the view of $\Z'$.
    \end{proofof}
    
    \begin{proofof}{\cref{claim:comp-3}}
        $\Y_2$ is a deletion-requester that initiates $(k-1)$ instances of $\pi$ and asks for them all to be deleted. Taking $\textsc{EXEC}^{\X,\pi,\pi_D}_{\Z,\Y_2}$ to be the real execution and $\textsc{EXEC}^{\X,\pi,\pi_D}_{\Z,\Y_0}$ to be the ideal execution immediately gives us the claim.
    \end{proofof}
    
\end{proof}

\fi

\ifsubmission
\else
\subsubsection{Composite Data Collectors}
\label{prop:composition-1}
\fi

\ifsubmission We defer the proof of the above theorem to the full version.\fi\ We also show that, given two data collectors that are each deletion-compliant, their combination is also deletion-compliant, assuming obliviousness of deletion-requesters. To be more precise, given a pair of data collectors $(\X_1,\pi_1,\pi_{1,D})$ and $(\X_2,\pi_2,\pi_{2,D})$, consider the ``composite'' data collector $((\X_1,\X_2),(\pi_1,\pi_2),(\pi_{1,D},\pi_{2,D}))$ that works as follows:
\begin{itemize}
  \item An instance of $(\pi_1,\pi_2)$ is either an instance of $\pi_1$ or of $\pi_2$. Similarly, an instance of $(\pi_{1,D},\pi_{2,D})$ is either an instance of $\pi_{1,D}$ or of $\pi_{2,D}$.
  \item The collector $(\X_1,\X_2)$ consists of a simulation of $\X_1$ and of $\X_2$, each running independently of the other.
  \item When processing an instance of $\pi_1$ or $\pi_{1,D}$, it forwards the messages to and from its simulation of $\X_1$, and similarly $\X_2$ for $\pi_2$ or $\pi_{2,D}$.
  \item The state of $(\X_1,\X_2)$ consists of the states of its simulations of $\X_1$ and $\X_2$.
\end{itemize}

Such an $\X$ would represent, for instance, two data collectors that operate separately but deal with the same set of users. We show that, if the constituting data collectors are deletion-compliant, then under the condition of the deletion-requester being oblivious, the composite data collector is also deletion-compliant. 

% The following theorem follows by applying \cref{thm:comp} to the composite collector -- a $1$-representative deletion-requester $\Y$ interacts only with $\X_1$ or with $\X_2$, and since either is compliant, $(X,\pi,\pi_D)$ is also $1$-representative compliant; applying \cref{thm:comp} then gives us the following theorem. Alternatively, it may also be proven directly by a hybrid argument over the data collectors. \pnote{Can we do away with the obliviousness requirement? Possibly, but perhaps there is a clever attack.}

\begin{theorem}
    \label{thm:comp-collectors}
    If $(\X_1,\pi_1,\pi_{1,D})$ and $(\X_2,\pi_2,\pi_{2,D})$ are both statistically deletion-compliant, then the composite data collector $((\X_1,\X_2),(\pi_1,\pi_2),(\pi_{1,D},\pi_{2,D}))$ is statistically deletion-compliant for oblivious deletion-requesters.
\end{theorem}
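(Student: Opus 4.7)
The plan is to argue via a two-step hybrid, moving from the real execution of the composite collector to the ideal one by silencing the deletion-requester's interactions with $\X_1$ first and then with $\X_2$. For each transition, I will invoke the deletion-compliance of the corresponding constituent collector, via a reduction that absorbs the other constituent and the untouched half of $\Y$'s behavior into the environment. Let $\eps_1(\secp)$ and $\eps_2(\secp)$ denote the deletion-compliance errors of $(\X_1,\pi_1,\pi_{1,D})$ and $(\X_2,\pi_2,\pi_{2,D})$; the triangle inequality will yield a bound of $\eps_1 + \eps_2$, which is negligible.

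\textbf{First hybrid transition.} Given any oblivious PPT $\Y$ and any PPT $\Z$ interacting with the composite, I will construct $(\Z_1,\Y_1)$ witnessing a reduction to $(\X_1,\pi_1,\pi_{1,D})$. Let $\Y_1$ relay to the external $\X_1$ exactly the $\pi_1$ and $\pi_{1,D}$ actions that $\Y$ would perform with the $\X_1$-component of the composite, computing deletion tokens for these instances from their transcripts in the usual way. Let $\Z_1$ simulate $\Z$ internally together with a fresh copy of $\X_2$, and in addition simulate the $\X_2$-facing portion of $\Y$, driving client-side ITIs for $\pi_2$ and $\pi_{2,D}$ against its internal $\X_2$. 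Because $\Y$ is oblivious, its decisions to initiate instances, the inputs it supplies to their client-side ITIs, and the deletion-token computations it performs depend only on messages written to its incoming tape by $\Z$ and on its own randomness, not on any outgoing tape of $\X_1$ or $\X_2$; hence $\Z_1$ has all the information required to faithfully simulate the $\X_2$-facing thread of $\Y$ without observing any part of the actual $\X_1$-execution. The pair $(\state_\X, \view_\Z)$ of the composite real execution is then identically distributed to the pair consisting of $\state_{\X_1}$ (real side) joined with the simulated $\state_{\X_2}$ and simulated $\view_\Z$, all assembled inside $\Z_1$'s view together with $\X_1$'s state. Applying the deletion-compliance of $(\X_1,\pi_1,\pi_{1,D})$ to $(\Z_1,\Y_1)$ shows that replacing $\Y_1$ by $\Y_0$ changes this joint distribution by at most $\eps_1(\secp)$, and the resulting distribution is exactly the first hybrid in which $\Y$'s interactions with $\X_1$ have been silenced while its interactions with $\X_2$ remain.

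\textbf{Second hybrid transition and conclusion.} A symmetric construction $(\Z_2,\Y_2)$ now invokes the deletion-compliance of $(\X_2,\pi_2,\pi_{2,D})$ to silence the remaining $\Y$-$\X_2$ interactions, with $\Z_2$ internally simulating $\X_1$ (which now receives no traffic from the $\X_1$-facing thread of $\Y$) and still simulating $\Z$ and the relevant part of $\Y$. This changes the joint distribution by at most $\eps_2(\secp)$, and the resulting distribution is precisely the ideal execution of the composite (with $\Y_0$ in place of $\Y$). The triangle inequality then yields an overall error at most $\eps_1(\secp) + \eps_2(\secp) = \negl(\secp)$, as required.

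\textbf{Main obstacle.} The main subtlety lies in verifying that the reduction in the first transition truly reproduces the composite execution --- equivalently, that the $\X_2$-facing thread of $\Y$ can legitimately be absorbed into the environment $\Z_1$. This is precisely where obliviousness is essential: if $\Y$ could read $\X_1$'s responses from client-side output tapes, its subsequent $\pi_2$/$\pi_{2,D}$ actions could depend on those responses, and $\Z_1$ --- who in the reduction sees no part of the $\X_1$-execution beyond $\Z$'s messages --- would be unable to simulate that thread faithfully. A secondary, purely bookkeeping issue is activation scheduling: the order in which $\Y_1$ and $\Z_1$'s simulated $\Y$-thread must be activated must match the order of $\Y$'s two threads in the composite execution. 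Since this order is determined entirely by $\Z$'s messages to $\Y$ (which $\Z_1$ generates internally) and by $\Y$'s own randomness (which $\Z_1$ can share with $\Y_1$ through appropriate coordination, or which can be split between them at the outset), this can be handled in a straightforward if somewhat tedious manner and does not affect the distributional equality.
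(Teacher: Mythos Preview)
Your proposal is correct, but it takes a genuinely different route from the paper. The paper's proof is a two-liner: it observes that a $1$-representative $\Y$ initiates a single instance of $(\pi_1,\pi_2)$, which is either a $\pi_1$-instance or a $\pi_2$-instance, so the $1$-representative deletion-compliance of the composite reduces immediately to that of whichever constituent is touched; it then invokes \cref{thm:comp} to upgrade from $1$-representative to general oblivious deletion-requesters. In effect the paper unrolls into a $k$-step hybrid (one per protocol instance) via the already-proven composition theorem, whereas you do a direct two-step hybrid, grouping all of $\Y$'s instances by target collector and silencing each group in one shot.

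What each approach buys: the paper's argument is more modular---it reuses \cref{thm:comp} and needs no new reduction machinery---and yields a bound of the form $k\cdot\max(\eps_1^{(1)},\eps_2^{(1)})$ where $\eps_i^{(1)}$ is the $1$-representative error. Your argument is self-contained and gives the tighter bound $\eps_1+\eps_2$ in terms of the \emph{full} deletion-compliance errors of the constituents, at the cost of spelling out the reduction (absorbing $\X_2$ and the $\X_2$-facing thread of $\Y$ into $\Z_1$, etc.). Your handling of the obliviousness requirement and the activation-scheduling bookkeeping is accurate; note in particular that the simulated $\state_{\X_2}$ and $\view_\Z$ are recoverable from $\view_{\Z_1}$ (since $\Z_1$'s view includes its randomness and all incoming messages, determining its entire computation), so the data-processing inequality carries the distance bound through.
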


\ifsubmission We prove \cref{thm:comp-collectors} in the full version.\fi The above theorem extends to the composition of any $k$ data collectors in this manner, where there is a loss of a factor of $k$ in the oblivious deletion-compliance error (this will be evident from the proof below).

\begin{proofof}{\cref{thm:comp-collectors}}
    The theorem follows by first showing that the composite collector is deletion-compliant for $1$-representative data collecors, and then applying \cref{thm:comp}. Any $1$-representative deletion-requester $\Y$ interacts either only with (the simulation of) $\X_1$ or with $\X_2$. And since both of these are deletion-compliant, the state of $(\X_1,\X_2)$ and the view of the environment are similarly distributed in both real and ideal executions. Thus, $((\X_1,\X_2),(\pi_1,\pi_2),(\pi_{1,D},\pi_{2,D}))$ is $1$-representative deletion-compliant. Applying \cref{thm:comp} now gives us the theorem.
\end{proofof}

%%% Local Variables:
%%% mode: latex
%%% TeX-master: "main"
%%% End:
\section{Scenarios}
\label{sec:scenarios}

In this section, we present examples of data collectors that satisfy our definitions of deletion-compliance with a view to illustrate both the modelling of collectors in our framework, and the aspects of the design of such collectors that are necessitated by the requirement of such deletion-compliance.

\subsection{Data Storage and History-Independence}
\label{sec:hist-ind}

Consider the following ostensibly simple version of data storage. A company wishes to provide the following functionality to its users. A user can ask the company to store a single piece of data, say their date-of-birth or a password. At a later point, the user can ask the company to retrieve this data, whence the company sends this stored data back to the user. And finally, the user can ask for this data to be deleted, at which point the company deletes any data the user has asked to be stored.

While a simple task, it is still not trivial to implement the deletion here correctly. The natural way to implement these functionalities is to use a dictionary data structure that stores key-value pairs and supports insertion, deletion and lookup operations. The collector could then store the data a user sends as the value and use a key that is somehow tied to the user, say the user's name or some other identifier. Unless care is taken, however, such data structures could prove insufficient -- data that has been deleted could still leave a trace in the memory implementing the data structure. A pathological example is a dictionary that, to indicate that a certain key-value pair has been deleted, simply appends the string ``deleted'' to the value -- note that such a dictionary can still provide valid insertion, deletion and lookup. While actual implementations of dictionaries do not explicitly maintain ``deleted'' data in this manner, no special care is usually taken to ensure that information about such data does not persist, for instance, in the metadata.

% Consider, for instance, the following simple (and rather inefficient) way of implementing such a system. The company stores all of its data -- the user id's along with the corresponding data -- in an array, sorted by user id's (which, say, are numbers). When a new user asks to store data, this is inserted in the appropriate position in the array. Lookup is done by a binary search on the array, and deletion by overwriting the respective entry in the array with a special null symbol. 
% Note that while in several cases such a trace may be considered benign, it does in fact leak meaningful information about the data that wa deleted, such as how the index assigned to the user compares to those still in the system. Of course, in this case the issue is easily remedied by reconstructing the whole array after each deletion, but 

The simplest solution to this problem is to use an implementation of such a data structure that explicitly ensures that the above issue does not occur. History independent data structures, introduced by Micciancio~\cite{Mic97}, are implementations of data structures that are such that their representation in memory at any point in time reveals only the ``content'' of the data structure at that point, and not the history of the operations (insertion, deletion, etc.) performed that resulted in this content. In particular, this implies that an insertion of some data into such a data structure followed by a deletion of the same data would essentially have the same effect on memory as not having done either in the first place.

More formally, these are described as follows by Naor and Teague~\cite{NT01}. Any abstract data structure supports a set of operations, each of which, without loss of generality, returns a result (which may be null). Two sequences of operations $S_1$ and $S_2$ are said to produce the same \emph{content} if for any sequence $T$, the results returned by $T$ with the prefix $S_1$ is the same as the results with the prefix $S_2$. An implementation of a data structure takes descriptions of operations and returns the corresponding results, storing what it needs to in its memory. Naor and Teague then define history independence as a property of how this memory is managed by the implementation.

\begin{definition}
    \label{def:hist-ind}
    An implementation of a data structure is \emph{history independent} if any two sequences of operations that produce the same content also induce the same distribution on the memory representation under the implementation.
\end{definition}

If data is stored by the data collector in a history independent data structure that supports deletion, then being deletion-compliant becomes a lot simpler, as the property of history independence helps satisfy much of the requirements. In our case, we will make us of a history-independent dictionary, a data structure defined as follows. History-independent dictionaries were studied and constructed by Naor and Teague~\cite{NT01}.

\begin{definition}
    \label{def:dict}
    A dictionary is a data structure that stores key-value pairs $(key,value)$, and supports the following operations: %\pnote{This definition isn't quite formal.}
    \begin{itemize}[topsep=0.4em,itemsep=0em]
      \item $\alg{Insert}(key, value)$: stores the value $value$ under the key $key$. If the key is already in use, does nothing.
      \item $\alg{Lookup}(key)$: returns the value previously stored under the key $key$. If there is no such key, returns $\bot$.
      \item $\alg{Delete}(key)$: deletes the key-value pair stored under the key $key$. If there is no such key, does nothing.
    \end{itemize}
\end{definition}

Our current approach, then, is to implement the data storage using a history-independent dictionary as follows. When a user sends a $(key,value)$ pair to be stored, we insert it into the dictionary. When a user asks for the $value$ stored under a key $key$, we look it up in the dictionary and return it. When a user asks to delete whatever is stored under the key $key$, we delete this from the dictionary. And the deletion, due to history-independence, would remove all traces of anything that was deleted.

There is, however, still an issue that arises from the fact that the channels in our model are not authenticated. Without authentication, any entity that knows a user's $key$ could use it to learn from the data collector whether this user has any data stored with it. And later if the user asks for deletion, the data might be deleted from the memory of the collector, but the other entity has already learnt it, which it could not have done in an ideal execution. In order to deal with this, the data collector has to implement some form of authentication; and further, this authentication, as seen by the above example, has to use some randomness (or perhaps pseudorandomness) generated on the data collector's side. We implement the simplest form of authentication that suffices for this, and the resulting data collector $\histind$ is described informally as follows.

\begin{datacol}
    The data collector $\histind$ maintains a history-independent dictionary $\alg{Dict}$. Below, any information that is not required explicitly to be stored is erased as soon as each message is processed. It waits to receive a message from a user that is parsed as $(\inst,auth,key,value)$, where either of $auth$ or $value$ could be $\bot$, and processed as follows:
\begin{itemize}
  \item If $\inst = \ins{insert}$,
    \begin{itemize}
      \item it samples a new random authentication string $auth$.
      \item it runs $\alg{Dict}.\alg{Insert}((key,auth), value)$ to add $value$ to the dictionary under the key $(key,auth)$.
      \item it responds to the message with the string $auth$.
    \end{itemize}
  \item If $\inst = \ins{lookup}$,
    \begin{itemize}
      \item it recovers the $value$ stored under the key $(key,auth)$ by running $\alg{Dict}.\alg{Lookup}((key,auth))$, and responds with $value$ (if the key is not in use, $value$ will be $\bot$).
    \end{itemize}
  \item If $\inst = \ins{delete}$,
    \begin{itemize}
      \item it deletes any entry under the key $(key,auth)$ by running $\alg{Dict}.\alg{Delete}((key,auth))$.
    \end{itemize}
\end{itemize}

\end{datacol}

The formal description of the above data collector in our framework, along with the associated protocols $\pi$ and $\pi_D$, is presented in \cref{fig:hist-ind-model}. We show that this collector is indeed statistically deletion-compliant.

\ifsubmission
\else
\begin{figure}[h!]\small
    \begin{mdframed}
        \vspace{0.5em}
        
        We model the collector's behavior by a tuple $(\histind,\pi,\pi_D)$. The ITM $\histind$ maintains a history-independent dictionary $\alg{Dict}$ and, given security parameter $\secp$, acts as follows upon activation. Any information below that is not required explicitly to be stored is erased as soon as each message is processed (and just before possibly responding to it and halting or being deactivated).
        \begin{itemize}
          \item It checks whether there is an ITI that it controls whose output tape it has not read so far. If not, it halts.
          \item Otherwise, let $M$ be the first ITI in some arbitrary order whose output has not been read so far. $\histind$ reads this output.
          \item If this output is of the form $(\mathsf{insert}, key, value)$,
            \begin{itemize}
              \item It samples a string $auth \gets \bset^\secp$ uniformly at random.
              \item It runs $\alg{Dict}.\alg{Insert}((key,auth), value)$.
              \item It writes $auth$ onto the input tape of $M$.
            \end{itemize}
          \item If this output is of the form $(\mathsf{lookup}, key, auth)$,
            \begin{itemize}
              \item It runs $\alg{Dict}.\alg{Lookup}((key,auth))$ to get $value$ and writes $value$ on the input tape of $M$.
            \end{itemize}
          \item If this output is of the form $(\mathsf{delete}, key, auth)$,
            \begin{itemize}
              \item It runs $\alg{Dict}.\alg{Delete}((key,auth))$.
            \end{itemize}
        \end{itemize}

        \vspace{0.5em}

        The protocol $\pi$ runs between two parties, the server and the client, and proceeds as follows:
        \begin{itemize}
          \item The client takes as input a tuple $(\inst,key,auth,value)$, where either of $auth$ or $value$ may be empty.
          \item The client sends its input as a message to the server.
          \item The server verifies that $\inst$ is either $\mathsf{insert}$ or $\mathsf{lookup}$.
          \item If $\inst = \mathsf{insert}$ and $value$ is not empty, or if ($\inst = \mathsf{lookup}$ and $auth$ is not empty),
            \begin{itemize}
              \item The server writes the message $(\mathsf{insert},key,value)$ onto its output tape.
              \item The next time it is activated, the server sends the contents of its input tape as a message to the client. We refer to the content of this message as $auth'$.
              \item The client writes the message it receives onto its output tape.
              \item The deletion token for this instance is set to be $(key,auth')$.
            \end{itemize}
          \item If $\inst = \mathsf{lookup}$ and $auth$ is not empty,
            \begin{itemize}
              \item The server writes the message $(\mathsf{lookup},key,auth)$ onto its output tape.
              \item The next time it is activated, the server sends the contents of its input tape as a message to the client.
              \item The client writes the message it receives onto its output tape.
              \item The deletion token for this instance is set to be $\bot$.
            \end{itemize}
        \end{itemize}

        The protocol $\pi_D$ runs between two parties, the server and the client, and proceeds as follows:
        \begin{itemize}
          \item The client takes as input a tuple $(key,auth)$.
          \item The client sends a message $(\mathsf{delete},key,auth)$ to the server.
          \item The server writes the received message onto its output tape.
        \end{itemize}
        
        \vspace{0.5em}
    \end{mdframed}

    \vspace{0.5em}
    \caption{Data Storage Using History-Independence}
    \label{fig:hist-ind-model}
\end{figure}
\fi

\begin{restatable}{theorem}{thmhistind}
    \label{thm:hist-ind}
    The data collector $(\histind,\pi,\pi_D)$ in \cref{fig:hist-ind-model} is statistically deletion-compliant.
\end{restatable}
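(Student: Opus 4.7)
The plan is to show statistical deletion-compliance directly by arguing that the real and ideal executions produce statistically close joint distributions of $(\state_\X,\view_\Z)$. The key observations driving the proof are: (i) the authentication string $auth$ that $\X$ returns upon each insertion is uniform in $\bset^\secp$ and, crucially, information-theoretically hidden from $\Z$, since the framework forbids $\Y$ from writing on $\Z$'s tapes; (ii) history-independence of $\dict$ guarantees that its memory representation is a function only of the content currently stored, not of the history of insertions and deletions; and (iii) the Terminate phase ensures that every insertion performed by $\Y$ is eventually paired with a matching deletion via $\pi_D$ using the deletion token $(key,auth)$.

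The first formal step will be to define a bad event $B$: the event that some message of the form $(\mathsf{lookup},key,auth,\cdot)$ or $(\mathsf{delete},key,auth)$ sent by $\Z$ (or an ITI it controls) to $\X$ has its $(key,auth)$ coinciding with a $(key',auth')$ that $\X$ created in response to one of $\Y$'s insertions. Since $\Z$ is PPT it sends at most $\poly(\secp)$ such messages, and each $auth$ value sampled by $\X$ for $\Y$ is uniform in $\bset^\secp$ and independent of $\Z$'s view; a union bound then gives $\Pr[B]=\negl(\secp)$.

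Conditioned on $\neg B$, the set of dictionary entries that $\X$ touches while processing $\Z$'s messages is disjoint from the set of entries created by $\Y$'s insertions. Consequently every response $\X$ sends to $\Z$ is a function only of $\Z$'s own prior messages and of fresh randomness independent of $\Y$; a straightforward induction on activations then gives that $\view_\Z$ in the real execution, conditioned on $\neg B$, is distributed identically to $\view_\Z$ in the ideal execution. After the Terminate phase, every instance of $\pi$ initiated by $\Y$ has been followed by its corresponding $\pi_D$ instance: insertions carrying the deletion token $(key,auth)$ that correctly removes the entry from $\dict$, and lookups carrying the trivial token $\bot$ that leaves $\dict$ unchanged (and had added nothing in the first place). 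Thus the final content of $\dict$ agrees exactly with its content in the ideal execution, and by history-independence (\cref{def:hist-ind}) the memory representation $\state_\X$ is identically distributed as well.

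The main obstacle will be to execute the coupling between the real and ideal executions carefully enough to conclude that the \emph{joint} distribution of $(\state_\X,\view_\Z)$, not just each marginal, is preserved under $\neg B$. This requires an induction on activation steps that simultaneously tracks $\Z$'s internal state, the set of $(key,auth)$ keys it has used so far, and the content of $\dict$ restricted to those keys, invoking disjointness from $\Y$'s keys at each step and history-independence only once at termination. Combining the resulting exact identity of distributions on $\neg B$ with the negligible bound on $\Pr[B]$ yields the required statistical indistinguishability.
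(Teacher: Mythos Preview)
Your approach is essentially the same as the paper's: both argue that, except for a negligible-probability collision event in the $auth$ strings, the view of $\Z$ is identical in the two executions, and then invoke history-independence of $\dict$ together with the fact that $\Y$ deletes every entry it inserted to conclude that $\state_\X$ is also identical. The paper organizes this as two separate claims (closeness of $\view_\Z$, then closeness of $\state_\X$ conditioned on a fixed $\view_\Z$) combined via a conditional-statistical-distance lemma, whereas you phrase it as a single bad event $B$ plus a coupling on $\neg B$; these are equivalent packagings of the same argument.

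One small imprecision to fix: your event $B$ covers only $\mathsf{lookup}$ and $\mathsf{delete}$ messages from $\Z$, but an $\mathsf{insert}$ initiated by $\Z$ can also touch one of $\Y$'s dictionary keys if the fresh $auth$ that $\X$ samples for it happens to collide with an $auth$ previously given to $\Y$ (in which case the insert becomes a no-op in the real execution but not in the ideal). So the conclusion ``the set of entries $\X$ touches while processing $\Z$'s messages is disjoint from $\Y$'s entries'' does not follow from $\neg B$ as you have defined it. The paper's proof accounts for this by taking the collision event to range over \emph{all} instances of $\pi$ and $\pi_D$ initiated by $\Z$; simply enlarge $B$ accordingly (the probability bound is unchanged, still $\poly(\secp)/2^\secp$ by a union bound over freshly sampled $auth$ values).
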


We present the proof of \cref{thm:hist-ind} in \ifsubmission the full version \else\cref{sec:proofs-hist-ind}\fi. The approach is to first observe that, due to the authentication mechanism, the probability that the environment $\Z$ will ever see any data that was stored by the deletion-requester $\Y$ is negligible in the security parameter. If this never happens, then the view of $\Z$ in the real and ideal executions (where $\Y$ does not store anything) is identical. And when the view is identical, the sequence of operations performed by $\Z$ in the two executions are also identical. Thus, since whatever $\Y$ asks to store it also asks to delete, the state of $\X$ at the end of the execution, due to its use of a history-independent dictionary, depends only on the operations of $\Z$, which are now the same in the real and ideal executions.

In summary, the lessons we learn from this process of constructing a deletion-compliant data collector for data storage are as follow:
\begin{enumerate}
  \item Attention has to be paid to the implementation of the data structures used, which needs to satisfy some notion of independence from deleted data.
  \item Authentication that involves some form of hardness or randomness from the data collector's side has to be employed even to support simple operations.
\end{enumerate}

\subsubsection{Outsourcing Data Storage}
\label{sec:hist-ind-2}

Next, we present a data collector that outsources its storage to an external system, maintaining only bookkeeping information in its own memory. As it actively reveals users' data to this external system, such a data collector cannot be deletion-compliant. However, we show that history-independence can be used to make it \emph{conditionally} deletion-compliant. Again, it turns out to be crucial to ensure that an authentication mechanism is used, for reasons similar to that for the previously constructed data collector. This data collector $\histind_2$ is informally described as follows, and is quite similar to $\histind$.

% For notational convenience, we will augment the dictionary with an $\alg{Append}$ operation -- $\alg{Append}(key,value)$ appends $value$ to the data stored under the key $key$, an operation that can be performed by combining lookup, deletion and insertion (if the $key$ is currently not present, it just inserts it).

\begin{datacol}
    The data collector $\histind_2$ maintains a history-independent dictionary $\alg{Dict}$, and interacts with another collector $\W$ that uses the same syntax for messages as the collector $\histind$ from earlier in this section. It waits to receive a message that is parsed as $(\inst,auth,key,value)$, where either of $auth$ or $value$ could be $\bot$, and processed as follows:
    \begin{itemize}
      \item If $\inst = \ins{insert}$,
        \begin{itemize}
          \item It samples a new authentication string $auth$ and a new ``external key'' $exkey$ at random.
          \item It sends the message $(\ins{insert},exkey,value)$ to $\W$ and waits to receive a response $exauth$.
          \item It runs $\alg{Dict}.\alg{Insert}((key,auth), (exkey,exauth))$ to add $(exkey,exauth)$ to the dictionary under the key $(key,auth)$.
          \item It responds to the initial message with the string $auth$.
        \end{itemize}
      \item If $\inst = \ins{lookup}$,
        \begin{itemize}
          \item It recovers the $(exkey,exauth)$ stored under the key $(key,auth)$ by running $\alg{Dict}.\alg{Lookup}((key,auth))$. If the lookup fails, it responds with $\bot$.
          \item It sends the message $(\ins{lookup},exkey,exauth)$ to $\W$ and waits to receive a response $value$.
          \item It responds to the initial message with $value$.
        \end{itemize}
      \item If $\inst = \ins{delete}$,
        \begin{itemize}
          \item It recovers the $(exkey,exauth)$ stored under the key $(key,auth)$ by running $\alg{Dict}.\alg{Lookup}((key,auth))$. If the lookup fails, it halts.
          \item If not, it sends the message $(\ins{delete},exkey,exauth)$ to $\W$.
          \item It deletes any entry under the key $(key,auth)$ by running $\alg{Dict}.\alg{Delete}((key,auth))$.
        \end{itemize}
    \end{itemize}
\end{datacol}        
    
The formal description of the above data collector in our framework, along with the associated protocols $\pi$ and $\pi_D$, is presented in \cref{fig:hist-ind-2-model}. We show that this collector is conditionally deletion-compliant.

\ifsubmission
\else
\begin{figure}[h!]\small
    \begin{mdframed}
        \vspace{0.5em}
        
        We model the collector's behavior by a tuple $(\histind_2,\pi,\pi_D)$, where $\pi$ and $\pi_D$ are as described in \cref{fig:hist-ind-model}. The environment is denoted by $\Z$, and supports $\histind_2$ instantiating instances of protocols $\pi$ and $\pi_D$ with it, with $\histind_2$ on the client side. The ITM $\histind$ maintains a history-independent dictionary $\alg{Dict}$ and, given security parameter $\secp$, acts as follows upon activation. Any information below that is not required explicitly to be stored is erased as soon as each message is processed (and just before possibly responding to it and halting or being deactivated).
        \begin{itemize}
          \item It checks whether there is an ITI that it controls whose output tape it has not read so far. If not, it halts.
          \item Otherwise, let $M$ be the first ITI in some arbitrary order whose output has not been read so far. $\histind$ reads this output.
          \item If this output is of the form $(\mathsf{insert}, key, value)$,
            \begin{itemize}
              \item It samples strings $auth, exkey \gets \bset^\secp$ uniformly at random.
              \item It starts a new session of $\pi$ with $\Z$, and on the input tape of the associated client-side ITI $M'$, it writes $(\ins{insert}, exkey, \bot, value)$.
              \item The next time it is activated, it reads $exauth$ off the output tape of $M'$.
              \item It runs $\alg{Dict}.\alg{Insert}((key,auth), (exkey,exauth))$.
              \item It writes $auth$ onto the input tape of $M$.
            \end{itemize}
          \item If this output is of the form $(\mathsf{lookup}, key, auth)$,
            \begin{itemize}
              \item It runs $\alg{Dict}.\alg{Lookup}((key,auth))$ to get $(exkey,exauth)$. If this lookup fails, it writes $\bot$ on the input tape of $M$ and halts.
              \item Otherwise, it starts a new session of $\pi$ with $\Z$, and on the input tape of the associated client-side ITI $M'$, it writes $(\ins{lookup}, exkey, exauth, \bot)$.
              \item The next time it is activated, it reads $value$ off the output tape of $M'$.
              \item It writes $value$ on the input tape of $M$.
            \end{itemize}
          \item If this output is of the form $(\mathsf{delete}, key, auth)$,
            \begin{itemize}
              \item It runs $\alg{Dict}.\alg{Lookup}((key,auth))$ to get $(exkey,exauth)$. If this lookup fails, it halts.
              \item Otherwise, it starts a new session of $\pi$ with $\Z$, and on the input tape of the associated client-side ITI $M'$, it writes $(\ins{delete}, exkey, exauth, \bot)$.
              \item It runs $\alg{Dict}.\alg{Delete}((key,auth))$.
            \end{itemize}
        \end{itemize}

        \vspace{0.5em}
    \end{mdframed}

    \vspace{0.5em}
    \caption{Data Collector outsourcing its storage while using History-Independent Data Structures}
    \label{fig:hist-ind-2-model}
\end{figure}
\fi

\begin{restatable}{theorem}{thmhistindtwo}
    \label{thm:hist-ind-2}
    Following terminology from \cref{fig:hist-ind-2-model}, the data collector $(\histind_2,\pi,\pi_D)$ is conditionally statistically deletion-compliant in the presence of $(\pi,\pi_D)$.
\end{restatable}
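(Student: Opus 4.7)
Throughout, write $\X$ for the data collector $\histind_2$. The plan is to reduce conditional deletion-compliance of $(\X,\pi,\pi_D)$ to the hypothesized auxiliary-deletion-compliance of $\Z$, combined with history-independence of the dictionary and an authentication argument as in \cref{thm:hist-ind}. Because $\X$ samples $auth \gets \bset^\secp$ freshly on each insert and returns it only to the initiator of that session, a union bound over the polynomially many protocol instances $\Z$ initiates shows that, except with probability negligible in $\secp$, $\Z$ never issues a lookup or delete to $\X$ using one of $\Y$'s authentication strings. Conditioning on this good event, the dictionary entries populated by $\Y$ and by $\Z$ are disjoint and never interact, so $\Z$'s operations on $\X$ do not depend on the stored values of $\Y$'s entries.

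The core step is to recast $\erphi$ as an auxiliary execution of $\Z$ in the sense of \cref{def:compliance-aux}. I will construct an auxiliary environment $\Z'$ that internally simulates both $\X$ and $\Y$, together with an auxiliary deletion-requester $\Y'$ that performs the main $\pi,\pi_D$ sessions $\X$ would initiate with $\Z$ on behalf of $\Y$. Concretely, when the outer $\Z$ initiates an auxiliary $\pi$ or $\pi_D$ session with $\Z'$, $\Z'$ feeds it to its simulated $\X$ and, whenever the latter must outsource a $\Z$-triggered request, $\Z'$ itself initiates a main $\pi$ or $\pi_D$ session with $\Z$; for outsourcing triggered by a simulated $\Y$-request, $\Z'$ instead writes instructions onto $\Y'$'s incoming tape, and $\Y'$ initiates the corresponding main session with $\Z$, locally storing the external auth string $exauth$ for later reuse in the matching delete. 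Since $\Z'$ cannot read $\Y'$'s tapes, its simulated dictionary uses a placeholder in place of $exauth$; this is harmless because the corresponding entries are removed during the Terminate phase. At the end, $\Z'$ writes the simulated $\state_\X$ onto its output tape so that it appears in $\view_{\Z'}$.

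By construction, the auxiliary real execution perfectly simulates $\erphi$, so reading $\state_\X$ off $\view_{\Z'}$ together with $\state_\Z$ recovers the joint $(\state_\X^{\erphi},\state_\Z^{\erphi})$. In the auxiliary ideal execution, $\Y'$ is silent, so no $\Y$-triggered outsourcing reaches $\Z$, and $\Z$'s received messages match those in $\eiphi$, giving $\state_\Z$ in the aux ideal equal to $\state_\Z^{\eiphi}$. History-independence of the dictionary then yields the simulated $\state_\X$ in the aux ideal identically distributed to $\state_\X^{\eiphi}$, since in both cases the final dictionary content is precisely $\Z$'s entries. Invoking the assumed auxiliary-deletion-compliance of $(\Z,\pi,\pi_D)$ in the presence of $(\pi,\pi_D)$ on the constructed pair $(\Z',\Y')$ yields statistical closeness of $(\state_\Z,\view_{\Z'})$ between the two auxiliary executions, which projects to the desired closeness of $(\state_\X^{\erphi},\state_\Z^{\erphi})$ and $(\state_\X^{\eiphi},\state_\Z^{\eiphi})$. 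The main subtlety will be ensuring that $\Z'$ is well-defined and runs in polynomial time despite never observing $\Z$'s responses to $\Y'$-initiated sessions; this is handled by the placeholder trick and by letting $\Y'$ own all tracking of the real $exauth$ values, so that $\Z$'s external view in the aux real execution is exactly its view in $\erphi$.
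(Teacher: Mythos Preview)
Your overall strategy—recast the execution as an auxiliary execution with $\Z$ playing the role of data collector and then invoke its assumed auxiliary-deletion-compliance—is exactly the paper's, but your decomposition puts the simulation of $\Y$ on the wrong side of the barrier. You let the auxiliary environment $\Z'$ simulate $\Y$ and use $\Y'$ only as a relay for $\Y$-triggered outsourcing. The framework, however, forbids $\Y'$ from communicating anything back to $\Z'$, and your placeholder trick covers only the missing $exauth$ strings; it does not cover the \emph{values} that $\Z$ returns on outsourced $\ins{lookup}$ requests. A general deletion-requester $\Y$ is permitted to perform a $\ins{lookup}$ on data it previously inserted and to let its subsequent behaviour (which further inserts to issue, and with what content) depend on the response; since $\Z$ is adversarial, that response is not predictable by $\Z'$. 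Once $\Z'$'s internal copy of $\Y$ diverges from the real one, $\Y'$ will be instructed to send a different sequence of messages to $\Z$ than $\X$ actually sends in $\erphi$, so your claim that ``the auxiliary real execution perfectly simulates $\erphi$'' fails—in particular, $\state_\Z$ in the two experiments need not agree.

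The paper avoids this by placing the simulation of $\Y$ inside the auxiliary deletion-requester rather than inside the auxiliary environment. It first uses history-independence (together with the no-$auth$-collision event you also condition on) to split $\X$ into two non-interacting copies $\X_{1,\Z}$ and $\X_{1,\Y}$, each with its own dictionary, handling $\Z$'s and $\Y$'s sessions respectively. The auxiliary environment is then just $\X_{1,\Z}$, while the auxiliary deletion-requester $\Y_2$ bundles together $\Y$ and $\X_{1,\Y}$. Because $\Y_2$ itself initiates all $\Y$-triggered outsourcing to $\Z$ and hence sees every response, it can drive its internal simulation of $\Y$ faithfully even when $\Y$ is adaptive in this sense. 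Your argument is repaired by making exactly this change: move the simulation of $\Y$ (together with the $\Y$-facing slice of $\X$'s dictionary) from $\Z'$ into $\Y'$.
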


The proof of this theorem is presented in \ifsubmission the full version\else\cref{sec:proofs-hist-ind}\fi. The approach is again to first condition on $\Z$ not being able to guess any of the authentication strings given to $\Y$, an event that happens with overwhelming probability. After this, we show that the history-independence of the dictionary used by $\X$ can be used to effectively split $\X$ into two parts -- one that handles protocols with $\Y$, and the other than handles protocols with $\Z$ -- without affecting what essentially happens in the execution. At this point, we switch to looking at the execution as an auxiliary execution with $\Z$ as the data collector, the first part of $\X$ as the deletion-requester, and the second part as the environment, and apply the auxiliary deletion-compliance of $\Z$ to show that the states of $\Z$ and $\X$ are unchanged if $\Y$ is replaced with a silent $\Y_0$.

%%% Local Variables:
%%% mode: latex
%%% TeX-master: "main"
%%% End:

\subsection{Data Summarization and Differentially Private Algorithms}
\label{sec:dp}

Here, we demonstrate another approach to constructing deletion-compliant data collectors that perform certain tasks. Consider the case of research organization that wishes to compute compile data about a population and then compute certain statistics on it. It receives such data from a number of volunteers and, once it has enough data, compiles a ``summary'' of the data collected. Later, if some volunteer asks for its data to be deleted, in order to be deletion-compliant, the organization would, in general, also have to modify the summary in order to exclude this volunteer's data which, depending on the summarization procedure used, might not be possible to do without recomputing the summary on the remaining data. We observe, however, that if the computation of the summary satisfies certain privacy properties to begin with, then in fact deletion-compliance can be achieved without altering the summary at all, as long as not too much of the data is requested to be deleted.

The notion of private computation we use in this respect is differential privacy~\cite{DMNS06}, which is defined as follows.

\begin{definition}[Differential Privacy~\cite{DMNS06}]
    \label{def:dp}
    Let $\eps:\Nat\ra [0,1]$ be a function. An algorithm $\alg{A}$ that, for $n \in \Nat$, takes as input $n$-tuples $x = (x_1, \dots, x_n)$ from some domain is said to be $\eps$-differentially private if, for all $n$, for any two $n$-tuples $x$ and $x'$ that differ in at most one location, and any set $S$ in the output space of $\alg{A}$, we have:
    \begin{align*}
      \pr{\algA(x)\in S} \leq e^{\eps(n)} \cdot \pr{\algA(x')\in S}
    \end{align*}
\end{definition}

The idea behind differential privacy is that the output of a private algorithm does not reveal whether any specific value in the tuple was present in the input or not. While there are crucial reasons for using the above condition on the ratios of probabilities in the definition of differential privacy rather than a bound on the statistical distance, for our purposes the following implication is sufficient.

\begin{fact}
    \label{fact:dp}
    If an algorithm $\algA$ is $\eps$-differentially private, then, for any two $n$-tuples $x$ and $x'$ that differ in at most one location, the statistical distance between the distributions of $\algA(x)$ and $\algA(x')$ is at most $\eps(n)$.
\end{fact}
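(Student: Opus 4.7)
The plan is to establish the standard reduction from the multiplicative guarantee of $\eps$-differential privacy to an additive bound on statistical distance. First I would unpack the definition of statistical distance as $\sd(\algA(x), \algA(x')) = \sup_S \bigl(\Pr[\algA(x) \in S] - \Pr[\algA(x') \in S]\bigr)$, and, for any $\delta > 0$, fix a set $S^*$ that comes within $\delta$ of attaining the supremum. Writing $p = \Pr[\algA(x) \in S^*]$ and $q = \Pr[\algA(x') \in S^*]$, the hypothesis that $x$ and $x'$ differ in at most one coordinate together with $\eps$-differential privacy yields $p \leq e^{\eps(n)}\, q$, and hence $p - q \leq (e^{\eps(n)} - 1)\, q$.

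The next step is to sharpen this using the fact that $p \leq 1$, which gives $q \leq 1 - (p - q)$. Substituting the inequality $p - q \geq \sd(\algA(x),\algA(x')) - \delta$ into this bound yields $q \leq 1 - (\sd(\algA(x),\algA(x')) - \delta)$, and combining with the previous display gives
\[
  \sd(\algA(x),\algA(x')) - \delta \;\leq\; (e^{\eps(n)} - 1)\bigl(1 - (\sd(\algA(x),\algA(x')) - \delta)\bigr).
\]
Rearranging and sending $\delta \to 0$ produces the bound $\sd(\algA(x), \algA(x')) \leq 1 - e^{-\eps(n)}$. Finally, I would invoke the elementary inequality $1 - e^{-t} \leq t$ for all $t \geq 0$ (equivalent to the standard fact $e^{-t} \geq 1 - t$) to conclude $\sd(\algA(x), \algA(x')) \leq \eps(n)$, as desired.

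The only (very minor) subtlety is that the naive route — bounding $q$ by $1$ and thereby concluding $\sd \leq e^{\eps(n)} - 1$ — is strictly weaker than the target bound, since $e^{\eps(n)} - 1 > \eps(n)$ whenever $\eps(n) > 0$. Using the tighter observation that $q \leq 1 - \sd$ before applying the one-line calculus inequality $1 - e^{-t} \leq t$ is exactly what closes this gap. Everything else is routine bookkeeping with the definitions of supremum and statistical distance.
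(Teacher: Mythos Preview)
Your argument is correct. The paper itself does not supply a proof of this fact; it is simply stated as a well-known consequence of the definition of differential privacy, so there is nothing to compare against at the level of technique.

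For what it is worth, your route via the bound $q \leq 1 - (p-q)$ and the rearrangement to $\sd \leq 1 - e^{-\eps(n)}$ is the clean way to get exactly $\eps(n)$ rather than the looser $e^{\eps(n)} - 1$. One small simplification: since the paper's definition of differential privacy (\cref{def:dp}) takes $\eps : \Nat \to [0,1]$, the distinction between $\eps(n)$, $1 - e^{-\eps(n)}$, and $e^{\eps(n)} - 1$ is only a constant factor and any of them would suffice for the downstream use in \cref{thm:dp}; but your tighter bound is of course the statement as written. Also, because the output space of a PPT algorithm is finite for each fixed input length, the supremum in the statistical distance is actually attained and you could dispense with the $\delta$ altogether, though carrying it does no harm.
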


We use the above guarantee to design a data collector for the aforementioned task where the summary computed is more-or-less the same in both the real and ideal executions if the deletion-requester in the real execution enters and deletes at most one data point. The central idea is to simply compute the summary in a differentially private manner (rather, the data collector we construct is deletion-compliant if the computation of the summary is differentially private).

However, this turns out to not be sufficient for a couple of reasons. The first is again the issue of authentication. While, unlike the data storage example earlier, here the data collector does not provide a lookup mechanism that could leak one user's data to another, lack of authentication would enable other attacks. For instance, if there were no authentication, then $\Z$ could send to $\X$ some data that it knows $\Y$ will try to delete later, and this deletion would happen only in the real execution. The same authentication mechanism as before (here represented by the random choice of the key that is sampled by the collector) handles such issues.

The second reason is that the point at which the summary is computed has to be decided carefully. For instance, suppose there is some $n$ such that the collector always computes the summary once it receives $n$ data points. Then, if $\Z$ enters exactly $(n-1)$ data points and $\Y$ enter a single point, then the summary would be computed in the real execution but not in the ideal. For this reason, we need to randomize the point at which the summary is computed, by picking $n$ at random. However, care is to be taken to always compute the summary a subset of points of the same size always, as a differentially private mechanism is allowed to leak the number of points in its input.

Accounting for all this, we design the data collector described informally below and formally within our framework in \cref{fig:dp-model}.

\begin{datacol}
    The data collector $\diffp$ maintains a history-independent dictionary $\alg{Dict}$. Given security parameter $\secp$, it first samples an integer $thr$ (the threshold) uniformly at random from $[\secp+1,2\secp]$, initiates a counter $count$ to $0$, and a boolean flag $summarized$ to $\mathsf{false}$. It waits to receive a message from a user that is either a data point $x$ or a deletion request, and processes it as follows: %\pnote{Should point out that if we don't pick the $n$ at random like this, then information is leaked. Is this a realistic concern, though? Maybe the requirement we are placing on this is stronger than necessary for the actual real world.}
    \begin{itemize}
      \item If it receives a data point $x$,
        \begin{itemize}
          \item it samples a new key $key$ uniformly at random from $\bset^\secp$.
          \item it runs $\alg{Dict}.\alg{Insert}(key, x)$ to add $x$ to the dictionary under the key $key$.
          \item if a new entry was actually inserted by the above operation, it increments $count$ by $1$, and if, further, $summarized$ is $\mathsf{true}$, it also increments $thr$.
          \item if $count = thr$ and $summarized = \mathsf{false}$,
            \begin{itemize}
              \item pick a random subset $S$ of the values in $\alg{Dict}$ such that $|S| = \secp$. %\pnote{Note that if we have DP mechanism that works with edit distance rather than Hamming distance, we don't need this hack.}
              \item compute and store $summary$ as the output of $\alg{summarize}(S)$.
              \item set $summarized$ to $\mathsf{true}$.
            \end{itemize}
          \item it responds to the message with the string $key$.
        \end{itemize}
      \item If it receives a deletion request for a key $key$, it deletes any entry under the key $key$ by running $\alg{Dict}.\alg{Delete}(key)$. If an entry is actually deleted, it decrements $count$; and if, further, $summarized$ is $\mathsf{true}$, it also decrements $thr$.
    \end{itemize}
\end{datacol}

\ifsubmission
\else
\begin{figure}[h!]\small
    \begin{mdframed}
        \vspace{0.5em}
        
        We model the collector's behavior by a tuple $(\diffp,\pi,\pi_D)$ that uses an algorithm $\alg{summarize}$ that takes a dictionary as input and outputs a string. The ITM $\diffp$, given security parameter $\secp$, is initially set up as follows:
        \begin{itemize}
          \item It initializes a history-independent dictionary $\alg{Dict}$.
          \item It samples $thr$ uniformly at random from $[\secp+1,2\secp]$.
          \item It sets $count \gets 0$, $summarized \gets \mathsf{false}$, and $summary \gets \bot$.
        \end{itemize}
        Upon activation, it acts as follows. Any information below that is not required explicitly to be stored is erased as soon as each message is processed (and just before possibly responding to it and halting or being deactivated).
        \begin{itemize}
          \item It checks whether there is an ITI that it controls whose output tape it has not read so far. If not, it halts.
          \item Otherwise, let $M$ be the first ITI in some arbitrary order whose output has not been read so far. $\diffp$ reads this output.
          \item If this output is of the form $(\mathsf{insert}, value)$,
            \begin{itemize}
              \item It samples a string $key \gets \bset^\secp$ uniformly at random.
              \item It runs $\alg{Dict}.\alg{Insert}(key, value)$.
              \item If the above operation actually results in an insertion, increment $count$. If, additionally, $summarized = \mathsf{true}$, increment $thr$.
              \item If $count = thr$ and $summarized = \mathsf{false}$,
                \begin{itemize}
                  \item Sample a random subset $S$ of the values stored in $\alg{Dict}$ such that $|S| = \secp$.
                  \item Set $summary \gets \alg{summarize}(S)$.
                  \item Set $summarized \gets \mathsf{true}$.
                \end{itemize}
              \item It writes $key$ onto the input tape of $M$.
            \end{itemize}
          \item If this output is of the form $(\mathsf{delete}, key)$,
            \begin{itemize}
              \item It runs $\alg{Dict}.\alg{Delete}(key)$.
              \item If this operation actually resulted in a deletion,
                \begin{itemize}
                  \item Decrement $count$.
                  \item If $summarized = \mathsf{true}$, decrement $thr$.
                \end{itemize}
            \end{itemize}
        \end{itemize}

        \vspace{0.5em}

        The protocol $\pi$ runs between two parties, the server and the client, and proceeds as follows:
        \begin{itemize}
          \item The client takes as input a tuple $value$.
          \item The client sends its input as a message to the server.
          \item The server writes the message $(\mathsf{insert},value)$ onto its output tape.
          \item The next time it is activated, the server sends the contents of its input tape as a message to the client. We refer to the content of this message as $key$.
          \item The client writes the message it receives onto its output tape.
          \item The deletion token for this instance is set to be $key$.
        \end{itemize}
        
        The protocol $\pi_D$ runs between two parties, the server and the client, and proceeds as follows:
        \begin{itemize}
          \item The client takes as input a string $key$.
          \item The client sends its input as a message to the server.
          \item The server writes the message $(\mathsf{delete},key)$ onto its output tape.
        \end{itemize}
        
        \vspace{0.5em}
    \end{mdframed}

    \vspace{0.5em}
    \caption{Data Collector employing Differentially Private Summarization}
    \label{fig:dp-model}
\end{figure}
\fi

We show that, as long as there is at most one deletion request, the error in deletion-compliance of the above data collector is not much more than the error in the privacy of the summarization procedure. Note that, following \cref{thm:comp} regarding the composition of deletion requests, it may be inferred that for oblivious deletion-requesters making $k$ deletion requests (which includes the significant case of $k$ volunteers asking for their data to be deleted independently of each other), the deletion-compliance error is at most $k$ times the error in the statement below.

\begin{restatable}{theorem}{thmdp}
    \label{thm:dp}
    Suppose the mechanism $\alg{summarize}$ used by the data collector $\diffp$ in \cref{fig:dp-model} is $\eps$-differentially private. Then, $(\diffp,\pi,\pi_D)$ has $1$-representative statistical deletion-compliance error at most $(\eps+1/\secp+\poly(\secp)/2^{\secp})$. 
\end{restatable}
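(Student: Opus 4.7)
The plan is to couple the real and ideal executions on $\X$'s, $\Z$'s, and $\Y$'s randomness and show that the joint distribution of $(\state_\X, \view_\Z)$ agrees except on three low-probability or low-distance events, contributing the three terms of the bound. Throughout, a $1$-representative $\Y$ runs exactly one $\pi$-instance (an insert) and the corresponding $\pi_D$-instance (a delete executed in the terminate phase).

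First I will absorb the $\poly(\secp)/2^\secp$ term by conditioning on a ``no key collision'' event: none of the $\poly(\secp)$ $\secp$-bit keys that $\X$ samples fresh, or that $\Z$ supplies in its $(\ins{delete}, key)$ messages, ever coincide with another. A union bound handles this. Outside this bad event, $\Z$ cannot touch $\Y$'s entry in $\alg{Dict}$, $\X$ never replies to $\Z$ with $\Y$'s key, and $\X$'s responses to $\Z$'s $\pi$-instances can be coupled to be literally the same strings across real and ideal. Consequently, as a function of $\Z$'s random tape alone, its entire sequence of outgoing messages to $\X$ is identical in the two executions.

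Next I will analyze when summarization fires. Couple the initial threshold so it takes the same value $thr_0 \in [\secp+1,2\secp]$ in both runs. The real dictionary size at every moment is either equal to, or exactly one greater than, the ideal one (it is larger by one precisely while $\Y$'s entry is present), so the only value of $thr_0$ for which real crosses the threshold while ideal does not is $M_I + 1$, where $M_I$ is the maximum $\Z$-only dictionary size during the alive phase. Since $thr_0$ is uniform over $\secp$ values, this ``real summarizes, ideal does not'' event contributes at most $1/\secp$. In the complementary good case, either neither run summarizes (in which case the summary-related state $(summary,summarized,thr)$ remains at initial values in both), or both summarize --- possibly at different moments and on datasets $D_R, D_I$ of the common size $thr_0$ that differ in exactly one element ($\Y$'s value stands in place of one of $\Z$'s).

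Finally, on the subevent that both runs summarize I will invoke \cref{fact:dp}. The summary is $\alg{summarize}$ applied to a uniformly random $\secp$-subset of the size-$thr_0$ dictionary; coupling the random subset choice across the two runs (via the inclusion pattern on the $thr_0-1$ common elements), the two inputs to $\alg{summarize}$ either agree exactly or differ in one entry, so by \cref{fact:dp} the outputs have statistical distance at most $\eps(\secp)$. All remaining components of $\state_\X$ agree exactly in the good case: the dictionary itself coincides by history-independence once $\Y$'s insert is cancelled by its $\pi_D$ in the terminate phase, and $count$ as well as $thr$ (which after summarization tracks $count$, with the $\pm 1$ contributions of $\Y$'s insert and delete cancelling) come out equal. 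The main bookkeeping subtlety is the last point, since the moments at which summarization fires in the two runs need not coincide --- one must verify that the post-summarization updates to $thr$ realign both executions by the end of the terminate phase. Combining the three error sources by the triangle inequality yields the claimed $\eps+1/\secp+\poly(\secp)/2^\secp$.
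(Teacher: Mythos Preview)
Your decomposition into three error sources mirrors the paper's, and the handling of key collisions, $\view_\Z$, the history-independent dictionary, and the final realignment of $count$ and $thr$ is essentially correct. The gap is in the central step where you couple the threshold $thr_0$ to be the \emph{same} in real and ideal and then claim that when both runs summarize, the dictionaries $D_R$ and $D_I$ at their (possibly different) summarization moments ``differ in exactly one element ($\Y$'s value stands in place of one of $\Z$'s).'' This is false: between the moment real summarizes (when the $\Z$-only count equals $thr_0 - 1$) and the later moment ideal summarizes (when the $\Z$-only count first reaches $thr_0$), the environment $\Z$ may delete and reinsert arbitrarily. For instance, with $thr_0 = 3$: $\Z$ inserts $a,b$; $\Y$ inserts $y$ (real summarizes on $\{a,b,y\}$); $\Z$ deletes $a,b$ and inserts $c,d,e$ (ideal summarizes on $\{c,d,e\}$). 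Now $D_R$ and $D_I$ share \emph{no} elements, so no coupling of the random $\secp$-subsets makes the inputs to $\alg{summarize}$ neighbours, and differential privacy gives you nothing.

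The paper avoids this by \emph{not} coupling the thresholds identically. It compares the real execution with initial threshold $n$ to the ideal execution with initial threshold $n-1$ whenever real summarizes during the window $[i_\Y, i_{\Y,D}]$ (\cref{claim:dp-3}, case~4): with this shift, both executions summarize at the \emph{same} insertion index, so the two dictionaries at that common moment differ only by the presence of $\Y$'s single entry, and the $\eps$ bound from \cref{fact:dp} applies. The lone value of $n$ left unmatched by this $n \mapsto n-1$ bijection (the least $n$ with $f_R(n)$ in the window) contributes the $1/\secp$ term. Your argument can be repaired by adopting this threshold-shift bijection in place of the identity coupling; the rest of your bookkeeping (in particular the realignment of $thr$) goes through unchanged.
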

%\pnote{Say how the composition property lets us meaningfully extend this to $k$-representative $\Y$'s for $k$ that is not too large. Say anything about directly analysing $k$-representative error? The analysis might be complicated, but we should end up with a $\sqrt{k}\eps$ term if we use approximate DP.}

We prove \cref{thm:dp} in \ifsubmission the full version\else\cref{sec:proofs-dp}\fi. The ideas behind it are: \begin{enumerate*} [label=(\itshape\roman*\upshape)] \item due to the authentication mechanism, $\Z$ cannot interfere with the operations initiated by $\Y$; \item since the dictionary is history-independent, arguments from \cref{sec:hist-ind} may be used to show that it is the same in the real and ideal executions; \item since the point of computation of the summary is chosen at random, the probability that it is computed in the real execution but not in the ideal one is low; and \item since the summary is computed in a differentially private way, and the contents of the dictionary differ by at most one entry in the real and ideal executions, the distributions of the summary are similar. \end{enumerate*}

The lessons learnt from our construction here are:
\begin{enumerate}
  \item Usage of algorithms satisfying certain notions of privacy, such as differential privacy, could enable deletion-compliance without requiring additional action on the part of the data collector.
  \item Certain aspects of the data collector may need to be hidden or have some entropy, such as the exact point at which the summary is computed by $\diffp$. In this case, this essentially ensured that the real and ideal executions were treated similarly by the collector.
  \item Authentication is necessary even if there is no lookup mechanism where the collector explicitly reveals stored data.
\end{enumerate}

%%% Local Variables:
%%% mode: latex
%%% TeX-master: "main"
%%% End:

\subsection{Deletion in Machine Learning}
\label{sec:ml}

Finally, we note that existing notions and algorithms for data deletion in machine learning can be used to construct deletion-compliant data collectors that run perform such learning using the data they collect. Recently, Ginart et al~\cite{GGVZ19} defined a notion of data deletion for machine learning algorithms that can be used together with history-independent data structures to maintain a learned model while respecting data deletion requests. We first rephrase their definition in our terms as follows for ease of use.

We will be concerned with a learning algorithm $\alg{learn}$ that takes as input a dataset $D = \set{x_i}$ consisting of data entries $x_i$ indexed by $i \in [|D|]$, and outputs a hypothesis $h$ from some hypothesis space $H$. Suppose there is an algorithm $\alg{delete}$ that takes as input a dataset $D$, a hypothesis $h$, and an index $i$, and outputs another hypothesis from $H$. For any $i\in [|D|]$, let $D_{-i}$ denote the dataset obtained by removing the $i^{\text{th}}$ entry.

\begin{definition}[\cite{GGVZ19}]
    \label{def:ml}
    The algorithm $\alg{delete}$ is a \emph{data deletion operation} for $\alg{learn}$ if, for any dataset $D$ and index $i \in [|D|]$, the outputs of $\alg{learn}(D_{-i})$ and $\alg{delete}(D,\alg{learn}(D),i)$ are identically distributed.
\end{definition}

Given that $\alg{delete}$ is also given $D$ in its input, there is always the trivial deletion operation of retraining the model from scratch using $D_{-i}$, but the hope is that in many cases it is possible to delete more efficiently than this. And Ginart et al~\cite{GGVZ19} show that this is indeed possible in certain settings such as $k$-means clustering.

We present below (and formally in \cref{fig:ml-model}) an example of a data collector that makes use of any learning algorithm with such a deletion operation to maintain a learned model while allowing the data used in its training to be deleted. For simplicity, we assume that both the learning and deletion algorithms work with datasets represented implicitly by dictionaries -- the data is represented as the set of $value$'s stored in the dictionary, and the $key$'s of the dictionary are used as a proxy for the index $i$ above. Note that this can be done without loss of generality, as either algorithm could simply start by going through all the $key$'s in the dictionary in some fixed order, and writing down the values as rows of the dataset. The data collector operates very similarly to that from \cref{sec:dp}, but this time there are no concerns about the size of the dataset being revealed, as the deletion operation takes care of this.

% Towards this, we assume the existence of a history-independent implementation of an indexed set that supports insertion and deletion \pnote{Say what properties these operations are supposed to have, that the indexing is redone, etc.}. that is supported as input to the learning algorithm $\alg{Learn}$. 

\begin{datacol}
    The data collector $\ml$ maintains a dataset as a history-independent dictionary $\alg{Dict}$. Given security parameter $\secp$, it first samples an integer $thr$ uniformly at random from $[\secp,2\secp]$, and sets a boolean flag $learnt$ to $\mathsf{false}$. It waits to receive either a data point or a deletion request from a user, and acts as follows:
    \begin{itemize}
      \item If it receives a data point $x$,% along with a key $key$.
        \begin{itemize}
          \item it samples a new key $key$ uniformly at random from $\bset^\secp$.
          \item if $learnt = \mathsf{false}$,
            \begin{itemize}
              \item it runs $\alg{Dict}.\alg{Insert}(key, x)$ to add $x$ to the dictionary under the key $key$.
              \item if a new entry was actually inserted by the above operation, it increments $count$ by $1$, and if, further, $learnt$ is $\mathsf{true}$, it also increments $thr$.
              \item if $count = thr$,
                \begin{itemize}
                  \item compute and store $model$ as the output of $\alg{learn}(\alg{Dict})$.
                  \item set $learnt$ to $\mathsf{true}$.
                \end{itemize}
            \end{itemize}
          \item it responds to the message with the string $key$.
        \end{itemize}
      \item If it receives a deletion request for the key $key$,
        \begin{itemize}
          \item it first runs $\alg{Dict}.\alg{Lookup}(key)$. If the lookup succeeds,
            \begin{itemize}
              \item it updates $model$ to be the output of $\alg{delete}(\alg{Dict},model,key)$.
              \item it runs $\alg{Dict}.\alg{Delete}(key)$.
              \item it decrements $count$ by 1.
              \item if $learnt = \mathsf{true}$, it decrements $thr$ by $1$.
            \end{itemize}
        \end{itemize}
    \end{itemize}
\end{datacol}       

\ifsubmission
\else
\begin{figure}[h!]\small
    \begin{mdframed}
        \vspace{0.5em}
        
        We model the collector's behavior by a tuple $(\ml,\pi,\pi_D)$ that uses a learning algorithm $\alg{learn}$ that takes a dictionary as input, and a corresponding deletion operation $\alg{delete}$. The ITM $\ml$, given security parameter $\secp$, is initially set up as follows:
        \begin{itemize}
          \item It initializes a history-independent dictionary $\alg{Dict}$.
          \item It samples $thr$ uniformly at random from $[\secp,2\secp]$.
          \item It sets $count \gets 0$, $learnt \gets \mathsf{false}$, and $model \gets \bot$.
        \end{itemize}
        Upon activation, it acts as follows. Any information below that is not required explicitly to be stored is erased as soon as each message is processed (and just before possibly responding to it and halting or being deactivated).
        \begin{itemize}
          \item It checks whether there is an ITI that it controls whose output tape it has not read so far. If not, it halts.
          \item Otherwise, let $M$ be the first ITI in some arbitrary order whose output has not been read so far. $\histind$ reads this output.
          \item If this output is of the form $(\mathsf{insert}, value)$,
            \begin{itemize}
              \item It samples a string $key \gets \bset^\secp$ uniformly at random.
              \item If $learnt = \mathsf{false}$,
                \begin{itemize}
                  \item It runs $\alg{Dict}.\alg{Insert}(key, value)$.
                  \item If the above operation actually results in an insertion, increment $count$. If, additionally, $learnt = \mathsf{true}$, increment $thr$.
                  \item If $count = thr$,
                    \begin{itemize}
                      \item Set $model \gets \alg{learn}(\alg{Dict})$.
                      \item Set $learnt \gets \mathsf{true}$.
                    \end{itemize}
                \end{itemize}
              \item It writes $key$ onto the input tape of $M$.
            \end{itemize}
          \item If this output is of the form $(\mathsf{delete}, key)$,
            \begin{itemize}
              \item It runs $\alg{Dict}.\alg{Lookup}(key)$. If this lookup succeeds,
                \begin{itemize}
                  \item It sets $model \gets \alg{delete}(\alg{Dict},model,key)$.
                  \item It runs $\alg{Dict}.\alg{Delete}(key)$.
                  \item It decrements $count$ by $1$.
                  \item If $learnt = true$, it decrements $thr$ by $1$.
                \end{itemize}
            \end{itemize}
        \end{itemize}

        \vspace{0.5em}

        The protocol $\pi$ runs between two parties, the server and the client, and proceeds as follows:
        \begin{itemize}
          \item The client takes as input $(key,value)$.
          \item The client sends its input as a message to the server.
          \item The server writes the message $(\mathsf{insert},key,value)$ onto its output tape.
          \item The next time it is activated, the server sends the contents of its input tape as a message to the client. We refer to the content of this message as $key$.
          \item The client writes the message it receives onto its output tape.
          \item The deletion token for this instance is set to be $key$.
        \end{itemize}
        
        The protocol $\pi_D$ runs between two parties, the server and the client, and proceeds as follows:
        \begin{itemize}
          \item The client takes as input a tuple $key$.
          \item The client sends its input as a message to the server.
          \item The server writes the message $(\mathsf{delete},key)$ onto its output tape.
        \end{itemize}
        
        \vspace{0.5em}
    \end{mdframed}

    \vspace{0.5em}
    \caption{Data Collector maintaining a model learnt from data that could be deleted}
    \label{fig:ml-model}
\end{figure}
\fi

\begin{restatable}{theorem}{thmml}
    \label{thm:ml}
    The data collector $(\ml,\pi,\pi_D)$ as described in \cref{fig:ml-model} has $1$-representative deletion-compliance error at most $(1/\secp+\poly(\secp)/2^\secp)$.
\end{restatable}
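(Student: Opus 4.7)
The plan is to mirror the structure of the proof for \cref{thm:dp}, replacing the differential-privacy ingredient with the exact equivalence guaranteed by the data deletion operation of \cref{def:ml}. Fix a $1$-representative deletion-requester $\Y$; without loss of generality $\Y$ initiates exactly one instance of $\pi$ (with input $x^*$) followed by the corresponding instance of $\pi_D$. Call the insertion activation $t_0$ and the deletion activation $t_1$, and let $D_I(s)$ denote the dictionary content in the ideal execution at time $s$.

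The first step is the authentication argument, identical in form to that used for $\histind$ and $\diffp$: the key $key^*$ returned to $\Y$ by $\X$ at $t_0$ is sampled uniformly from $\bset^\secp$, and since $\Y$ never forwards messages to $\Z$, the probability that $\Z$ ever submits a message containing $key^*$ in one of its own protocol instances with $\X$ is at most $\poly(\secp)/2^\secp$ (where the polynomial bounds the number of protocol instances $\Z$ runs). Condition on this bad event not occurring. Then $\Z$'s transcript with $\X$ is determined entirely by $\Z$'s randomness and by the responses of $\X$ to keys that $\Z$ alone generated, and one checks inductively that this transcript is identical in the real and ideal executions.

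The second step concerns the learning threshold $thr$, sampled uniformly from $[\secp,2\secp]$. The counter $count$ in the real execution equals that in the ideal execution except that it is higher by exactly one during the interval $[t_0,t_1)$. The set $V$ of counter values that ideal's counter passes through during $[t_0,t_1)$ is a contiguous integer interval (since the counter takes $\pm 1$ steps). I would argue that, conditioned on $thr$ being such that learning is either triggered entirely before $t_0$ or entirely on an event happening after $t_1$, the execution synchronizes: in the ``before'' case, $\Y$'s insertion is a no-op (since $learnt=\mathsf{true}$), the ensuing lookup at $t_1$ fails, and the whole insert-delete pair leaves the state unchanged; in the ``after'' case, $\Y$'s insert and delete cancel, $thr$ is unchanged, and learning triggers on the same $\Z$-insertion in both executions with identical dictionary content. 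The bad values of $thr$ are precisely those in a window of size $|V|$, so the probability over $thr$ of falling into a bad value is at most $1/\secp$ in the key case and contributes at most an additional $\poly(\secp)/2^\secp$ when combined with the tail probabilities of the authentication step.

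The third step handles the remaining case where learning triggers on $\Y$'s own insertion at $t_0$. Here, in the real execution the model becomes $\alg{learn}(D_I(t_0)\cup\{x^*\})$ at $t_0$, and at $t_1$ the deletion step updates it to $\alg{delete}(D_I(t_0)\cup\{x^*\},\alg{learn}(D_I(t_0)\cup\{x^*\}),key^*)$, which by \cref{def:ml} is distributed identically to $\alg{learn}(D_I(t_0))$. In the ideal execution the counter reaches $thr$ at the same $\Z$-insertion event that would have been the first after $t_0$ in real (with $thr$ having been decremented in the real to account for the deletion of $\Y$'s entry), and the learned model is exactly $\alg{learn}(D_I(t_0))$ computed at that moment. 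Combining with history-independence of $\alg{Dict}$, the joint distributions $(\state_\X,\view_\Z)$ in the two worlds are identical conditioned on the good events. A union bound over the two bad events yields the claimed bound $1/\secp+\poly(\secp)/2^\secp$. The main obstacle is the bookkeeping in step two: carefully arguing that the contiguous-range structure of the counter trajectory lets the entire ``synchronization'' hinge on a single-index failure of $thr$ rather than a $\poly(\secp)$-sized window, and that the $thr$-decrement on deletion correctly propagates the synchronization past $t_1$.
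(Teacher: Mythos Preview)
Your proposal follows the same three-ingredient structure as the paper's proof (which reuses the argument for \cref{thm:dp}): authentication via the random $key$, a threshold argument for $thr$, and the exact deletion guarantee of \cref{def:ml} in place of differential privacy. The overall plan is right, but Step~3 as written has a genuine gap.

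You treat ``the remaining case'' as the single value of $thr$ for which learning fires exactly at $\Y$'s insertion $t_0$. But the window you yourself identify in Step~2 has size $|V|$, and for each of those $|V|$ threshold values learning may fire at some $\Z$-insertion strictly between $t_0$ and $t_1$ in the real execution. Your argument never addresses those cases, and the sentence ``the probability over $thr$ of falling into a bad value is at most $1/\secp$'' does not follow from a window of size $|V|$: a priori that gives $|V|/\secp$, which can be $\Omega(1)$. Your closing paragraph flags exactly this as the obstacle but does not supply the mechanism that collapses it to a single bad index.

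The paper's resolution is a \emph{matching} between threshold values rather than a per-threshold comparison. Following its proof of \cref{claim:dp-2}, define $f_R(n)$ (resp.\ $f_I(n)$) to be the insertion index at which learning fires when the initial threshold is $n$ in the real (resp.\ ideal) execution. Their \cref{claim:dp-3} shows that for every $n$ in the middle window \emph{except the least one}, $f_R(n)=f_I(n-1)$; then the deletion operation of \cref{def:ml} makes the final $model$ in the real run with threshold $n$ identically distributed to the final $model$ in the ideal run with threshold $n-1$, and the remaining state variables $(thr,count,\alg{Dict})$ agree as well. Only the single smallest $n$ in the window is left unmatched, and that is where the $1/\secp$ comes from. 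As a side note, the specific computation you wrote in Step~3 is also off: in the ideal run learning fires at the first $\Z$-insertion after $t_0$, so the dictionary there is $D_I(t_0)$ \emph{plus} that new $\Z$ entry, not $D_I(t_0)$; the shift-by-one matching is precisely what absorbs this discrepancy.
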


We prove \cref{thm:ml} in \ifsubmission the full version\else\cref{sec:proofs-ml}\fi, along the same lines as \cref{thm:dp}.

%%% Local Variables:
%%% mode: latex
%%% TeX-master: "main"
%%% End:

%%% Local Variables:
%%% mode: latex
%%% TeX-master: "main"
%%% End:

%\newpage

\bibliographystyle{alpha}
\bibliography{abbrev1,crypto,refs}

\newcommand{\etalchar}[1]{$^{#1}$}
\begin{thebibliography}{NBW{\etalchar{+}}17}

\bibitem[BCC{\etalchar{+}}19]{BCCJTZLP20}
Lucas Bourtoule, Varun Chandrasekaran, Christopher~A. Choquette{-}Choo, Hengrui
  Jia, Adelin Travers, Baiwu Zhang, David Lie, and Nicolas Papernot.
\newblock Machine unlearning.
\newblock {\em CoRR}, abs/1912.03817, 2019.

\bibitem[BI19]{BI19}
Anne Broadbent and Rabib Islam.
\newblock Quantum encryption with certified deletion.
\newblock {\em arXiv preprint arXiv:1910.03551}, 2019.

\bibitem[BSZ20]{BSZ20}
Thomas Baumhauer, Pascal Sch{\"{o}}ttle, and Matthias Zeppelzauer.
\newblock Machine unlearning: Linear filtration for logit-based classifiers.
\newblock {\em CoRR}, abs/2002.02730, 2020.

\bibitem[Can01]{FOCS:Canetti01}
Ran Canetti.
\newblock Universally composable security: A new paradigm for cryptographic
  protocols.
\newblock In {\em 42nd Annual Symposium on Foundations of Computer Science},
  pages 136--145. {IEEE} Computer Society Press, October 2001.

\bibitem[Car13]{carter2013argentina}
Edward~L Carter.
\newblock Argentina's right to be forgotten.
\newblock {\em Emory Int'l L. Rev.}, 27:23, 2013.

\bibitem[CCL15]{CCL15}
Ran Canetti, Asaf Cohen, and Yehuda Lindell.
\newblock A simpler variant of universally composable security for standard
  multiparty computation.
\newblock In Rosario Gennaro and Matthew Robshaw, editors, {\em Advances in
  Cryptology - {CRYPTO} 2015 - 35th Annual Cryptology Conference, Santa
  Barbara, CA, USA, August 16-20, 2015, Proceedings, Part {II}}, volume 9216 of
  {\em Lecture Notes in Computer Science}, pages 3--22. Springer, 2015.

\bibitem[CCP18]{CCPA}
California consumer privacy act (ccpa).
\newblock \url{https://oag.ca.gov/privacy/ccpa}, 2018.

\bibitem[CN19]{CN19}
Aloni Cohen and Kobbi Nissim.
\newblock Towards formalizing the gdpr's notion of singling out.
\newblock {\em CoRR}, abs/1904.06009, 2019.

\bibitem[CW19]{CW19}
Xavier Coiteux{-}Roy and Stefan Wolf.
\newblock Proving erasure.
\newblock In {\em {IEEE} International Symposium on Information Theory, {ISIT}
  2019, Paris, France, July 7-12, 2019}, pages 832--836. {IEEE}, 2019.

\bibitem[CY15]{CY15}
Yinzhi Cao and Junfeng Yang.
\newblock Towards making systems forget with machine unlearning.
\newblock In {\em 2015 {IEEE} Symposium on Security and Privacy, {SP} 2015, San
  Jose, CA, USA, May 17-21, 2015}, pages 463--480. {IEEE} Computer Society,
  2015.

\bibitem[DKW11]{DKW11}
Stefan Dziembowski, Tomasz Kazana, and Daniel Wichs.
\newblock One-time computable self-erasing functions.
\newblock In Yuval Ishai, editor, {\em Theory of Cryptography - 8th Theory of
  Cryptography Conference, {TCC} 2011, Providence, RI, USA, March 28-30, 2011.
  Proceedings}, volume 6597 of {\em Lecture Notes in Computer Science}, pages
  125--143. Springer, 2011.

\bibitem[DMNS06]{DMNS06}
Cynthia Dwork, Frank McSherry, Kobbi Nissim, and Adam~D. Smith.
\newblock Calibrating noise to sensitivity in private data analysis.
\newblock In Shai Halevi and Tal Rabin, editors, {\em Theory of Cryptography,
  Third Theory of Cryptography Conference, {TCC} 2006, New York, NY, USA, March
  4-7, 2006, Proceedings}, volume 3876 of {\em Lecture Notes in Computer
  Science}, pages 265--284. Springer, 2006.

\bibitem[ECS{\etalchar{+}}19]{ECSSL19}
Michael Ellers, Michael Cochez, Tobias Schumacher, Markus Strohmaier, and
  Florian Lemmerich.
\newblock Privacy attacks on network embeddings.
\newblock {\em CoRR}, abs/1912.10979, 2019.

\bibitem[GAS19]{GAS19}
Aditya Golatkar, Alessandro Achille, and Stefano Soatto.
\newblock Eternal sunshine of the spotless net: Selective forgetting in deep
  networks.
\newblock {\em CoRR}, abs/1911.04933, 2019.

\bibitem[GDP16]{GDPR}
Regulation (eu) 2016/679 of the european parliament and of the council of 27
  april 2016 on the protection of natural persons with regard to the processing
  of personal data and on the free movement of such data, and repealing
  directive 95/46 (general data protection regulation).
\newblock {\em Official Journal of the European Union (OJ)}, 59(1-88):294,
  2016.

\bibitem[GGVZ19]{GGVZ19}
Antonio Ginart, Melody~Y. Guan, Gregory Valiant, and James Zou.
\newblock Making {AI} forget you: Data deletion in machine learning.
\newblock {\em CoRR}, abs/1907.05012, 2019.

\bibitem[GMR89]{GolMicRac89}
Shafi Goldwasser, Silvio Micali, and Charles Rackoff.
\newblock The knowledge complexity of interactive proof systems.
\newblock {\em {SIAM} Journal on Computing}, 18(1):186--208, 1989.

\bibitem[Gol01]{Goldreich01}
Oded Goldreich.
\newblock {\em Foundations of Cryptography: Basic Tools}, volume~1.
\newblock Cambridge University Press, Cambridge, UK, 2001.

\bibitem[KK14]{KK14}
Nikolaos~P. Karvelas and Aggelos Kiayias.
\newblock Efficient proofs of secure erasure.
\newblock In Michel Abdalla and Roberto~De Prisco, editors, {\em Security and
  Cryptography for Networks - 9th International Conference, {SCN} 2014, Amalfi,
  Italy, September 3-5, 2014. Proceedings}, volume 8642 of {\em Lecture Notes
  in Computer Science}, pages 520--537. Springer, 2014.

\bibitem[Mic97]{Mic97}
Daniele Micciancio.
\newblock Oblivious data structures: Applications to cryptography.
\newblock In Frank~Thomson Leighton and Peter~W. Shor, editors, {\em
  Proceedings of the Twenty-Ninth Annual {ACM} Symposium on the Theory of
  Computing, El Paso, Texas, USA, May 4-6, 1997}, pages 456--464. {ACM}, 1997.

\bibitem[NBW{\etalchar{+}}17]{nissim2017bridging}
Kobbi Nissim, Aaron Bembenek, Alexandra Wood, Mark Bun, Marco Gaboardi, Urs
  Gasser, David~R O'Brien, Thomas Steinke, and Salil Vadhan.
\newblock Bridging the gap between computer science and legal approaches to
  privacy.
\newblock {\em Harv. JL \& Tech.}, 31:687, 2017.

\bibitem[NT01]{NT01}
Moni Naor and Vanessa Teague.
\newblock Anti-presistence: history independent data structures.
\newblock In Jeffrey~Scott Vitter, Paul~G. Spirakis, and Mihalis Yannakakis,
  editors, {\em Proceedings on 33rd Annual {ACM} Symposium on Theory of
  Computing, July 6-8, 2001, Heraklion, Crete, Greece}, pages 492--501. {ACM},
  2001.

\bibitem[PAP18]{PAP18}
Eugenia~A. Politou, Efthimios Alepis, and Constantinos Patsakis.
\newblock Forgetting personal data and revoking consent under the {GDPR:}
  challenges and proposed solutions.
\newblock {\em J. Cybersecurity}, 4(1):tyy001, 2018.

\bibitem[PT10]{PT10}
Daniele Perito and Gene Tsudik.
\newblock Secure code update for embedded devices via proofs of secure erasure.
\newblock In Dimitris Gritzalis, Bart Preneel, and Marianthi Theoharidou,
  editors, {\em Computer Security - {ESORICS} 2010, 15th European Symposium on
  Research in Computer Security, Athens, Greece, September 20-22, 2010.
  Proceedings}, volume 6345 of {\em Lecture Notes in Computer Science}, pages
  643--662. Springer, 2010.

\bibitem[Sch20]{Schelter20}
Sebastian Schelter.
\newblock "amnesia" - machine learning models that can forget user data very
  fast.
\newblock In {\em {CIDR} 2020, 10th Conference on Innovative Data Systems
  Research, Amsterdam, The Netherlands, January 12-15, 2020, Online
  Proceedings}. www.cidrdb.org, 2020.

\bibitem[SRS17]{SRS17}
Congzheng Song, Thomas Ristenpart, and Vitaly Shmatikov.
\newblock Machine learning models that remember too much.
\newblock In Bhavani~M. Thuraisingham, David Evans, Tal Malkin, and Dongyan Xu,
  editors, {\em Proceedings of the 2017 {ACM} {SIGSAC} Conference on Computer
  and Communications Security, {CCS} 2017, Dallas, TX, USA, October 30 -
  November 03, 2017}, pages 587--601. {ACM}, 2017.

\bibitem[Vad17]{Vadhan17}
Salil~P. Vadhan.
\newblock The complexity of differential privacy.
\newblock In Yehuda Lindell, editor, {\em Tutorials on the Foundations of
  Cryptography}, pages 347--450. Springer International Publishing, 2017.

\bibitem[VBE18]{VBE18}
Michael Veale, Reuben Binns, and Lilian Edwards.
\newblock Algorithms that remember: Model inversion attacks and data protection
  law.
\newblock {\em CoRR}, abs/1807.04644, 2018.

\end{thebibliography}

\appendix

\ifsubmission
\section{Formal Descriptions of Data Collectors (supplementary material)}
\label{sec:supp}

\begin{figure}[h!]\small
    \begin{mdframed}
        \vspace{0.5em}
        
        We model the collector's behavior by a tuple $(\histind,\pi,\pi_D)$. The ITM $\histind$ maintains a history-independent dictionary $\alg{Dict}$ and, given security parameter $\secp$, acts as follows upon activation:
        \begin{itemize}
          \item It checks whether there is an ITI that it controls whose output tape it has not read so far. If not, it halts.
          \item Otherwise, let $M$ be the first ITI in some arbitrary order whose output has not been read so far. $\histind$ reads this output.
          \item If this output is of the form $(\mathsf{insert}, key, value)$,
            \begin{itemize}
              \item It samples a string $auth \gets \bset^\secp$ uniformly at random.
              \item It runs $\alg{Dict}.\alg{Insert}((key,auth), value)$.
              \item It writes $auth$ onto the input tape of $M$.
            \end{itemize}
          \item If this output is of the form $(\mathsf{lookup}, key, auth)$,
            \begin{itemize}
              \item It runs $\alg{Dict}.\alg{Lookup}((key,auth))$ to get $value$ and writes $value$ on the input tape of $M$.
            \end{itemize}
          \item If this output is of the form $(\mathsf{delete}, key, auth)$,
            \begin{itemize}
              \item It runs $\alg{Dict}.\alg{Delete}((key,auth))$.
            \end{itemize}
        \end{itemize}

        \vspace{0.5em}

        The protocol $\pi$ runs between two parties, the server and the client, and proceeds as follows:
        \begin{itemize}
          \item The client takes as input a tuple $(\inst,key,auth,value)$, where either of $auth$ or $value$ may be empty.
          \item The client sends its input as a message to the server.
          \item The server verifies that $\inst$ is either $\mathsf{insert}$ or $\mathsf{lookup}$.
          \item If $\inst = \mathsf{insert}$ and $value$ is not empty, or if ($\inst = \mathsf{lookup}$ and $auth$ is not empty),
            \begin{itemize}
              \item The server writes the message $(\mathsf{insert},key,value)$ onto its output tape.
              \item The next time it is activated, the server sends the contents of its input tape as a message to the client. We refer to the content of this message as $auth'$.
              \item The client writes the message it receives onto its output tape.
              \item The deletion token for this instance is set to be $(key,auth')$.
            \end{itemize}
          \item If $\inst = \mathsf{lookup}$ and $auth$ is not empty,
            \begin{itemize}
              \item The server writes the message $(\mathsf{lookup},key,auth)$ onto its output tape.
              \item The next time it is activated, the server sends the contents of its input tape as a message to the client.
              \item The client writes the message it receives onto its output tape.
              \item The deletion token for this instance is set to be $\bot$.
            \end{itemize}
        \end{itemize}

        The protocol $\pi_D$ runs between two parties, the server and the client, and proceeds as follows:
        \begin{itemize}
          \item The client takes as input a tuple $(key,auth)$.
          \item The client sends a message $(\mathsf{delete},key,auth)$ to the server.
          \item The server writes the received message onto its output tape.
        \end{itemize}
        
        \vspace{0.5em}
    \end{mdframed}

    \vspace{0.5em}
    \caption{Data Storage Using History-Independence}
    \label{fig:hist-ind-model}
\end{figure}

\begin{figure}[h!]\small
    \begin{mdframed}
        \vspace{0.5em}
        
        We model the collector's behavior by a tuple $(\histind_2,\pi,\pi_D)$, where $\pi$ and $\pi_D$ are as described in \cref{fig:hist-ind-model}. The environment is denoted by $\Z$, and supports $\histind_2$ instantiating instances of protocols $\pi$ and $\pi_D$ with it, with $\histind_2$ on the client side. The ITM $\histind$ maintains a history-independent dictionary $\alg{Dict}$ and, given security parameter $\secp$, acts as follows upon activation:
        \begin{itemize}
          \item It checks whether there is an ITI that it controls whose output tape it has not read so far. If not, it halts.
          \item Otherwise, let $M$ be the first ITI in some arbitrary order whose output has not been read so far. $\histind$ reads this output.
          \item If this output is of the form $(\mathsf{insert}, key, value)$,
            \begin{itemize}
              \item It samples strings $auth, exkey \gets \bset^\secp$ uniformly at random.
              \item It starts a new session of $\pi$ with $\Z$, and on the input tape of the associated client-side ITI $M'$, it writes $(\ins{insert}, exkey, \bot, value)$.
              \item The next time it is activated, it reads $exauth$ off the output tape of $M'$.
              \item It runs $\alg{Dict}.\alg{Insert}((key,auth), (exkey,exauth))$.
              \item It writes $auth$ onto the input tape of $M$.
            \end{itemize}
          \item If this output is of the form $(\mathsf{lookup}, key, auth)$,
            \begin{itemize}
              \item It runs $\alg{Dict}.\alg{Lookup}((key,auth))$ to get $(exkey,exauth)$. If this lookup fails, it writes $\bot$ on the input tape of $M$ and halts.
              \item Otherwise, it starts a new session of $\pi$ with $\Z$, and on the input tape of the associated client-side ITI $M'$, it writes $(\ins{lookup}, exkey, exauth, \bot)$.
              \item The next time it is activated, it reads $value$ off the output tape of $M'$.
              \item It writes $value$ on the input tape of $M$.
            \end{itemize}
          \item If this output is of the form $(\mathsf{delete}, key, auth)$,
            \begin{itemize}
              \item It runs $\alg{Dict}.\alg{Lookup}((key,auth))$ to get $(exkey,exauth)$. If this lookup fails, it halts.
              \item Otherwise, it starts a new session of $\pi$ with $\Z$, and on the input tape of the associated client-side ITI $M'$, it writes $(\ins{delete}, exkey, exauth, \bot)$.
              \item It runs $\alg{Dict}.\alg{Delete}((key,auth))$.
            \end{itemize}
        \end{itemize}

        \vspace{0.5em}
    \end{mdframed}

    \vspace{0.5em}
    \caption{Data Collector outsourcing its storage while using History-Independent Data Structures}
    \label{fig:hist-ind-2-model}
\end{figure}

\begin{figure}[h!]\small
    \begin{mdframed}
        \vspace{0.5em}
        
        We model the collector's behavior by a tuple $(\diffp,\pi,\pi_D)$ that uses an algorithm $\alg{summarize}$ that takes a dictionary as input and outputs a string. The ITM $\diffp$, given security parameter $\secp$, is initially set up as follows:
        \begin{itemize}
          \item It initializes a history-independent dictionary $\alg{Dict}$.
          \item It samples $n$ uniformly at random from $[\secp+1,2\secp]$.
          \item It sets $count \gets 0$, $summarized \gets \mathsf{false}$, and $summary \gets \bot$.
        \end{itemize}
        Upon activation, it acts as follows:
        \begin{itemize}
          \item It checks whether there is an ITI that it controls whose output tape it has not read so far. If not, it halts.
          \item Otherwise, let $M$ be the first ITI in some arbitrary order whose output has not been read so far. $\diffp$ reads this output.
          \item If this output is of the form $(\mathsf{insert}, value)$,
            \begin{itemize}
              \item It samples a string $key \gets \bset^\secp$ uniformly at random.
              \item It runs $\alg{Dict}.\alg{Insert}(key, value)$.
              \item If the above operation actually results in an insertion, increment $count$.
              \item If $count = n$ and $summarized = \mathsf{false}$,
                \begin{itemize}
                  \item Sample a random subset $S$ of the values stored in $\alg{Dict}$ such that $|S| = \secp$.
                  \item Set $summary \gets \alg{summarize}(S)$.
                  \item Set $summarized \gets \mathsf{true}$.
                \end{itemize}
              \item It writes $key$ onto the input tape of $M$.
            \end{itemize}
          \item If this output is of the form $(\mathsf{delete}, key)$,
            \begin{itemize}
              \item It runs $\alg{Dict}.\alg{Delete}(key)$.
              \item If this operation actually resulted in a deletion, decrement $count$.
            \end{itemize}
        \end{itemize}

        \vspace{0.5em}

        The protocol $\pi$ runs between two parties, the server and the client, and proceeds as follows:
        \begin{itemize}
          \item The client takes as input a tuple $value$.
          \item The client sends its input as a message to the server.
          \item The server writes the message $(\mathsf{insert},value)$ onto its output tape.
          \item The next time it is activated, the server sends the contents of its input tape as a message to the client. We refer to the content of this message as $key$.
          \item The client writes the message it receives onto its output tape.
          \item The deletion token for this instance is set to be $key$.
        \end{itemize}
        
        The protocol $\pi_D$ runs between two parties, the server and the client, and proceeds as follows:
        \begin{itemize}
          \item The client takes as input a string $key$.
          \item The client sends its input as a message to the server.
          \item The server writes the message $(\mathsf{delete},key)$ onto its output tape.
        \end{itemize}
        
        \vspace{0.5em}
    \end{mdframed}

    \vspace{0.5em}
    \caption{Data Collector employing Differentially Private Summarization}
    \label{fig:dp-model}
\end{figure}

\begin{figure}[h!]\small
    \begin{mdframed}
        \vspace{0.5em}
        
        We model the collector's behavior by a tuple $(\ml,\pi,\pi_D)$ that uses a learning algorithm $\alg{learn}$ that takes a dictionary as input, and a corresponding deletion operation $\alg{delete}$. The ITM $\ml$, given security parameter $\secp$, is initially set up as follows:
        \begin{itemize}
          \item It initializes a history-independent dictionary $\alg{Dict}$.
          \item It samples $n$ uniformly at random from $[\secp,2\secp]$.
          \item It sets $count \gets 0$, $learnt \gets \mathsf{false}$, and $model \gets \bot$.
        \end{itemize}
        Upon activation, it acts as follows:
        \begin{itemize}
          \item It checks whether there is an ITI that it controls whose output tape it has not read so far. If not, it halts.
          \item Otherwise, let $M$ be the first ITI in some arbitrary order whose output has not been read so far. $\histind$ reads this output.
          \item If this output is of the form $(\mathsf{insert}, value)$,
            \begin{itemize}
              \item It samples a string $key \gets \bset^\secp$ uniformly at random.
              \item If $learnt = \mathsf{false}$,
                \begin{itemize}
                  \item It runs $\alg{Dict}.\alg{Insert}(key, value)$.
                  \item If the above operation actually results in an insertion, increment $count$.
                  \item If $count = n$,
                    \begin{itemize}
                      \item Set $model \gets \alg{learn}(\alg{Dict})$.
                      \item Set $learnt \gets \mathsf{true}$.
                    \end{itemize}
                \end{itemize}
              \item It writes $key$ onto the input tape of $M$.
            \end{itemize}
          \item If this output is of the form $(\mathsf{delete}, key)$,
            \begin{itemize}
              \item It runs $\alg{Dict}.\alg{Lookup}(key)$. If this lookup succeeds,
                \begin{itemize}
                  \item It sets $model \gets \alg{delete}(\alg{Dict},model,key)$.
                  \item It runs $\alg{Dict}.\alg{Delete}(key)$.
                  \item It decrements $count$ by $1$.
                \end{itemize}
            \end{itemize}
        \end{itemize}

        \vspace{0.5em}

        The protocol $\pi$ runs between two parties, the server and the client, and proceeds as follows:
        \begin{itemize}
          \item The client takes as input $(key,value)$.
          \item The client sends its input as a message to the server.
          \item The server writes the message $(\mathsf{insert},key,value)$ onto its output tape.
          \item The next time it is activated, the server sends the contents of its input tape as a message to the client. We refer to the content of this message as $key$.
          \item The client writes the message it receives onto its output tape.
          \item The deletion token for this instance is set to be $key$.
        \end{itemize}
        
        The protocol $\pi_D$ runs between two parties, the server and the client, and proceeds as follows:
        \begin{itemize}
          \item The client takes as input a tuple $key$.
          \item The client sends its input as a message to the server.
          \item The server writes the message $(\mathsf{delete},key)$ onto its output tape.
        \end{itemize}
        
        \vspace{0.5em}
    \end{mdframed}

    \vspace{0.5em}
    \caption{Data Collector maintaining a model learnt from data that could be deleted}
    \label{fig:ml-model}
\end{figure}

%%% Local Variables:
%%% mode: latex
%%% TeX-master: "main"
%%% End:
\else
\section{Proofs for \cref{sec:scenarios}}
\label{sec:proofs}

Throughout these proofs, we will use $\sdist{X}{Y}$ to denote the statistical distance between distributions $X$ and $Y$. We will be using certain properties of statistical distance stated below. 

\begin{fact}
    \label{fact:sd-conditional}
    For any jointly distributed random variables $(X,Y)$ and $(X',Y')$ over the same domains,
    \begin{align*}
      \sdist{(X,Y)}{(X',Y')} \leq \sdist{X}{X'} + \expec{x\gets X}{\sdist{Y_x}{Y'_x}}
    \end{align*}
    where $Y_x$ is the distribution of $Y$ conditioned on $X = x$, and $Y'_x$ is the distribution of $Y'$ conditioned on $X' = x$.
\end{fact}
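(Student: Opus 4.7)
The plan is to prove this by expanding out the definition of statistical distance in terms of joint probabilities, writing the joint probabilities as a product of marginal and conditional probabilities, and then inserting a hybrid term to apply the triangle inequality.

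First, I would write
\[
  \sdist{(X,Y)}{(X',Y')} = \frac{1}{2}\sum_{x,y}\bigl|\Pr[X=x,Y=y] - \Pr[X'=x,Y'=y]\bigr|,
\]
and use the factorization $\Pr[X=x,Y=y] = \Pr[X=x]\cdot \Pr[Y_x=y]$ (and analogously for the primed variables) to rewrite the summand as $\bigl|p_x f_x(y) - q_x g_x(y)\bigr|$, where $p_x,q_x$ denote the marginal probabilities of $x$ under $X$ and $X'$, and $f_x,g_x$ denote the conditional distributions $Y_x$ and $Y'_x$.

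Next, I would introduce the hybrid term $p_x g_x(y)$ and apply the triangle inequality:
\[
  \bigl|p_x f_x(y) - q_x g_x(y)\bigr| \leq p_x\bigl|f_x(y) - g_x(y)\bigr| + |p_x - q_x|\cdot g_x(y).
\]
Summing over $y$ and using that $\sum_y g_x(y) = 1$ converts the first piece (after the $\tfrac{1}{2}$ normalization and summing over $x$) into $\sum_x p_x \cdot \sdist{Y_x}{Y'_x} = \expec{x\gets X}{\sdist{Y_x}{Y'_x}}$, and the second piece into $\tfrac{1}{2}\sum_x |p_x - q_x| = \sdist{X}{X'}$. Combining the two bounds gives the claimed inequality.

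There is no real obstacle here — this is a standard manipulation — but I would be careful about the case where $p_x = 0$ (in which case the conditional $Y_x$ is undefined, but the corresponding term in the expectation is weighted by $0$, so it can be defined arbitrarily without affecting the bound). I would also note that the proof is symmetric in $(X,Y)$ and $(X',Y')$, so an analogous bound holds with the expectation taken over $X'$ instead.
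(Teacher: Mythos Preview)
Your proposal is correct and follows essentially the same argument as the paper: expand the joint statistical distance, factor each term as marginal times conditional, insert the hybrid $p_x g_x(y)$, and apply the triangle inequality. The only cosmetic difference is that the paper's computation omits the $\tfrac{1}{2}$ normalization in its definition of $\Delta$, and your remarks about the $p_x=0$ edge case and the symmetry in $(X,X')$ are extra observations not present in the paper's proof.
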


\begin{proof}
    By the definition of statistical distance, we have:
    \begin{align*}
      \sd((X,Y),(X',Y')) &= \sum_{x,y} \abs{\pr{X=x \wedge Y=y} - \pr{X'=x \wedge Y'=y}} \\
                         &= \sum_{x,y} \abs{\pr{X=x}\cdot \pr{Y=y\ |\ X=x} - \pr{X'=x}\cdot \pr{Y'=y\ |\ X'=x}}\\
                         &= \sum_{x,y} \large|\pr{X=x}\cdot \pr{Y=y\ |\ X=x} - \pr{X=x}\cdot \pr{Y'=y\ |\ X'=x} \\ &\qquad + \pr{X=x}\cdot \pr{Y'=y\ |\ X'=x} - \pr{X'=x}\cdot \pr{Y'=y\ |\ X'=x}\large|\\
                         &\leq \sum_x \pr{X=x} \cdot \sum_y \abs{\pr{Y=y\ |\ X=x} - \pr{Y'=y\ |\ X'=x}} \\ &\qquad + \sum_x \abs{\pr{X=x} - \pr{X'=x}} \cdot \sum_y \pr{Y'=y\ |\ X'=x}\\
                         &= \expec{x\gets X}{\sdist{Y_x}{Y'_x}} + \sdist{X}{X'}
    \end{align*}
\end{proof}

\subsection{Data Storage and History-Independence}
\label{sec:proofs-hist-ind}

In this subsection, we restate and prove the theorems from \cref{sec:hist-ind} about the deletion-compliance of data collectors using history-independent data structures to store users' data.

\subsubsection{Proof of \cref{thm:hist-ind}}

\thmhistind*

The state $\state_{\histind}$ at any point (when it halts or is deactivated) consists only of the memory being used by the implementation of the dictionary $\alg{Dict}$, which we are given is history-independent. Fix any environment $\Z$ and deletion-requester $\Y$, both of which run in time $\poly(\secp)$ given security parameter $\secp$. For some $\secp$, let $(\state_\histind^R,\view_\Z^R)$ be the corresponding parts of $\exec^{\histind,\pi,\pi_D}_{\Z,\Y}(\secp)$ and $(\state_\histind^I,\view_\Z^I)$ the corresponding parts of of $\exec^{\histind,\pi,\pi_D}_{\Z,\Y}(\secp)$.

To start with, note that $\histind$ does not reveal any information about any data stored unless the appropriate authentication string $auth$ is sent with a $\ins{lookup}$ request. As $auth$ is chosen at random each time and $\Y$ does not communicate with $\Z$ (directly or indirectly by requesting deletion, etc., on its behalf), the environment $\Z$ cannot tell whether $\Y$ even exists or not unless it gets lucky and guesses an $auth$ string that it has not seen. This leads us to the following claim that we prove later.

\begin{claim}
    \label{claim:hist-ind-1}
    The statistical distance between $\view_\Z^I$ and $\view_\Z^R$ is at most $\poly(\secp)/2^\secp$.
\end{claim}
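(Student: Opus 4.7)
The plan is a coupling argument. Let $E$ be the event, in the real execution, that at some point $\Z$ initiates a session of $\pi$ carrying a $\ins{lookup}$ instruction or a session of $\pi_D$ such that the tuple $(\ast, key, auth)$ it sends to $\histind$ satisfies: $(key, auth)$ was previously written into $\alg{Dict}$ by $\histind$ while serving an $\ins{insert}$-session initiated by $\Y$. Intuitively, $E$ is the event that $\Z$ ``guesses'' one of the authentication strings $\histind$ sampled for $\Y$.

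First I would construct a coupling of the real and ideal executions using the same randomness for $\Z$ and for the random bits $\histind$ samples to respond to $\Z$-initiated sessions (in the real execution, $\histind$ additionally uses independent randomness for $\Y$-initiated sessions). By induction on activations, I would argue that, conditioned on $\neg E$, every symbol written to or read from $\Z$'s tapes is identical in both executions: $\histind$'s reply to any $\Z$-initiated session is determined entirely by the outcomes of $\alg{Dict}$ operations at the composite keys $(key, auth)$ that $\Z$'s own message specifies, and under $\neg E$ none of these ever collide with a $\Y$-inserted entry, so the outcomes depend only on the history of $\Z$-initiated sessions, which matches in both worlds by the inductive hypothesis. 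This coupling gives $\sdist{\view_\Z^R}{\view_\Z^I} \leq \Pr[E]$.

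It then remains to bound $\Pr[E]$. Because the rules of execution forbid $\Y$ from writing to $\Z$, the $auth$ strings that $\histind$ samples while serving $\Y$-initiated sessions are information-theoretically hidden from $\Z$ up until the first time event $E$ occurs. Since $\Y$ and $\Z$ are both PPT, $\Y$ initiates at most $T_1 = \poly(\secp)$ insertions and $\Z$ initiates at most $T_2 = \poly(\secp)$ lookup/delete sessions. Each $auth$ is uniform in $\bset^\secp$, so a union bound over all $T_1 T_2$ pairs yields $\Pr[E] \leq \poly(\secp)/2^\secp$, which finishes the claim.

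The main obstacle I expect is making the ``under $\neg E$ the views coincide'' step fully rigorous while $\Y$-sessions and $\Z$-sessions interleave under the activation rules of the real execution. I anticipate this reduces to a short case analysis on the type of the currently-active session together with the definition of $E$ as a stopping-time-like event. Notably, history-independence of $\alg{Dict}$ is \emph{not} required for this claim, since the claim concerns $\view_\Z$ rather than $\state_\histind$; history-independence will only be invoked later when comparing the collector's state.
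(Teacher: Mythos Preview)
Your proposal is correct and follows the same essential idea as the paper: the only way $\Z$'s view can diverge between the two executions is if $\Z$ ever issues a $\ins{lookup}$ (or $\ins{delete}$) whose $(key,auth)$ collides with one that $\histind$ sampled for a $\Y$-initiated $\ins{insert}$, and since each such $auth$ is uniform in $\bset^\secp$ and independent of everything $\Z$ sees, a union bound gives $\poly(\secp)/2^\secp$.

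The presentation differs. The paper decomposes $\view_\Z$ into a sequence of interactions $(R_\Z,\Psi_1,\dots,\Psi_q)$ and runs a step-by-step hybrid: conditioned on the first $i-1$ interactions agreeing, it does a case analysis on the type of $\Psi_i$ (message to $\Y$, activation, $\ins{insert}$, $\ins{lookup}$, $\pi_D$) and bounds the distance at each step, then chains via the telescoping fact $\sdist{(X,Y)}{(X',Y')}\le\sdist{X}{X'}+\expec{x}{\sdist{Y_x}{Y'_x}}$. You instead set up a single coupling with a global stopping-time bad event $E$ and argue $\sdist{\view_\Z^R}{\view_\Z^I}\le\Pr[E]$ directly. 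Your packaging is arguably cleaner and makes the ``up to the first collision the two runs are literally identical'' structure explicit, while the paper's version makes the per-interaction case analysis explicit; the latter is exactly the ``short case analysis'' you flagged as the main obstacle, so the two proofs end up doing the same work. Your observation that history-independence plays no role here is correct and matches the paper, which only invokes it for the companion claim about $\state_\histind$.
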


Next, we observe that if the insertions and deletions made by $\Z$ are the same, then the state of the dictionary maintained by $\X$ is the same in the real and ideal executions. This is because in the real execution, whatever $\Y$ asks to insert (and only this), it also asks to delete, and the rest follows by the history-independence of the dictionary.

\begin{claim}
    \label{claim:hist-ind-2}
    For any $\view \in \supp(\view_\Z^I)$, the statistical distance between the distribution of $\state_\histind^I$ conditioned on $\view_\Z^I = \view$ and the distribution of $\state_\histind^R$ conditioned on $\view_\Z^R = \view$ is at most $\poly(\secp)/2^\secp$.
\end{claim}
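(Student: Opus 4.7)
The plan is to reduce the claim to an equality of the dictionary's final content and invoke the history-independence of $\alg{Dict}$. Since $\state_\histind$ consists only of the memory representation of $\alg{Dict}$ (all per-message scratch data is erased by the time $\histind$ halts or is deactivated), Definition~\ref{def:hist-ind} tells us that the distribution of $\state_\histind$ is determined entirely by the current set of key-value pairs stored in the dictionary. It therefore suffices to show that the final content of $\alg{Dict}$ agrees between the two conditional distributions, outside an event of probability at most $\poly(\secp)/2^\secp$.

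Fix any $\view \in \supp(\view_\Z^I)$. Because $\Z$ is deterministic given its view (which already includes its randomness tape), conditioning on $\view_\Z^R = \view$ or on $\view_\Z^I = \view$ pins down the same sequence of insert, lookup, and delete requests that $\Z$ issues to $\X$. The first step I would take is to rule out that any of $\Z$'s lookups hit a $\Y$-inserted entry in the real execution: such a lookup would place a non-$\bot$ value coming from $\Y$ into $\view$, and since $\Y$'s inserts always carry non-empty values by the specification of $\pi$, this would make $\view$ unreachable under $\view_\Z^I$, contradicting $\view \in \supp(\view_\Z^I)$. I would also observe that if $\Z$'s delete happens to target the $(key,auth)$ of some $\Y$-entry, the final content is still unaffected: the entry is removed either by $\Z$ here or by $\Y$'s matching deletion in the Terminate Phase, so it is absent at the end in either case.

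The remaining obstacle is a subtle collision event $E$: while sampling a fresh $auth \gets \bset^\secp$ for one of $\Z$'s insertions with key $k$, $\X$ happens to produce an $auth$ for which $(k, auth)$ is already occupied by a $\Y$-entry, so that $\alg{Dict}.\alg{Insert}$ silently does nothing even though $\X$ still returns $auth$ to $\Z$. Such a $\view$ can remain in $\supp(\view_\Z^I)$, yet the final contents of $\alg{Dict}$ now differ, since in the ideal execution $\Z$'s entry at $(k, auth)$ persists. To bound $\Pr[E \mid \view_\Z^R = \view]$, I would argue that the auths assigned to $\Y$'s insertions are drawn by $\X$ using randomness that is independent of $\view$, hence remain uniform on $\bset^\secp$ even after conditioning; a union bound over the polynomially many pairs of a $\Z$-insert and a $\Y$-entry then yields $\Pr[E \mid \view_\Z^R = \view] \leq \poly(\secp)/2^\secp$.

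Outside of $E$, the $\Z$-operations act identically on the dictionary in both worlds, while each of $\Y$'s insertions is exactly cancelled by the matching deletion in the Terminate Phase (with any pre-emptive $\Z$-deletion of a $\Y$-entry still leaving no residue, as noted above). Hence the two dictionaries carry the same final content, and by history-independence their memory representations are identically distributed. Combining this identity with the negligible bound on $\Pr[E]$ gives the claimed $\poly(\secp)/2^\secp$ bound on the statistical distance between the conditional distributions of $\state_\histind$.
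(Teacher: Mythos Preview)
Your proposal is correct and follows essentially the same approach as the paper: reduce to equality of the final dictionary content, invoke history-independence, and bound the probability of a $(key,auth)$ collision between $\Y$'s and $\Z$'s operations by $\poly(\secp)/2^\secp$. Your treatment is in fact slightly more careful than the paper's in that you separately dispose of the lookup and delete cases before isolating the insert-collision event $E$; the paper simply lumps all collisions between a $\Y$-insertion's $(key,auth)$ and any $\Z$-initiated instance of $\pi$ or $\pi_D$ into a single bad event. One small point: your description of $E$ only names the direction where a $\Z$-insert lands on an existing $\Y$-entry, but the symmetric case (a $\Y$-insert's freshly sampled $auth$ coincides with an existing $\Z$-entry, so the later $\Y$-delete wipes out $\Z$'s data) is equally problematic; your union bound over all $(\Z\text{-insert}, \Y\text{-entry})$ pairs already covers it, so the argument stands.
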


Together, \cref{claim:hist-ind-1,claim:hist-ind-2} imply that the distributions $(\state_\histind^I,\view_\Z^I)$ and $(\state_\histind^R,\view_\Z^R)$ have statistical distance at most $\poly(\secp)/2^\secp$, by \cref{fact:sd-conditional}. We complete our proof by proving the above claims.

\begin{proofof}{\cref{claim:hist-ind-1}}
    Denote by $R_\Z$ the random variable corresponding to the randomness string, if any, used by $\Z$ during its execution. Suppose $\Z$, in the course of the real execution, engages in at most $q$ ``interactions'' $\Psi_1, \dots, \Psi_q$. Each $\Psi_i$ is a random variable that may be an execution of $\pi$ or $\pi_D$ with $\histind$, or a message sent to $\Y$, an activation of $\X$ or $\Y$, or a declaration of the end of the Alive phase. Note that $q$ is at most $\poly(\secp)$ since the running time of $\Z$ is at most this much. The view of $\Z$ is then given by the tuple $(R_\Z,\Psi_1,\dots,\Psi_q)$. Distinguishing between real and ideal executions, we write $\view_\Z^I = (R_\Z^I,\Psi_1^I,\dots,\Psi_q^I)$, and $\view_\Z^R = (R_\Z^R,\Psi_1^R,\dots,\Psi_q^R)$.

    % For each $i\in [q]$, we denote by $\Sigma_i^R$ the state of $\histind$ in the real execution at the point of completion of $\Psi_i^R$. In the ideal execution, $\Sigma_i^I$ is defined similarly.\footnote{Note that there could be interactions between $\histind$ and $\Y$ that could change the state of $\histind$ inbetween the $\Psi_i^R$'s.} The joint distribution of $(\state_\histind,\view_\Z)$ is determined completely by $(R_\Z^R, (\Psi_1^R,\Sigma_1^R), \dots, (\Psi_q^R,\Sigma_q^R))$, 

    For any $i\in [q]$, we show that conditioned on $(R_\Z^R, \Psi_1^R, \dots, \Psi_{i-1}^R) = (R_\Z^I, \Psi_1^I, \dots, \Psi_{i-1}^I) = (r_\Z, \psi_1, \dots, \psi_{i-1})$ for any $r_\Z$ and $\psi_1, \dots, \psi_{i-1}$ in the appropriate domains, the distribution of $\Psi_i^R$ and $\Psi_i^I$ have statistical distance at most $p(\secp)/2^{\secp}$ for some polynomial $p$. Then, by repeated application of \cref{fact:sd-conditional} (and since $R_\Z^R$ is identical to $\R_\Z^I$), the distance between $\view_\Z^R$ and $\view_\Z^I$ is at most $qp(\secp)/2^\secp = \poly(\secp)/2^{\secp}$.

    Once we fix $(R_\Z^R, \Psi_1^R, \dots, \Psi_{i-1}^R)$ and $(R_\Z^I, \Psi_1^I, \dots, \Psi_{i-1}^I)$ as above, all the possibilities for $\Psi_i^R$ or $\Psi_i^I$ are listed below. Note that, as each $\Psi$ is initiated by $\Z$, the variables $\Psi_i^R$ and $\Psi_i^I$ are identically constituted until $\Z$ \emph{receives} a messages during the execution.
    \begin{enumerate}
      \item $\Psi_i^R$ and $\Psi_i^I$ are both messages to $Y$. In this case, as the state of $\Z$ at this point is the same in both real and ideal executions, the message sent is also the same in both cases, and $\Psi_i^R = \Psi_i^I$.
      \item $\Psi_i^R$ and $\Psi_i^I$ are both activations of $\X$ or of $\Y$, or declarations of the end of the Alive phase. For the same reason as in the last case, $\Psi_i^R = \Psi_i^I$.
      \item $\Psi_i^R$ and $\Psi_i^I$ are both executions of $\pi$ of the $\ins{insert}$ type. Again, the messages sent by $\Z$ in both cases are the same. The responses to the messages is an $auth$ that is sampled by $\histind$ at random independently of anything else in the system. Thus, $\Psi_i^R$ and $\Psi_i^I$ are identically distributed in this case.
      \item $\Psi_i^R$ and $\Psi_i^I$ are both executions of $\pi_D$. Again, the messages sent by $\Z$ in both cases are the same, and there are no responses to it. So $\Psi_i^R = \Psi_i^I$.
      \item $\Psi_i^R$ and $\Psi_i^I$ are both executions of $\pi$ of the $\ins{lookup}$ type. Again, the messages sent by $\Z$ in both cases are the same -- say it is $(\ins{lookup},key,auth)$. The response to such a message is a $value$ that is retrieved by $\histind$ by looking up the key $(key,auth)$ in its dictionary. And unless this key-auth pair corresponds to one that is used by $\Y$ in the real execution of the protocol in an $\ins{insert}$ execution of $\pi$ or in an execution of $\pi_D$, the response $value$ is the same in the real and ideal executions. Further, by our requirement of deletion-requesters, the set of key-auth pairs used in any execution of $\pi_D$ by $\Y$ is a subset of those that occur in any $\ins{insert}$ execution of $\pi$ initiated by it. So the probability that $\Psi_i^R$ and $\Psi_i^I$ disagree is at most the probability that $auth$ was selected as the authentication string during some $\ins{insert}$ request by $\Y$. As $\Y$ makes at most $\poly(\secp)$ such requests and $auth$ is drawn at random by $\histind$ from the space $\bset^\secp$, the probability that this happens is at most $\poly(\secp)/2^\secp$.
    \end{enumerate}
    Thus, in all cases, the statistical distance is at most $\poly(\secp)/2^{\secp}$. This proves the claim. 
\end{proofof}

\begin{proofof}{\cref{claim:hist-ind-2}}
    We will prove this by showing that, once the view of $\Z$ is fixed to some $\view$ in the support of $\view_\Z^I$, the contents of $\dict$ at the end of both the real and ideal executions are the same. The claim then follows by the history-independence of $\dict$ (which implies then that the \emph{state} of the memory implementing $\dict$ is also the same), and the fact that $\state_\histind$ whenever $\histind$ halts or is deactivated is just the memory used to implement $\dict$.

    In the ideal execution, the sequence of operations performed on $\dict$ are exactly those that follow from instances of $\pi$ and $\pi_D$ initiated by $\Z$, as the silent $\Y_0$ does not do anything. So in the end, the contents of $\dict$ are those key-value pairs that were inserted due to an $\ins{insert}$ instance of $\pi$ by $\Z$, but never deleted by a corresponding instance of $\pi_D$, as specified by $\view$. 

    In the real world, the sequence of operations on $\dict$ are those that follow from instances of $\pi$ and $\pi_D$ initiated by $\Z$ and also by $\Y$. However, by design, $\Y$ always runs $\pi_D$ for every instance of $\pi$ that it initiates, and never runs $\pi_D$ for an instance of $\pi$ that it did not initiate. Thus, unless there is a collision in the $key$ and $auth$ in an $\ins{insert}$ instance of $\pi$ initiated by $\Y$ and an instance of $\pi$ or $\pi_D$ initiated by $\Z$, the contents of $\dict$ at the end are again exactly the key-value pairs that were inserted by an $\ins{insert}$ instance of $\pi$ initiated by $\Z$ and never deleted by a corresponding instance of $\pi_D$ run by $\Z$. 

    The probability that there is such a collision in $(key,auth)$ is at most $\poly(\secp)/2^\secp$, as there are at most $\poly(\secp)$ instances of $\pi$, and each $auth$ is selected at random from $\bset^\secp$ during an $\ins{insert}$. (Note that since we are conditioning on a $\view$ in the support of $\view_\Z^I$, any $\ins{lookup}$ request by $\Z$ using a $(key,auth)$ that it did not insert would fail, and this could only decrease the probability of such a collision given a certain number of instances.) Thus, conditioning on the view of $\Z$ being fixed in this manner, except with probability at most $\poly(\secp)/2^\secp$ over the randomness of $\histind$, the contents of the dictionary are distributed identically in the real and ideal worlds, thus proving the claim.
\end{proofof}

\subsubsection{Proof of \cref{thm:hist-ind-2}}

\thmhistindtwo*

Fix any environment $\Z$ such that $(\Z,\pi,\pi_D)$ is statistically deletion-compliant in the presence of $(\pi,\pi_D)$, and any deletion-requester $\Y$. We prove the conditional deletion-compliance of $(\histind_2,\pi,\pi_D)$ by showing that for any such $\Z$ and $\Y$, the states of $\histind_2$ and $\Z$ at the end of execution is almost identical to those of a sequence of different configurations of entities. Fix some security parameter $\secp$. Throughout the real execution, if $\Z$ at some point uses a string $auth$ that was given to $\Y$ during an $\ins{insert}$ execution of $\pi$, the data collector's guarantees fail. But, as $auth$ is chosen at random from $\bset^\secp$, this happens with probability at most $\poly(\secp)/2^{\secp}$. We count this probability towards the error and, in the rest of the proof, we condition on this event not happening. The guarantees would also fail if $\Y$ executed $\pi_D$ to ask for deletion of an instance of $\pi$ it did not initiate, but by definition of $\Y$ this does not happen. Throughout the rest of the proof, we also refer to $\histind_2$ as $\X$ for ease of notation.

Let $(\state_\X^R,\state_\Z^R)$ be the corresponding parts of the result of the real execution $\erpi(\secp)$. We define a new data collector $\X_1$ based on $\X$ and $\Y$ to use as a hybrid in our arguments. $\X_1$ essentially simulates two instance of $\X$ -- $\X_{1,\Z}$,  which handles the instances of $\pi$ initiated by $\Z$, and $\X_{1,\Y}$, which acts independently (using a separate dictionary) and handles instances of $\pi$ initiated by $\Y$. Note that an $\X_1$ is not really a valid data collector as a data collector has no idea which protocols are initiated by $\Y$ and which by $\Z$, but we only consider $\X_1$ in the proof after fixing $\Z$ and $\Y$, where it is well-defined. Let $\state_{\X_{1,\Z}}^1$ represent the part of the state of $\X_1$ that corresponds to the memory used by the simulation $\X_{1,\Z}$, and $\state_{\Z}^1$ the state of $\Z$, both in the execution $\exec^{\X_1,\pi,\pi_D}_{\Z,\Y,(\pi,\pi_D)}$. Note that this execution looks identical to $\Z$ to the real execution (due to the conditioning on no collisions with $\Y$). By the history-independence of the dictionary used, we have the following:

\begin{claim}
    \label{claim:hist-ind-2-1}
    $(\state_\X^R,\state_\Z^R)$ and $(\state_{\X_{1,\Z}}^1,\state_{\Z}^1)$ are distributed identically.
\end{claim}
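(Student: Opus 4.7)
The plan is to establish the claim by a direct coupling between the execution of $(\X,\pi,\pi_D)$ and that of $(\X_1,\pi,\pi_D)$ in the presence of $\Z$, $\Y$ and the auxiliary protocols $(\pi,\pi_D)$, using the same randomness tapes for $\Z$, for $\Y$, and for corresponding parts of the data collectors (one copy of $\X$'s randomness is used by $\X_{1,\Z}$, an independent fresh copy by $\X_{1,\Y}$). Under this coupling I will argue that the transcripts seen by $\Z$ are identical step-by-step, and that the final memory contents of $\X$'s single dictionary agree with those of $\X_{1,\Z}$'s dictionary. Together with the history-independence of the dictionary implementation, this will yield identical distributions.

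First I would show, by induction over the sequence of activations, that $\Z$'s view is the same in the two executions. At any activation of $\Z$, its work tape, randomness tape, and previously received messages coincide by the inductive hypothesis, so $\Z$ issues the same next action. When that action is an instance of $\pi$ or $\pi_D$ with the collector, $\X_1$ routes it to $\X_{1,\Z}$, which runs exactly $\X$'s code against its own dictionary with the coupled randomness; the conditioning that $\Z$ never references a $(key,auth)$ pair used by $\Y$ guarantees that $\X$'s response to any $\Z$-originated $\ins{lookup}$ or $\ins{delete}$ depends only on operations coming from the $\Z$-side, so it can be coupled bit-for-bit with the corresponding response of $\X_{1,\Z}$. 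The same holds for the auxiliary $(\pi,\pi_D)$ instances that $\X$ (resp.\ $\X_1$) initiates toward $\Z$ in order to forward data to the external store: these are triggered by the same $\Z$-side and $\Y$-side requests in the same order, and the $exkey$, $exauth$ values exchanged on the $\Z$-side are produced identically in both worlds.

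Next I would compare $\state_\X^R$ with $\state_{\X_{1,\Z}}^1$. Since the state of $\X$ consists entirely of the memory used to implement its single dictionary $\alg{Dict}$, it suffices to compare the abstract contents of this dictionary against the abstract contents of $\X_{1,\Z}$'s dictionary and then invoke history-independence. In $\X$'s dictionary entries are inserted and deleted from two sources, $\Z$ and $\Y$. By the Terminate Phase rules, $\Y$ runs $\pi_D$ for every $\pi$ instance it initiated, and under the no-collision conditioning each such $\pi_D$ removes exactly the entry $\Y$ itself inserted and touches no $\Z$-inserted entry. Therefore at the end of the execution the abstract content of $\X$'s dictionary equals exactly the set of $\Z$-inserted $(key,auth)$ pairs that were not subsequently deleted by $\Z$, which is identical to the abstract content of $\X_{1,\Z}$'s dictionary at the end of its coupled execution. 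By \cref{def:hist-ind}, any two sequences of dictionary operations producing the same content induce the same memory-representation distribution, so the memory states coincide.

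The main obstacle is the bookkeeping that ensures $\Y$'s activity leaves no observable trace either in $\Z$'s view or in the abstract dictionary at the end of execution: the no-collision conditioning rules out the only way $\Y$'s entries could interfere with $\Z$'s lookups or with the auxiliary protocols that $\X$ forwards to $\Z$, and the Terminate-Phase structure ensures that every entry $\Y$ inserted is removed before we read off $\state_\X$. Once these two facts are isolated, history-independence closes the argument cleanly, giving equality of the joint distributions rather than merely statistical closeness.
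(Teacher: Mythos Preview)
Your approach is the same as the paper's, which in fact gives only a one-sentence justification (the no-collision conditioning makes the execution look identical to $\Z$, and history-independence handles the dictionary state); you have simply spelled out the details of the coupling and the induction.

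One small inconsistency to fix: you couple $\X_{1,\Z}$ with $\X$'s randomness but give $\X_{1,\Y}$ ``an independent fresh copy,'' and then later assert that the auxiliary $(\pi,\pi_D)$ instances forwarded to $\Z$ --- including those triggered by $\Y$-side requests --- carry identical $exkey$ values in both worlds. With independent randomness for $\X_{1,\Y}$, the $exkey$ it samples for $\Y$'s inserts would differ from what $\X$ samples in the real execution, so $\Z$ would see different auxiliary messages and the induction on $\Z$'s view would break. The fix is to couple $\X_{1,\Y}$'s randomness with the portion of $\X$'s randomness consumed on $\Y$-originated operations (and $\X_{1,\Z}$'s with the portion consumed on $\Z$-originated operations); with that correction your argument goes through exactly as written.
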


Next, we reorganize the entities $\X_1$ and $\Y$ into two new entities $\X_2$ and $\Y_2$ in order to fit them in an auxiliary execution that we will set up. $\X_2$ is essentially $\X_{1,\Z}$, but additionally will play the role of the environment in the auxiliary execution. In order to do this, whenever it is activated in the course of the execution by default (that is, not because of a message that was sent to it or an output by one if the ITIs it controls), it simply activates $\Z$, which will act as the data collector in this auxiliary execution. $\Y_2$, on the other hand, acts as the combination of $\Y$ and $\X_{1,\Y}$: it simulates $\Y$ and $\X_{1,\Y}$, and whenever $\Y$ initiates a protocol instance of $\pi$ or $\pi_D$, it simulates its interaction with $\X_{1,\Y}$, and when $\X_{1,\Y}$ tries to initiate a protocol with $\Z$, $\Y_2$ runs this protocol with the actual $\Z$ in its place. And when $\Z$ sends a message intended for $\Y$, $\Y_2$ forwards this to its simulation of $\Y$.

It may be seen that, under the above setup, $\cexec^{\Z,\pi,\pi_D}_{\X_2,\Y_2,(\pi,\pi_D)}(\secp)$ is a valid auxiliary execution. Let $(\state_{\X_2}^2,\state_{\Z}^2)$ be the corresponding parts of this execution. It may be seen that all we have done is reorganize the various entities and the actual computation and storage going on is the same as in the previous execution. Thus, we have the following,

\begin{claim}
    \label{claim:hist-ind-2-2}
    $(\state_{\X_{1,\Z}}^1,\state_{\Z}^1)$ and $(\state_{\X_2}^2,\state_{\Z}^2)$ are distributed identically.
\end{claim}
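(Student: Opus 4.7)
The plan is to establish that the auxiliary execution $\cexec^{\Z,\pi,\pi_D}_{\X_2,\Y_2,(\pi,\pi_D)}(\secp)$ is merely a syntactic reorganization of $\exec^{\X_1,\pi,\pi_D}_{\Z,\Y,(\pi,\pi_D)}(\secp)$: exactly the same Turing-machine computations are performed on exactly the same inputs in exactly the same order, but partitioned differently among the participating entities. The proof would proceed by an explicit coupling that uses identical randomness for each pair of corresponding (sub-)computations across the two executions, and then an induction on the number of activations showing that the state of $\X_{1,\Z}$ always equals the ``dictionary portion'' of the state of $\X_2$, while $\Z$'s state is literally the same on both sides.

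First I would spell out the correspondence: $\X_{1,\Z}$ in the first execution plays the same role as the $\histind_2$-code running inside $\X_2$ in the second; the simulation of $\X_{1,\Y}$ inside $\X_1$ corresponds to the simulation of $\X_{1,\Y}$ that $\Y_2$ runs internally; $\Y$ in the first corresponds to the simulation of $\Y$ inside $\Y_2$; and $\Z$ is literally the same party in both. Under this correspondence I would walk through each category of event and verify that it behaves identically: messages from $\Z$ to $\Y$ are either delivered directly or forwarded transparently by $\Y_2$ to its internal $\Y$; sessions of $\pi,\pi_D$ initiated by $\Z$ with $\X_1$ (and handled by $\X_{1,\Z}$) become sessions that $\Z$ plays against $\X_2$ in the auxiliary execution with identical transcripts; sessions that $\Y$ initiates are handled by the internal simulation of $\X_{1,\Y}$ in both cases; and the external-storage sessions that $\X_1$ would initiate with $\Z$ while processing $\Y$'s requests are, in the second execution, initiated by $\Y_2$ toward $\Z$, which is precisely what the auxiliary-execution rules permit once we set $\Phi = (\pi,\pi_D)$.

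Next I would verify that the activation order coincides. The key subtlety is the default-activation rule: in the first execution, whenever no machine has written a message, control returns to $\Z$; in the second execution this is simulated by the explicit rule that $\X_2$, whenever activated by default, immediately passes control to $\Z$ (who is the nominal data collector of the auxiliary execution). An induction on the activation count then gives that after each step the contents of $\X_{1,\Z}$'s dictionary equal those of $\X_2$'s dictionary and the work-tape content of $\Z$ is identical in the two coupled executions, so in particular the joint marginals $(\state_{\X_{1,\Z}}^1,\state_{\Z}^1)$ and $(\state_{\X_2}^2,\state_{\Z}^2)$ agree exactly.

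The main obstacle is a bookkeeping one rather than a conceptual one: I must check that the reorganization does not violate any structural rule of the (auxiliary) execution model -- in particular, that $\Y_2$ is allowed to initiate the $\pi,\pi_D$-sessions with $\Z$ that the simulated $\X_{1,\Y}$ demands (covered by $\Phi = (\pi,\pi_D)$), that the ``no $\Y$-to-$\Z$ channel'' restriction is respected (the \emph{simulated} $\Y$ inside $\Y_2$ never sends a message to $\Z$; only the simulated $\X_{1,\Y}$ does, via a permitted auxiliary protocol), and that no session IDs collide. Once these bookkeeping conditions are verified, the equality of states is immediate because no computation has actually changed -- only the names of the parties performing it have.
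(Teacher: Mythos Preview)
Your proposal is correct and takes essentially the same approach as the paper. The paper's entire justification is the single sentence ``It may be seen that all we have done is reorganize the various entities and the actual computation and storage going on is the same as in the previous execution,'' and your proposal is a careful, explicit expansion of precisely this relabeling/coupling argument, including the bookkeeping checks (activation order, permitted channels, role of $\Phi$) that the paper leaves implicit.
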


Let $(\state_{\X_2}^3,\state_{\Z}^3)$ be the corresponding parts of the ideal auxiliary execution $\cexec^{\Z,\pi,\pi_D}_{\X_2,\Y_0,(\pi,\pi_D)}(\secp)$. By the auxiliary-deletion-compliance of $(\Z,\pi,\pi_D)$ in the presence of $(\pi,\pi_D)$, we have the following.

\begin{claim}
    \label{claim:hist-ind-2-3}
    The distributions of $(\state_{\X_2}^2,\state_{\Z}^2)$ and $(\state_{\X_2}^3,\state_{\Z}^3)$ have negligible statistical distance.
\end{claim}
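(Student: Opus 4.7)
The plan is to invoke the hypothesized auxiliary-deletion-compliance of $(\Z,\pi,\pi_D)$ in the presence of $(\pi,\pi_D)$ directly, by reading the execution $\cexec^{\Z,\pi,\pi_D}_{\X_2,\Y_2,(\pi,\pi_D)}$ through Definition~\ref{def:compliance-aux} with $\Z$ in the role of data collector, $\X_2$ in the role of environment, $\Y_2$ in the role of deletion-requester, and $(\pi,\pi_D)$ as the auxiliary protocol set $\Phi$. Once the admissibility of $\X_2$ and $\Y_2$ in these roles is checked, the claim is essentially the statistical-closeness conclusion of auxiliary-deletion-compliance, modulo passing from $\view_{\X_2}$ to $\state_{\X_2}$.

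The first step is to verify admissibility. Both $\X_2$ and $\Y_2$ are PPT, since they are mechanical reorganizations of the PPT entities $\X=\histind_2$ and $\Y$ together with their internal simulations, incurring only polynomial bookkeeping overhead. Next, $\Y_2$ must behave as an honest deletion-requester toward $\Z$: all sessions of $\pi$ and $\pi_D$ it initiates with $\Z$ are those triggered by the internal copy of $\X_{1,\Y}$, which follows the honest $\histind_2$ specification and therefore uses honest client-side ITIs for every outsourced insertion and for every corresponding deletion. Moreover, every outsourced $\pi$ session $\Y_2$ opens with $\Z$ is matched by a corresponding $\pi_D$ session, because the terminate-phase discipline forces the internal $\Y$ to delete every $\pi$ session it opened with the internal $\X$, and $\X_{1,\Y}$'s honest code propagates each such deletion into a matching $\pi_D$ toward $\Z$. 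Finally, $\Y_2$ does not send any non-protocol messages to $\X_2$, since its only coupling to $\X_2$ is via the $(\pi,\pi_D)$ sessions permitted by $\Phi$.

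Given these admissibility facts, auxiliary-deletion-compliance yields a negligible bound on the statistical distance between $(\state_\Z^{R,\secp},\view_{\X_2}^{R,\secp})$ drawn from $\cexec^{\Z,\pi,\pi_D}_{\X_2,\Y_2,(\pi,\pi_D)}(\secp)$ and $(\state_\Z^{I,\secp},\view_{\X_2}^{I,\secp})$ drawn from $\cexec^{\Z,\pi,\pi_D}_{\X_2,\Y_0,(\pi,\pi_D)}(\secp)$. To convert this into the statement of the claim it suffices to observe that $\state_{\X_2}$ at the end of its execution is a deterministic function of $\view_{\X_2}$: the contents of the work tape are computed by $\X_2$'s program from the contents of its input, incoming, randomness, and control tapes, all of which are captured in the view. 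Applying this common function to both distributions can only decrease statistical distance, so we obtain the required closeness of $(\state_{\X_2},\state_\Z)$ in the two executions.

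The main obstacle is the admissibility check for $\Y_2$: one must trace carefully that every $\pi$ session $\X_{1,\Y}$ opens toward the external storage (played here by $\Z$) is indeed matched by a $\pi_D$ session before the terminate phase concludes, and that this matching is driven solely by $\Y$'s deletion of its $\X$-facing $\pi$ sessions, not by any back-channel from $\X_2$ through the simulated $\Y$. This is straightforward from the $\histind_2$ code, but it is the only place where the honest client-side behavior of the outsourcing protocol is exploited substantively, and so it warrants explicit verification.
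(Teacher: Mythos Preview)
Your proposal is correct and takes essentially the same approach as the paper: the paper's entire justification for this claim is the single sentence ``By the auxiliary-deletion-compliance of $(\Z,\pi,\pi_D)$ in the presence of $(\pi,\pi_D)$,'' and you simply unpack this invocation in more detail. Your additional checks---that $\X_2$ and $\Y_2$ are admissible in their roles, that every outsourced $\pi$ session from $\Y_2$ is matched by a $\pi_D$ session, and that $\state_{\X_2}$ is computable from $\view_{\X_2}$ so the definition's guarantee on $(\state_\Z,\view_{\X_2})$ transfers to $(\state_{\X_2},\state_\Z)$---are all points the paper leaves implicit, so your write-up is in fact more careful than the original on this claim.
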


Let $(\state_\X^I,\state_\Z^I)$ be the corresponding parts of the ideal execution $\eipi(\secp)$. In a manner identical to \cref{claim:hist-ind-2-1,claim:hist-ind-2-2}, we can show the following.

\begin{claim}
    \label{claim:hist-ind-2-4}
    The distributions of $(\state_{\X}^I,\state_{\Z}^I)$ and $(\state_{\X_2}^3,\state_{\Z}^3)$ are distributed identically.
\end{claim}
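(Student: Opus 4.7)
The plan is to mirror, step by step, the arguments used for Claims~\ref{claim:hist-ind-2-1} and~\ref{claim:hist-ind-2-2}, but working entirely inside the ideal world where $\Y$ has been replaced by the silent $\Y_0$. A pleasant simplification is that no bad-event conditioning is needed here: the $\poly(\secp)/2^\secp$ collision event in the real-world analysis arose from $\Z$ happening to guess an $auth$ string handed to $\Y$, and $\Y_0$ never receives or sends any such strings.

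First, I would introduce a hybrid data collector $\X_1^I$ that internally splits $\X$ into two independent simulations: $\X_{1,\Z}$, maintaining its own history-independent dictionary and handling instances of $\pi,\pi_D$ initiated by $\Z$, and $\X_{1,\Y}$, which would handle those initiated by $\Y_0$. Since $\Y_0$ initiates nothing, $\X_{1,\Y}$'s dictionary stays empty throughout. By the same invocation of history-independence used in Claim~\ref{claim:hist-ind-2-1}, the memory representation of the single dictionary inside $\X$ is determined entirely by the current set of $(key,auth)$ entries, which in the ideal execution originates solely from $\Z$'s operations --- exactly the sequence that $\X_{1,\Z}$ processes. Hence $(\state_\X^I,\state_\Z^I)$ and $(\state_{\X_{1,\Z}}^{I},\state_\Z^{I})$ are identically distributed.

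Next, following the reorganization from Claim~\ref{claim:hist-ind-2-2}, I would repackage $\X_{1,\Z}$ as the environment $\X_2$ of an auxiliary execution (which additionally hands activation to $\Z$ whenever no other activation is scheduled), and repackage $(\Y_0,\X_{1,\Y})$ as the deletion-requester. Because both components are silent, this combined deletion-requester is simply $\Y_0$. The resulting execution is precisely $\cexec^{\Z,\pi,\pi_D}_{\X_2,\Y_0,(\pi,\pi_D)}(\secp)$, and the ITIs inside it read from and write to exactly the same tapes, in exactly the same order, as in the hybrid above. Thus $(\state_{\X_{1,\Z}}^{I},\state_\Z^{I})$ and $(\state_{\X_2}^3,\state_\Z^3)$ coincide as distributions, giving the claim.

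The whole argument is essentially syntactic rebookkeeping, so I do not anticipate any conceptual obstacle. The one place to be careful is verifying that the activation-passing schedule of the auxiliary execution --- where $\Z$ is treated as the data collector and is allowed to initiate $(\pi,\pi_D)$-sessions with $\X_2$ --- matches, message-for-message, that of the ideal execution in which $\Z$ interacts with $\X$ and $\X$ in turn runs $(\pi,\pi_D)$ with its outsourcing partner. This correspondence was already spelled out for the real-world analogue in Claim~\ref{claim:hist-ind-2-2}, and replacing $\Y$ by $\Y_0$ only deletes interactions rather than reorders them, so the same verification transfers verbatim.
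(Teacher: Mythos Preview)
Your proposal is correct and follows exactly the approach the paper intends: the paper's own justification for this claim is simply ``in a manner identical to \cref{claim:hist-ind-2-1,claim:hist-ind-2-2},'' and you have spelled out precisely what that means in the ideal world, including the (correct) observation that the collision conditioning becomes vacuous when $\Y$ is replaced by the silent $\Y_0$.
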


Together, \cref{claim:hist-ind-2-1,claim:hist-ind-2-2,claim:hist-ind-2-3,claim:hist-ind-2-4} show that, conditioned on the $auth$ strings of $\Y$ not being used by $\Z$, the states of $\X$ and $\Z$ in the real and ideal executions have negligible statistical distance. As this event being conditioned on happens except with negligible probability as well, the theorem follows.

\subsection{Data Summarization and Differentially Private Algorithms}
\label{sec:proofs-dp}

\thmdp*

The state $\state_\diffp$ of the data collector $\diffp$ consists of the memory used to implement and store the history-independent dictionary $\dict$, the summary $summary$ ($\bot$ or otherwise), the threshold $thr$, and the variables $count$ and $summarized$.

Fix any environment $\Z$ and a $1$-representative deletion-requester $\Y$. To start with, the only messages that $\Z$ receives from $\diffp$ are the $key$ strings, which are chosen at random independently of everything else, and so we have the claim below from arguments similar to those for \cref{claim:hist-ind-1}.

\begin{claim}
    \label{claim:dp-1}
    $\view_\Z^R$ and $\view_\Z^I$ are distributed identically.
\end{claim}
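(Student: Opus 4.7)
My plan is to establish the identical distribution of $\view_\Z^R$ and $\view_\Z^I$ directly, by tracing through everything that can possibly be written to $\Z$'s tapes in the two executions and showing that each ingredient has the same distribution. The argument is perfect (zero distance), not just statistical, because $\diffp$'s only external response is a freshly sampled uniform string independent of its internal state.

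The first step is to recall what constitutes $\view_\Z$: the contents of $\Z$'s randomness tape (identical in distribution to the randomness in any execution), its input/output/outgoing tapes (which are deterministic functions of what $\Z$ has read), and its incoming tape. So the real content of the claim reduces to: the distribution of the sequence of messages $\Z$ receives is identical in the two executions, given identical randomness. The second step is to pin down the sources of messages to $\Z$. By the rules of the framework, $\Y$ is forbidden from writing to $\Z$'s incoming tape, so every message $\Z$ receives is produced by $\diffp$ through the server-side ITI of a protocol instance that $\Z$ itself initiated. Inspecting the protocols $\pi$ and $\pi_D$ in \cref{fig:dp-model}, the only message $\diffp$ ever sends on such an instance is, in response to an $\ins{insert}$ request, a string $key \gets \bset^\secp$ sampled uniformly and independently; instances of $\pi_D$ generate no response at all.

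The main step is a straightforward induction on the activations of $\Z$. Fix a prefix of $\view_\Z$ of length $i$ and condition on $\view_\Z^{R}$ and $\view_\Z^{I}$ agreeing on this prefix. Then $\Z$'s next action is a deterministic function of this prefix and is therefore identical in the two executions. There are three cases: (i) $\Z$ initiates an $\ins{insert}$ instance of $\pi$, in which case $\diffp$ responds with a fresh uniform $key \in \bset^\secp$ whose distribution is independent of $\diffp$'s internal state (and in particular of whether $\Y$ has inserted anything, caused the summary threshold to be crossed, etc.); (ii) $\Z$ initiates an instance of $\pi_D$, in which case no response is written to $\Z$'s tapes in either execution; (iii) $\Z$ writes to $\Y$'s incoming tape, in which case $\Y$ is activated (and may act in the real execution, whereas $\Y_0$ immediately halts in the ideal one), but since $\Y$ never writes to $\Z$, nothing reaches $\Z$'s tapes before it is next activated. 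Hence the $(i+1)$-st component of $\view_\Z$ has the same conditional distribution in both executions, completing the inductive step.

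The only place any delicacy enters is the observation used in case (i): although the internal state of $\diffp$ genuinely does differ between the real and ideal executions (the dictionary, $count$, and possibly $summarized$, $summary$, $thr$ are all affected by $\Y$'s insertions), none of this leakage reaches $\Z$, because the protocol response is a freshly drawn uniform string. This is where the absence of a $\ins{lookup}$ interface in $\diffp$ is essential — unlike the analysis of $\histind$, we do not even need to condition on $\Z$ failing to guess an authentication string, and so the claim holds with equality rather than negligibly.
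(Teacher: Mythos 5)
Your proof is correct and follows essentially the same route as the paper: the paper disposes of this claim in one line by noting that the only messages $\Z$ ever receives from $\diffp$ are freshly sampled uniform $key$ strings and then invoking the activation-by-activation induction already carried out for \cref{claim:hist-ind-1}, which is exactly the induction you write out explicitly. Your added observation that the absence of a $\ins{lookup}$ interface removes the only source of statistical error (so the distance is exactly zero rather than $\poly(\secp)/2^\secp$) matches the paper's reason for stating this claim as an identity rather than a bound.
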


Fix any view of $\Z$ (to something from an ideal execution). That is, fix the sequence and content of all messages sent and received by $\Z$. We next show that, even under this conditioning, the state of $\diffp$ is distributed almost the same at the end of the real and ideal executions.

\begin{claim}
    \label{claim:dp-2}
    Conditioned on $\view_\Z^R = \view_\Z^I = \view$ for some $\view\in\supp(\view_\Z^I)$, the statistical distance between the distributions of $\state_\diffp^R$ and $\state_\diffp^I$ is at most $\eps + 1/\secp + \poly(\secp)/2^{\secp}$.
\end{claim}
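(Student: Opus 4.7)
\emph{Proof proposal for Claim~\ref{claim:dp-2}.} The plan is to decompose the statistical distance into three sources matching the three terms in the stated bound, following the intuition items (i)--(iv) sketched just after the theorem. First, I will condition on the event that the $\secp$-bit $key$ strings that $\diffp$ samples while processing $\Y$'s insertion do not collide with any of the $key$ strings used for $\Z$'s interactions; since all keys are drawn uniformly from $\bset^\secp$ and the total number of insertions is $\poly(\secp)$, this event fails with probability at most $\poly(\secp)/2^\secp$. Under this conditioning, $\Y$'s insertion and its termination-phase deletion both act on a dedicated key never touched by $\Z$, so after both are processed the multiset of $(key,value)$ pairs in $\dict$ is the same as in the ideal execution. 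By the history-independence of $\dict$, via the argument of Claim~\ref{claim:hist-ind-2}, the memory representation of $\dict$ and the final value of $count$ are identically distributed in the two worlds.

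Next, I would control the pair $(summarized,thr)$. Let $M$ be the maximum of $count$ attained during the ideal execution, a deterministic quantity once $\view$ is fixed. In the real execution the maximum of $count$ is either $M$ or $M+1$, depending on whether $\Y$'s insertion arrives at a moment that raises the running maximum. Because summarization triggers exactly when $count$ first reaches $thr_0$ and $thr_0$ is uniform on an interval of size $\secp$, the flag $summarized$ can disagree across the two worlds only when $thr_0=M+1$, contributing at most $1/\secp$ to the statistical distance. Once $summarized$ is pinned to the same value, the algorithm's invariant that after summarization the variable $thr$ is incremented and decremented in lockstep with $count$, together with $count$ already matching, forces $thr$ to match as well; when $summarized=\mathsf{false}$, $thr$ is simply $thr_0$ in both worlds.

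The main technical step, contributing the $\eps$ term, is to bound the statistical distance between the two distributions of the stored $summary$ using the differential privacy of $\alg{summarize}$. My plan is to couple the two executions so that they share $thr_0$, share the keys $\diffp$ samples for $\Z$'s operations, and share the randomness for selecting the size-$\secp$ subset $S$ at the moment of summarization. At any point in the execution, the content of $\dict$ in real and ideal differs by at most $\Y$'s single value (after $\Y$'s insert has been processed but before its delete), so at the moment of summarization one can couple the random subsets so that $S_R$ and $S_I$ either coincide, when the sampler selects only from entries present in both dictionaries, or differ in exactly one element, with the slot landing on $\Y$'s value in real paired with the corresponding replacement entry in ideal. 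Applying Fact~\ref{fact:dp} to the $\eps$-differential privacy of $\alg{summarize}$ on each such swap-adjacent pair then bounds the statistical distance of $summary$ by $\eps$.

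The delicate point, which I expect to be the main obstacle, is that the summarization trigger times in the two worlds can themselves differ: $\Y$'s insert may push $count$ up to $thr_0$ in real strictly before the corresponding ideal trigger would fire, so the snapshots of $\dict$ being compared across worlds are taken at different operation indices, and a naive coupling of subsets from these snapshots need not be swap-adjacent. Resolving this will require aligning $\Y$'s value in the real snapshot with the $\Z$-insert that would otherwise be the first to take ideal's $count$ to $thr_0$, so that after the alignment the two subsets fed to $\alg{summarize}$ are genuinely swap-adjacent and the differential privacy guarantee applies cleanly. Once this alignment is set up, the three contributions combine by the triangle inequality to yield the claimed bound $\eps+1/\secp+\poly(\secp)/2^{\secp}$.
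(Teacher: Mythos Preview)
Your decomposition into three error sources is the right shape, and your treatment of the key-collision term and of the final values of $count$, $thr$, and $summarized$ is essentially correct. The gap is in your handling of $summary$: the same-$thr_0$ coupling you propose does not yield swap-adjacent inputs to $\alg{summarize}$, and the alignment you sketch in the last paragraph does not work in general.

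Here is a concrete failure. Fix $\view$ so that $\Z$ inserts $a$, inserts $b$, activates $\Y$ (who inserts $y$), then $\Z$ deletes $a$, deletes $b$, inserts $c$, inserts $d$, inserts $e$; $\Y$'s deletion comes in the Terminate phase. With $thr_0=3$, the real execution triggers summarization at $\Y$'s insert, on the dictionary $\{a,b,y\}$; the ideal execution with the same $thr_0=3$ triggers at the insertion of $e$, on $\{c,d,e\}$. These two snapshots disagree in every entry, so no coupling of the random $\secp$-subsets makes them swap-adjacent, and a single application of differential privacy cannot bound their summaries within $\eps$. More generally, between the real trigger time $t_R$ and the ideal trigger time $t_I$ for the same $thr_0$, $\Z$ may perform arbitrarily many deletions and re-insertions while keeping ideal's $count$ below $thr_0$, so the two trigger-time dictionaries can share no entries at all.

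The paper avoids this by \emph{not} coupling the same initial threshold. It defines $f_R(n)$ and $f_I(n)$ as the insertion indices at which summarization fires in the two worlds when $thr_0=n$, and proves (Claim~\ref{claim:dp-3}) that whenever $f_R(n)\in[i_\Y,i_{\Y,D}]$ and $n$ is not the least such value, one has $f_R(n)=f_I(n-1)$: real with threshold $n$ and ideal with threshold $n-1$ trigger at the \emph{same} insertion index. At that common time the two dictionaries differ only in $\Y$'s single entry, so differential privacy gives the $\eps$ bound cleanly. The one unmatched value of $n$ (the least $n$ with $f_R(n)\in[i_\Y,i_{\Y,D}]$) is what contributes the $1/\secp$ term. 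Thus the key idea you are missing is this shifted matching $n\mapsto n-1$ across the two worlds, which synchronizes the trigger times; once you have that, the rest of your outline goes through.
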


Using \cref{fact:sd-conditional}, the \cref{claim:dp-1,claim:dp-2} together prove \cref{thm:dp}. We now finish by proving \cref{claim:dp-2}.

\begin{proofof}{\cref{claim:dp-2}}
    $\state_\diffp$ consists of the memory used to implement the dictionary $\alg{Dict}$, the variables $thr$, $count$ and $summarized$, and the $summary$ that may have been computed during the execution. A bad case in which $\diffp$ actually performs poorly is when the $key$ given to $\Y$ during an insertion is also used at some point by $\Z$, in an insertion or deletion request. However, as $key$ is chosen uniformly at random from $\bset^\secp$, this happens with probability at most $\poly(\secp)/2^{\secp}$ over the randomness of $\diffp$. We add this probability of failure to our error and, in the rest of the proof, condition on this event not happening. (The other bad event is if $\Y$ asks for deletion of an instance of $\pi$ that it did not execute, but by the requirements on $\Y$ this does not happen.) After the above conditioning, all valid insertion and deletion requests by $\Z$ and $\Y$ result in insertions and deletions as expected. Under this conditioning, we fix the view of $\Y$ to anything valid, and show that the conclusion of \cref{claim:dp-2} holds even for this fixed view of $\Y$.

    Next, we observe that conditioning on any fixing of the views of $\Z$ and $\Y$, the distributions, at the end of the executions, of $\alg{Dict}$ and the tuple $(thr,count,summarized,summary)$ are independent of each other, for the following reason. Fixing the views of $\Z$ and $\Y$ fixes the sequence of $(key,value)$ pairs that are inserted into and deleted from $\alg{Dict}$. Since $\alg{Dict}$ is history-independent, its distribution (that is, the distribution of the memory used to implement it) is determined completely by the content resulting from this sequence of insertions and deletions. Due to how it is computed, the distribution of $summary$ is determined completely by the content of $\alg{Dict}$ at the point when it contain $thr$ entries for the first time; both this and the values of $thr$, $count$ and $summarized$ are determined completely by the sequence of valid insertions and deletions in the execution. Thus, fixing the views makes these two distributions independent.

    Due to this independence, it is sufficient to argue separately that $\alg{Dict}$ and the above tuple are distributed similarly in the two executions. Since we have fixed the view of $\Z$, and hence the sequence of insertions and deletions made by it, and $\Y$ also deletes whatever it inserts, the contents of $\alg{Dict}$ are indeed the same at the end of the real and ideal executions. Next we argue about the distributions of $thr$, $count$ and $summary$ (and ignore $summarized$ since, without loss of generality, it is completely determined by $summary$).

    Let $vars_n^R$ and $vars_n^I$ denote the distributions of $(thr,count,summary)$ in the real and ideal executions, respectively, when $thr$ is chosen by $\diffp$ to be $n$. The distribution of this tuple in the real and ideal executions are equal convex combinations of the $vars_n^R$'s and $vars_n^I$'s, respectively, for the various values of $n \in [\secp+1,2\secp]$. Below we essentially show that for most values of $n$, there is an $n'$ (and vice versa) such that the distribution $vars_n^R$ is almost identical to $vars_{n'}^I$.
    
    Suppose there are a total of $m$ valid insertions in the sequence of operations initiated by $\Z$ (which we have fixed according to the conditioning in the claim being proven). Together with the insertion by $\Y$, there are a total of $(m+1)$ insertions that we denote by $\ins{ins}_1,\dots,\ins{ins}_{m+1}$. Of these, the $i$ corresponding to the insertion by $\Y$ we denote by $i_{\Y}$, and the $i$ that corresponds to the insertion that just precedes the deletion request by $\Y$ by $i_{\Y,D}$.

    We define two functions $f_R, f_I: [\secp+1,2\secp] \ra ([m+1]\cup\set{\bot})$ that are defined as follows. For any $n \in [\secp+1,2\secp]$, $f_R(n)$ is set to be the $i$ such that when $\diffp$ sets $thr=n$ in the beginning, $summary$ in the real execution is computed at $\ins{ins}_i$, or $\bot$ if $summary$ is never computed in this case. Similarly, $f_I(n)$ is set to be the $i$ such that $summary$ is computed in the ideal execution at the insertion corresponding to $\ins{ins}_i$ with $thr=n$ in the beginning, and $\bot$ if it is never computed. Observe that \begin{enumerate*} [label=(\itshape\roman*\upshape)] \item $summary$ is computed the first time the number of entries in $\alg{Dict}$ equals $thr$, and \item the number of entries in $\alg{Dict}$ just after $\ins{ins}_i$ is the same in the ideal and real executions if $i < i_\Y$ or $i > i_{\Y,D}$, and otherwise is exactly one more in the real execution than in the ideal. \end{enumerate*} These two observations immediately give the following claim. 

    \begin{claim}
        \label{claim:dp-3}
        The functions $f_R$ and $f_I$ have the following properties for any $n\in[\secp+1,2\secp]$:
        \begin{enumerate}
          \item If $f_R(n) < i_\Y$, then $f_I(n) = f_R(n)$.
          \item If $f_R(n) > i_{\Y,D}$, then $f_I(n) = f_R(n)$.
          \item If $f_R(n) = \bot$, then $f_I(n) = \bot$.
          \item If $n$ is such that $f_R(n) \in [i_\Y,i_{\Y,D}]$ and it is not the least $n$ with this property, then $f_R(n) = f_I(n-1)$.
        \end{enumerate}
    \end{claim}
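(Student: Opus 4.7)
The plan is to prove the four properties by analyzing the two count trajectories side by side, using the observation recorded in the claim's setup: the number of entries in $\alg{Dict}$ right after $\ins{ins}_j$ is the same in the real and ideal executions for $j < i_\Y$ or $j > i_{\Y,D}$, while the real count exceeds the ideal count by exactly $1$ for $j \in [i_\Y, i_{\Y,D}]$. Since each valid insertion raises the count by exactly $1$, both $f_R$ and $f_I$ are strictly increasing on their finite parts.

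Properties 1--3 follow essentially directly from this. For property 1, if $f_R(n) < i_\Y$ then the two count trajectories coincide through $\ins{ins}_{f_R(n)}$, so the ideal execution first hits $n$ at the same insertion. For property 2, the ideal count is bounded above by the real count at every insertion, is strictly less than $n$ at each $j < f_R(n)$, and equals the real count (hence $n$) at $\ins{ins}_{f_R(n)}$, giving $f_I(n) = f_R(n)$. Property 3 is immediate from the fact that the ideal count is dominated by the real count everywhere.

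Property 4 is the main obstacle. My plan is to identify $n_0$ explicitly: let $C_{\text{before}}$ denote the maximum real-execution count attained at any insertion $j < i_\Y$. Since $f_R$ is strictly increasing and the real count never exceeds $C_{\text{before}}$ before $i_\Y$, the least $n$ with $f_R(n) \geq i_\Y$ is exactly $C_{\text{before}} + 1$, and in the non-vacuous case this also equals $n_0$. Consequently, for $n > n_0$ we have $n - 1 > C_{\text{before}}$, so the real (and hence ideal) count never reaches $n - 1$ at any insertion before $i_\Y$. Within the middle range $[i_\Y, f_R(n))$ the ideal count equals the real count minus $1$, which is at most $n - 2$ because the real count stays strictly below $n$ there; and at $j = f_R(n)$ the ideal count is exactly $n - 1$. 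Combining both ranges yields $f_I(n - 1) = f_R(n)$.

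The trickiest aspect is explaining why the boundary case $n = n_0$ must be excluded: there $C_{\text{before}} = n_0 - 1$, so the ideal execution already reaches count $n_0 - 1$ at some insertion before $i_\Y$, forcing $f_I(n_0 - 1) < i_\Y \leq f_R(n_0)$ and breaking the identity. I would flag in the writeup that this one excluded value, out of the $\secp$ possible choices of initial $thr$, is precisely what contributes the additive $1/\secp$ term in the overall bound of \cref{thm:dp}.
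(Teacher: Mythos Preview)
Your proposal is correct and follows essentially the same approach as the paper: both arguments hinge on the same two observations (counts agree outside $[i_\Y,i_{\Y,D}]$ and differ by exactly one inside) and the same case split, with the only cosmetic difference being that the paper handles property~4 by contradiction while you argue it directly via the explicit identification of $C_{\text{before}}$. Your added discussion of why the boundary value $n_0$ must be excluded and how it feeds into the $1/\secp$ term is helpful context that the paper defers to the subsequent case analysis.
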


    \begin{proofof}{\cref{claim:dp-3}}
        The first three statements follow immediately from the observations listed above the statement of the claim, and we prove the fourth now. Suppose $n$ is such that both $f_R(n)$ and $f_R(n-1)$ are in $[i_\Y,i_{\Y,D}]$. This implies that $f_R(n)$ is the $i$ such that the number of entries in $\alg{Dict}$ equals $n$ for the first time in the real execution. By observation (ii) above, the number of entries in $\alg{Dict}$ in the ideal execution at this point is $(n-1)$, and we claim that this is also the first time this happens (that is, $f_I(n-1) = f_R(n)$). If not, then either $f_I(n-1) < i_\Y$, or $f_I(n-1) \in [i_Y,f_R(n))$. In the former case, $f_R(n-1) = f_I(n-1)$, but we know this cannot happen as $f_R(n-1) \geq i_\Y$. In the latter case, there would be $n$ entries in $\alg{Dict}$ in the real execution at the point $f_I(n-1)$, implying that $f_R(n)$ is less than itself. Thus, we reach a contradiction in both cases, proving the claim.
    \end{proofof}
    
    Suppose the initial value of $thr$ was set to $n$. Note that $summary$ is computed using the contents of $\alg{Dict}$ in the real execution just after $\ins{ins}_{f_R(n)}$, and $summary$ in the ideal execution just after $\ins{ins}_{f_I(n)}$ (if these function values are not $\bot$). We look at the following exhaustive list of cases for $n$:
    \begin{enumerate}
      \item Either $f_R(n) < i_Y$, or $f_R(n) > i_{\Y,D}$. In these cases, by \cref{claim:dp-3}, $f_I(n) = f_R(n)$. And in these cases, when $f_R(n) \neq \bot$, the contents of $\alg{Dict}$ in the real execution after $\ins{ins}_{f_R(n)}$ is the same as the contents in the ideal execution after $\ins{ins}_{f_I(n)}$. Thus, in all these cases, the $summary$ computed in the real and ideal executions are distributed identically. Further, at the end of both the real and ideal executions, $thr$ and $count$ are set to be the number of entries still remaining in $\alg{Dict}$, which is again the same in both real and ideal executions. Thus, in this case, $vars_n^R$ and $vars_n^I$ are distributed identically.
      \item $f_R(n) = \bot$. By \cref{claim:dp-3}, $f_I(n)$ is also $\bot$. In both the real and ideal executions, $summary$ is never computed, and at the end $thr$ is still $n$ and $count$ is the number of entries in $\alg{Dict}$, which is the same in both executions. Thus, in this case too, $vars_n^R$ and $vars_n^I$ are distributed identically.
      \item $f_R(n) \in [i_\Y,i_{\Y,D}]$, and $n$ is not the least with this property. In this case, by \cref{claim:dp-3}, $f_R(n) = f_I(n-1) = i$, say. The contents of $\alg{Dict}$ after $\ins{ins}_i$ in the real and ideal executions differ by at most one entry -- that inserted by $\Y$. The $\eps$-differential privacy of $\alg{summarize}$ now guarantees that the distributions of $summary$ computed in the real and ideal executions have statistical distance at most $\eps$. Further, at the end of the execution, both $thr$ and $count$ are equal to the number of entries in $\alg{Dict}$, which is the same in both executions. Thus, the statistical distance between $vars_n^R$ and $vars_{n-1}^I$ is at most $\eps$.
      \item $f_R(n) \in [i_\Y,i_{\Y,D}]$, and $n$ is the least with this property. In this case there is no distribution in the ideal case that we can compare the distribution of $summary_n^R$ to, but there is just one such value of $n$.
    \end{enumerate}
    Thus, except for one value of $n$ that is chosen with probability at most $1/\secp$, each $vars_n^R$ is at most $\eps$-far from some unique $vars_{n'}^I$. Thus, the statistical distance between the distributions of $(thr,count,summary)$ in the real and ideal executions is at most $(\eps+1/\secp)$, conditioned on the bad event of a repeated $key$ not occuring. As this event happens with probability at most $\poly(\secp)/2^\secp$ as noted at the beginning of the proof, the statistical distance between the distribution of $(thr,count,summary)$ is at most $(\eps+1/\secp+\poly(\secp)/2^\secp)$, proving the claim.

\end{proofof}

\subsection{Deletion in Machine Learning}
\label{sec:proofs-ml}

\thmml*

Fix any environment $\Z$ and a deletion-requester $\Y$ that runs at most one execution of $\pi$. As the environment $\Z$ only receives randomly chosen $key$'s as messages, its view in both the real and ideal executions are identical.

\begin{claim}
    \label{claim:ml-1}
    $\view_\Z^R$ and $\view_\Z^I$ are distributed identically.
\end{claim}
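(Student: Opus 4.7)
The plan is to show that the view of $\Z$ is a deterministic function of its own randomness together with the responses it receives from $\ml$, and that the joint distribution of these responses is independent of whether $\Y$ is present or silent. I would begin by observing that the deletion-requester $\Y$ is not permitted to write on the incoming tape of $\Z$ in our model of execution, so the only messages appearing on $\Z$'s incoming tape are the responses from $\ml$ to protocol instances that $\Z$ itself initiates. Thus $\view_\Z$ is completely determined by $\Z$'s randomness tape together with the sequence of such responses.

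Next I would inspect what these responses actually consist of. In the deletion protocol $\pi_D$ the server sends no message back to the client at all, so a $\ins{delete}$ interaction contributes nothing to $\Z$'s view beyond what $\Z$ itself writes on its outgoing tape. In the insertion protocol $\pi$, the response is the string $key$ that $\ml$ samples uniformly at random from $\bset^\secp$; crucially, this sample is drawn independently of the current contents of $\dict$, of $model$, of $count$, of $thr$, and of $learnt$, and in particular independently of any prior activity of $\Y$.

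I would then conclude by induction on the sequence of activations of $\Z$. At each activation, $\Z$'s next outgoing message (or declaration) is a deterministic function of $\view_\Z$ up to that point, so if the prefixes of $\view_\Z^R$ and $\view_\Z^I$ are identically distributed then so is $\Z$'s next action. Conditioned on that action, any resulting response from $\ml$ is either absent (for $\pi_D$ and for non-protocol activations) or a freshly drawn uniform element of $\bset^\secp$ (for $\pi$), and this conditional distribution is the same in both executions. Chaining these identities yields $\view_\Z^R \equiv \view_\Z^I$ as distributions.

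The one thing worth flagging is that $\ml$'s internal state can genuinely differ between the two executions, for example when a $\ins{delete}$ by $\Z$ happens to specify the same $key$ that $\ml$ returned to $\Y$, or when $\Y$'s insertion causes $count$ to cross the threshold $thr$ earlier and triggers a call to $\alg{learn}$. None of this, however, is ever transmitted to $\Z$: the only externally observable behavior of $\ml$ toward $\Z$ under $(\pi,\pi_D)$ is the uniformly random $key$ returned on insertion, which is why these internal differences never surface in $\view_\Z$.
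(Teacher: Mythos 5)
Your proof is correct and follows essentially the same route as the paper: the paper's (one-line) argument, elaborated in its proof of the analogous \cref{claim:hist-ind-1}, likewise reduces to the observations that $\Y$ cannot write to $\Z$, that $\pi_D$ returns nothing, and that the only server responses $\Z$ ever sees are fresh uniform $key$'s independent of everything else, chained by induction over $\Z$'s interactions. Your closing remark --- that here, unlike the storage example, there is no lookup interface through which $\X$'s internal state could leak, so the distributions are exactly (not just statistically) identical --- is exactly the right point to flag.
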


Similar to the proof of \cref{thm:dp}, we show that, conditioned on the view of $\Z$ being fixed to any specific view in both the real and ideal executions, the distribution of the state of $\X$ is similar. That is, we show the following claim that, together with the above observations about the views and \cref{fact:sd-conditional}, proves \cref{thm:ml}.

\begin{claim}
    \label{claim:ml-2}
    Conditioned on $\view_\Z^R = \view_\Z^I = \view$ for some $\view\in\supp(\view_\Z^I)$, the statistical distance between the distributions of $\state_\ml^R$ and $\state_\ml^I$ is at most $(1/\secp + \poly(\secp)/2^{\secp})$.
\end{claim}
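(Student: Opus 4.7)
The plan is to mirror the proof of \cref{claim:dp-2}, with \cref{def:ml} playing the role there played by differential privacy. I would first absorb into the error the probability of the bad event that the random $key$ returned to $\Y$ during its single insertion happens to be reused by $\Z$ in an insertion or deletion request; since $key$ is drawn uniformly from $\bset^\secp$ and $\Z$ runs in $\poly(\secp)$ time, this event has probability at most $\poly(\secp)/2^{\secp}$. Conditioning on its complement, the fixed views of $\Z$ and $\Y$ together determine the entire sequence of valid insertions and deletions on $\dict$, and the only difference between the real and ideal sequences is the presence of $\Y$'s insertion $\ins{ins}_{i_\Y}$ and its corresponding deletion in the real execution.

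By the history-independence of $\dict$, once the sequence of operations is fixed, the conditional distribution of the memory realizing $\dict$ is determined entirely by its final content; since $\Y$ inserts and later deletes the same entry, this final content is identical in the real and ideal executions. Moreover, conditioned on the fixed views, the distribution of $\dict$ is independent of the tuple $(thr, count, learnt, model)$, so it suffices to bound the statistical distance between the distributions of this latter tuple.

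For this, I would define, as in the proof of \cref{claim:dp-2}, functions $f_R, f_I : [\secp, 2\secp] \to ([m+1]\cup\{\bot\})$ recording the insertion index at which $count$ first reaches the initial value of $thr$ (and hence $\alg{learn}$ is invoked) in the real and ideal executions respectively. The direct analog of \cref{claim:dp-3} should hold: if $f_R(n) < i_\Y$, or $f_R(n) > i_{\Y,D}$, or $f_R(n) = \bot$, then $f_I(n) = f_R(n)$; and if $f_R(n) \in [i_\Y, i_{\Y,D}]$ and $n$ is not the least such value, then $f_R(n) = f_I(n-1)$. In each case of the first type, the $\dict$ passed to $\alg{learn}$ is identical in the paired real and ideal executions, so $model$ is identically distributed; since insertions cease once $learnt$ is set and post-learning deletions decrement $thr$ and $count$ in lockstep, the entire final tuple coincides in distribution.

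The crucial case is the second, where in the real execution $\alg{learn}$ is called on a dataset $\dict^R$ containing $\Y$'s entry and, at the step when $\Y$ later requests deletion, $\alg{delete}(\dict^R, \alg{learn}(\dict^R), key_\Y)$ is invoked; while in the paired ideal execution with threshold $n-1$, $\alg{learn}$ is called on $\dict^R$ with $\Y$'s entry removed. Here \cref{def:ml} directly yields that these two models are identically distributed. A short induction extends this across subsequent deletions by $\Z$: each applies the same $\alg{delete}$ function (with fresh independent randomness) to a pair of inputs whose $model$ components are already identically distributed marginally, so the resulting distributions remain identical. The invariant $thr = count$ holding after learning ensures the remaining state variables match too. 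At most one value of $n$ is left unpaired (the smallest $n$ with $f_R(n) \in [i_\Y, i_{\Y,D}]$), contributing at most $1/\secp$ to the distance, and giving the claimed bound $(1/\secp + \poly(\secp)/2^{\secp})$. The main obstacle is the bookkeeping in the case analysis for $f_R, f_I$ and the verification of the iterated-$\alg{delete}$ coupling; both are straightforward extensions of the corresponding steps in the DP proof.
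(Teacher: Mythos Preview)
Your proposal is correct and follows essentially the same route as the paper's proof sketch: mirror the argument for \cref{claim:dp-2}, reuse the functions $f_R,f_I$ and the case analysis of \cref{claim:dp-3}, and in the one case that previously cost $\eps$ use the exact deletion guarantee of \cref{def:ml} to get identical model distributions, leaving only the single unpaired threshold value (contributing $1/\secp$) and the key-collision term. One small presentational point: in the crucial case there may be deletions by $\Z$ \emph{between} learning and $\Y$'s deletion, so the first call to $\alg{delete}$ on $\Y$'s key is not literally $\alg{delete}(\dict^R,\alg{learn}(\dict^R),key_\Y)$; the clean way to phrase your induction is the invariant that after any prefix of post-learning deletions the current $model$ is distributed as $\alg{learn}$ applied to the current $\dict$, which immediately gives equality of the final models since the final dictionaries coincide.
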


\begin{proofsketchof}{\cref{claim:ml-2}}
    The proof of \cref{claim:ml-2} follows along the same lines as that of \cref{claim:dp-2}, being identical all the way up to \cref{claim:dp-3} except using $model$ instead of $summary$. In the case analysis following the statement of \cref{claim:dp-3} there, the first, second and fourth cases hold here as is, and in the third case, though the $model$ computed initially is different, due to the perfect deletion operation of the algorithm $\alg{learn}$, the distribution of $model$ at the end of the execution is the same in the real and ideal executions. Thus, the error this time is only from the fourth case, and is $1/\secp$. Together with the error from the possible repetition of $\Y$'s $key$, this proves the claim.
\end{proofsketchof}

%%% Local Variables:
%%% mode: latex
%%% TeX-master: "main"
%%% End:
\fi

\end{document}